\newcommand{\definedas}{\overset{\underset{\Delta}{}}{=}}
\newcommand{\Pul}{P_{\text{UL}}}
\newcommand{\Pc}{P_{\text{C}}}
\newcommand{\hdl}{h_{\text{DL}}}
\newcommand{\hul}{h_{\text{UL}}}
\newcommand{\mul}{m_{\text{UL}}}
\newcommand{\Omegaul}{\Omega_{\text{UL}}}
\newcommand{\Omegadl}{\Omega_{\text{DL}}}
\newcommand{\Pdl}{P_{\text{DL}}}
\newcommand{\pr}{\mathbb{P}}
\newcommand{\dd}{\text{d}}
\newcommand{\E}{\mathbb{E}}
\newcommand{\e}{{\rm e}}
\newcommand{\J}{{\rm j}}
\newcommand{\V}[1]{\boldsymbol{#1}} 
\newcommand{\quotes}[1]{``#1"}
\def\Biggg#1{{\hbox{$\left#1\vbox to23.5pt{}\right.$}}} %
\theoremstyle{plain}
\newtheorem{corollary}{Corollary}
\newtheorem{theorem}{Theorem}
\newtheorem{proposition}{Proposition}
\theoremstyle{definition}
\theoremstyle{remark}
\newtheorem{remark}{Remark}
\begin{document}

\title{Performance Analysis of Near-Optimal Energy Buffer Aided Wireless Powered Communication}
\author{\IEEEauthorblockN{Rania Morsi, Diomidis S. Michalopoulos, and Robert Schober}
\thanks{This paper was presented in part at the IEEE Asilomar Conference on Signals, Systems and Computers, Pacific Grove, CA 2014 \cite{Morsi_Asilomar2014} and at the IEEE International Conference on Communications, London, UK 2015 \cite{Morsi_ICC2015}. }
\thanks{Rania Morsi and Robert Schober are with the Institute of Digital Communications, Friedrich-Alexander-University Erlangen-N\"urnberg (FAU), Germany (email: rania.morsi@fau.de, robert.schober@fau.de). Diomidis S. Michalopoulos is with Nokia Bell Labs, Munich, Germany (email: diomidis.michalopoulos@nokia.com).}}

\maketitle
\vspace{-1.7cm}
\begin{abstract}\vspace{-0.2cm}
In this paper, we consider a wireless powered communication system, where an energy harvesting (EH) node harvests energy from a radio frequency (RF) signal broadcasted by an access point (AP) in the downlink (DL). The node stores the harvested energy in an energy buffer and uses the stored energy to transmit data to the AP in the uplink (UL). We investigate two simple online transmission policies for the EH node, namely a best-effort policy and an on-off policy, which do not require  knowledge of the EH profile nor of the UL channel state information. In particular, for both policies, the EH node transmits in each time slot with a constant desired power if sufficient energy is available in its energy buffer. Otherwise, the node transmits with the maximum possible power in the best-effort policy and remains silent in the on-off policy in order to preserve its energy for future use. For both policies, we use the theory of discrete-time \emph{continuous-state} Markov chains to analyze the limiting distribution of the stored energy for finite- and infinite-size energy buffers. We provide this limiting distribution in closed form for a Nakagami-$m$ fading DL channel, i.e., for a Gamma distributed EH process and analyze the outage probability for a Nakagami-$m$ fading UL channel. The analytical results derived in this paper are not limited to EH via RF WPT but are applicable for any independent and identically distributed EH process, originating from e.g. solar and wind energy. Our results reveal that, for low-to-medium outage probabilities, the best-effort policy is superior to the on-off policy and the optimal constant UL transmit power of the EH node that minimizes the outage probability is always less than the average harvested power but increases with the capacity of the energy buffer. The opposite behaviour is observed for high outage probabilities, where turning off the transmission in case of insufficient stored energy results in an improved outage performance compared to always transmitting with best effort. Furthermore, we show that despite the low-complexity of the proposed online policies, their minimum outage probability is near-optimal and closely approaches the outage probability of the optimal offline power allocation policy. \end{abstract} %
\vspace{-0.7cm}
\section{Introduction}
The performance of battery-powered wireless communication networks, such as sensor networks, is limited by the lifetime of the network nodes. Periodic replacement of the nodes' batteries is costly and sometimes impossible when the sensor nodes are placed in a hazardous environment or embedded inside the human body. The lifetime bottleneck problem of energy-constrained wireless networks can be alleviated by allowing the nodes to harvest energy from renewable energy sources to facilitate the wireless information transfer (WIT) to their  designated receivers. However, opportunistic energy harvesting (EH) from conventional renewable energy sources such as solar and wind energy is in general intermittent,  uncontrollable, weather dependent, and not available indoors. In contrast, radio frequency (RF)-based wireless power transfer (WPT) is partially controllable and can be provided on demand to charge low-power devices, such as sensors \cite{Kansal_2007,Krikidis2014,Schober_SWPT_review_2015,WPC_survey_Zhang_2016}. In the following, we provide a brief literature review on  wireless powered communication (WPC) systems, where EH nodes are solely powered by RF energy to facilitate their WIT \cite{WPC_survey_Zhang_2016}. 
 \subsection{Literature Review}
A common feature of EH communication networks is the randomness of both the amount of harvested energy and the fading of the information channel. Therefore, one main objective of energy management polices for EH networks is to match the energy consumption profile of the EH node to the random energy generation profile of the energy source and to the random information channel \cite{Kansal_2007,Krikidis2014,Schober_SWPT_review_2015,WPC_survey_Zhang_2016,WPC_TDMA,Throughput_max_energy_saving_2014,Optimal_online_policies_Mitran_Journal_2014,Resource_allocation_WPC_storage_2016,outage_minimization_Rui_Zhang_2014}. 
For example, in \cite{WPC_TDMA}, a multiuser WPC system is considered, where the users' sum rate is maximized. In this system, the EH nodes perform short-term energy storage, where the harvested energy is \emph{fully} consumed for WIT on a slot by slot basis without buffering energy for use in future time slots\footnote{This transmission model is known in the literature as \quotes{harvest-then-transmit} \cite{WPC_TDMA} or \quotes{harvest-use} policy \cite{Optimal_HUS_2015}. We refer to this model as \quotes{buffer-less} to distinguish it from transmission policies that buffer energy for use in future time slots.}. However,  the performance can be improved by storing the harvested energy in an energy buffer and  optimizing the transmit power of the EH node  based on the stored energy and the quality of the information link  \cite{Throughput_max_energy_saving_2014,Optimal_online_policies_Mitran_Journal_2014,Resource_allocation_WPC_storage_2016,outage_minimization_Rui_Zhang_2014,
DTS_WPC_Energy_accumulation_2015,Optimal_HUS_2015}. For example, in \cite{Throughput_max_energy_saving_2014}, the authors extended the work in \cite{WPC_TDMA} by showing that buffering energy can improve the system throughput. In \cite{Optimal_online_policies_Mitran_Journal_2014}, a continuous-time multiple-access EH communication system is studied for finite and infinite energy storage and a Poisson distributed energy arrival process. The transmit power is optimized online as a function of the remaining battery energy to maximize the sum  throughput. 
In \cite{Resource_allocation_WPC_storage_2016}, time and energy resources are jointly optimized on a slot by slot basis for a multi-user WPC system with finite and infinite energy storage. In \cite{outage_minimization_Rui_Zhang_2014}, the outage probability is minimized for an EH communication  system with infinite-size energy buffers assuming causal/non-causal knowledge of the EH profile. In \cite{DTS_WPC_Energy_accumulation_2015}, the average throughput of an RF EH node with a finite-size energy buffer is analyzed using a discrete-state Markov chain. The EH node decides based on the channel state information (CSI) whether to transmit information  or to harvest energy.\vspace{-0.2cm}
\subsection{Motivation and Contribution}
Optimal offline transmission policies typically require non-causal knowledge of the energy and channel state at the EH node, whereas optimal online solutions are typically based on dynamic programming which entails a high computational complexity for a continuous energy state space and a long transmission horizon, see  \cite{Throughput_max_energy_saving_2014,outage_minimization_Rui_Zhang_2014}, and the references therein. Therefore, these optimal policies may not be feasible in practice. For example, typical EH wireless sensor networks are expected to comprise many small, inexpensive sensors with limited computational power and energy storage \cite{Krikidis2014,Schober_SWPT_review_2015,WPC_survey_Zhang_2016}. In such networks, even causal CSI may not be available at the EH nodes nor at the energy source. To the best of our knowledge, the performance of practical transmission policies that employ energy storage and do not require knowledge of the CSI nor the EH profile at the EH node and the energy source has not been studied in the literature, so far.

Motivated by these practical considerations, in this paper, we consider a WPC system, in which the EH node and the energy source do not have knowledge of the CSI nor of the EH profile. In particular, we assume that an access point (AP) transmits an RF energy signal with a constant power in the downlink (DL) and an EH node harvests the received RF energy, stores it in its energy buffer, and uses the stored energy to transmit data to the AP in the uplink (UL). We study two simple online transmission policies for the EH node, namely a best-effort and an on-off policy. Due to the lack of knowledge of the UL CSI and the EH profile at the EH node, we assume that for both transmission policies the EH node  transmits with constant power $M$ if sufficient energy is available in its energy buffer. If the stored energy is not sufficient to transmit with power $M$, then the node is said to be in the low-energy mode of operation. In this mode, the EH node transmits with the maximum possible power in the best-effort policy and remains silent in the on-off policy.  Hence, regarding the manner in which they deal with the case when the desired power is not available in the energy buffer, the best-effort and the on-off policies can be considered as the extremes among all policies that aim to transmit with a constant power.

The main contributions of this paper can be summarized as follows:
\begin{itemize}
\item We study the performance of the best-effort and the on-off transmission policies for a WPC system assuming no knowledge of the CSI and the EH profile at the EH node and the AP. Furthermore, our analytical results are not only valid for energy harvested via RF WPT, but also for any independent and identically distributed (i.i.d.) EH process, originating from e.g. solar, wind, thermal, and vibrational energy.
\item For both transmission policies, we model the stored energy by a discrete-time \emph{continuous-state} Markov chain and provide an integral equation for its limiting distribution for both infinite- and finite-size energy buffers and for \emph{any} i.i.d.  EH process. Our analysis takes into account the inefficiency of the power amplifier at the EH node, the constant power consumed by the EH node circuit during transmission, and the storage inefficiency of the energy buffer.
\item Unlike our preliminary work in \cite{Morsi_Asilomar2014} and  \cite{Morsi_ICC2015}, which considered Rayleigh fading DL and UL channels and obtained only approximate expressions for the outage probability of the UL channel, in this paper, we provide for both transmission policies the limiting energy distribution in closed form  for an i.i.d. Gamma distributed EH process, i.e., for a Nakagami-$m$ DL block fading channel. Furthermore, we provide \emph{exact} analytical expressions for the outage probability for a Nakagami-$m$ faded UL channel.  
Based on the derived analytical expressions for the outage probability, we can find the optimal constant transmit power $M$ that minimizes the outage probability with a simple one-dimensional search.
\item We show that, interestingly, the minimum outage probability of the two proposed online policies is near-optimal and  closely approaches the outage probability of the optimal offline power allocation policy in \cite{outage_minimization_Rui_Zhang_2014}. The optimal offline policy in  \cite{outage_minimization_Rui_Zhang_2014} requires non-causal knowledge of the EH profile. The optimal online algorithm in \cite{outage_minimization_Rui_Zhang_2014} requires causal EH profile knowledge and is based on dynamic programming which entails a high computational complexity for a continuous energy state space and a long transmission horizon. On the other hand, in each time slot, the proposed simple online transmission policies only require knowledge of whether the desired constant transmit power is available in the energy buffer. For given statistical properties of the channel, the proposed policies need to perform only a \emph{single} one-dimensional search to determine the optimal constant transmit power $M$.
\item  Our results reveal that, for low-to-medium outage probabilities,  the best-effort policy outperforms the on-off policy and the optimal UL transmit power $M$ that minimizes the outage probability is always less than the average harvested power but increases with the capacity of the energy buffer. On the contrary, for high outage probabilities, the on-off policy has a superior outage performance compared to the best-effort policy and the optimal UL transmit power is larger than the average harvested power but decreases with the capacity of the energy buffer.
\end{itemize}
The remainder of the paper is organized as follows. Section \ref{s:System_model} presents the overall system model. In Sections \ref{s:Infinite_buffer} and \ref{s:Finite_buffer}, we study the limiting distributions of the stored energy for infinite- and finite-capacity energy buffers, respectively. In Section \ref{s:Outage_analysis}, we analyze the outage probability of the communication link, when both UL and DL channels are Nakagami-$m$ faded. Numerical and simulation results are provided in Section \ref{s:Simulations}. Finally, Section \ref{s:conclusion} concludes the paper. 
\vspace{0.2cm}\emph{Notation:} We use the following notations and functions throughout this paper. $\E[\cdot]$ denotes expectation. $\{X\}$ refers to a random sequence $X$. $\mathbb{P}(\cdot)$ denotes the probability of an event.  $\mathds{1}_A=1$ if event $A$ is true and $\mathds{1}_A=0$ otherwise. $[x]^+=\max(x,0)$.  $W_0(z)$ is the principle branch of the Lambert W function which is the inverse function of $z=w\e^{w}$. $\J=\sqrt{-1}$ is the imaginary unit. $x^*$, $\Re\{x\}$, and $\Im\{x\}$ denote respectively the complex conjugate, the real part, and the imaginary part of $x$. $\V{1}_m$ is the all-ones column vector of length $m$. $\V{x}^T$ denotes the transpose of vector $\V{x}$. For a positive integer $m$, the Gamma function is defined as $\Gamma(m)=(m-1)!$, the upper incomplete Gamma function is given by $\Gamma(m,x)=(m-1)!\e^{-x}\sum_{n=0}^{m-1}\frac{x^n}{n!}$, and the lower incomplete Gamma function is given by $\gamma(m,x)=\Gamma(m)-\Gamma(m,x)=(m-1)!\left(1-\e^{-x}\sum_{n=0}^{m-1}\frac{x^n}{n!}\right)$. $\definedas$ stands for ``is defined as". Finally, 
$\mathbb{Z}^+$ denotes the set of positive integers.
\vspace{-0.1cm}
\section{System Model}
\label{s:System_model}
We consider a time-slotted point-to-point WPC system with an AP and an EH node, cf. Fig. \ref{fig:system_model}. The AP is connected to a fixed power supply, whereas the EH node does not have a fixed power source and needs to harvest energy to be able to communicate with the AP. In particular, the EH node captures the RF energy transferred by the AP in the DL, stores it in its energy buffer, and uses the stored energy to transmit its backlogged data to the AP in the UL. The considered DL WPT and UL WIT system employs frequency-division duplex (FDD), where WPT and WIT take place concurrently in two different dedicated frequency bands, i.e., the EH node operates in a half-duplex out-of-band transmission mode. The AP is assumed to have no knowledge of the DL CSI and the EH node is assumed to have no knowledge of the UL CSI nor of the amount of harvested energy. It is only assumed that the EH node knows whether the desired constant transmit power is available in the energy buffer. In the following, we describe the communication, channel, EH, and energy storage models as well as the considered system imperfections.
\begin{figure}[!t] \vspace{-0.3cm}
\centering
\begin{minipage}[b]{0.47\textwidth}
\centering
\includegraphics[width=1.\textwidth, trim= {0cm 0 0cm 0cm},clip]{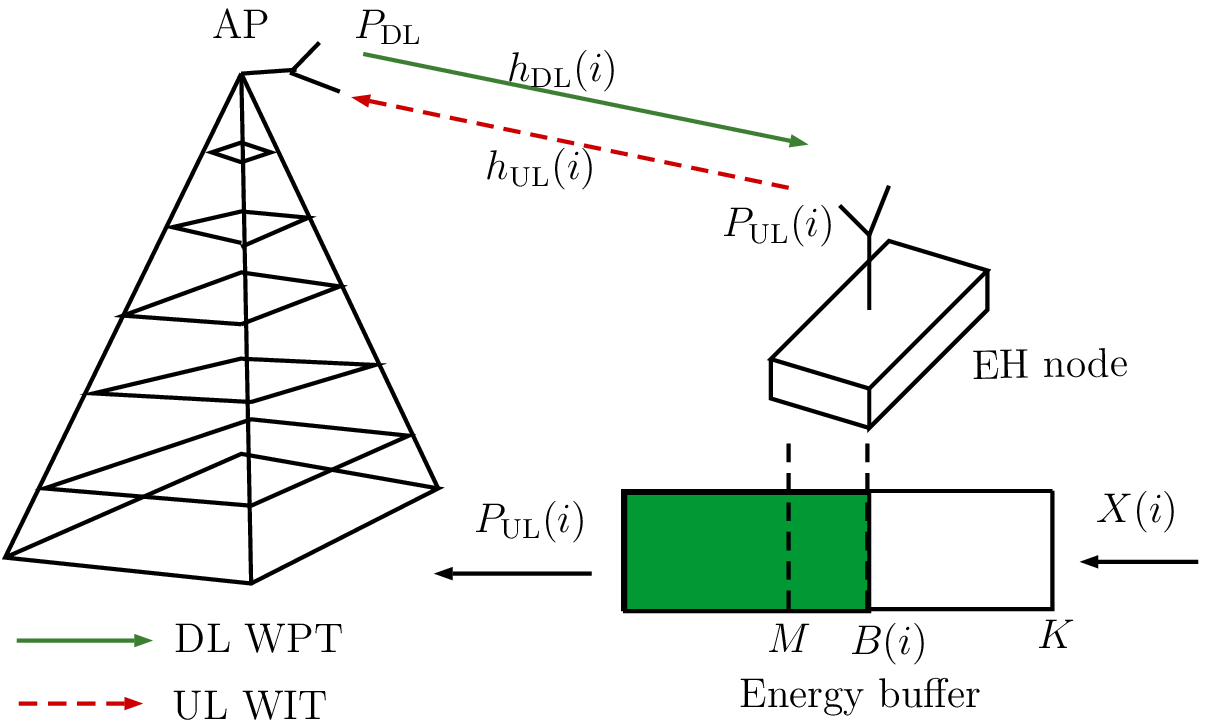}
\caption{A WPC system with DL WPT and UL WIT with energy storage at the EH node.}
\label{fig:system_model}
\end{minipage}
\hspace{0.4cm}
\begin{minipage}[b]{.47\textwidth}
\centering\vspace{-0.7cm}
\includegraphics[width=1\textwidth, trim= {0cm 0 0cm 0.7cm},clip]{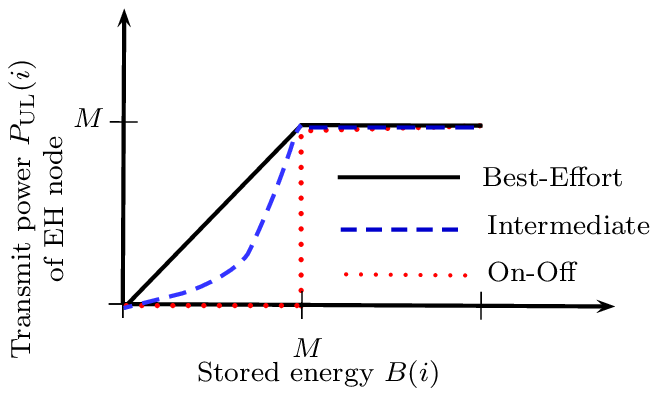}
 \caption{The transmit power of the EH node versus the energy stored in its buffer for the best-effort policy, the on-off policy, and an intermediate transmission policy.}
\label{fig:OO_BE_motivation}
\end{minipage}\vspace{-0.5cm}
\end{figure}
\subsection{Communication Model}
\clearpage
We consider a time-slotted system, with a unit-length time slot. Hence, we use the terms energy and power interchangeably. In time slot $i$, defined as the time interval $[i,i+1)$, the EH node transmits data to the AP with power $ \Pul(i)$. Since the EH node does not know the UL CSI, we are interested in studying transmission policies for which the EH node transmits with a constant power, denoted by      $M$, whenever sufficient energy is available in its energy buffer, cf. Fig. \ref{fig:OO_BE_motivation}. If the amount of energy stored is less than $M$, the transmit power of the EH node is a function of the stored energy. In this paper, we study two extreme cases among all transmission policies that aim to transmit with a constant power, namely the best-effort and the on-off policies illustrated in Fig. \ref{fig:OO_BE_motivation}. In particular, for the best-effort policy, the EH node transmits with a constant  power, $M$, or with the maximum possible power if there is not enough energy  available in its energy buffer. Hence, for the best-effort policy, the EH node transmits with UL power\vspace{-0.2cm}
\begin{equation}
\Pul(i)\Big|_{\text{best-effort}}=\min(B(i),M),
\label{eq:Pul_policy_BE}
\end{equation}\vspace{-0.1cm}where $B(i)$ is the energy stored in the energy buffer at the beginning of time slot $i$. For the on-off policy, the EH node also transmits with constant power, $M$, if it is available in its energy buffer, but remains silent otherwise. Hence, the UL transmit power for the on-off policy is\vspace{-0.1cm}
\begin{equation}
\Pul(i)\Big|_{\text{on-off}}=M\mathds{1}_{B(i)>M}=\begin{cases}\\[-6ex]0 & B(i)\leq M \\[-1.5ex] M & B(i) > M\end{cases}.
\label{eq:Pul_policy_OO}
\end{equation}\vspace{-0.1cm}
We note that the on-off policy is motivated by the fact that a constant transmit power allows the use of a power-efficient power amplifier at the EH node. Furthermore, in the low-energy mode of operation, i.e., when $B(i)< M$, cf. Fig. \ref{fig:OO_BE_motivation}, allowing the EH node to transmit with a low power, as in the best-effort policy, may result in a high probability of transmission outage in the UL WIT. Hence, saving this energy for transmission with a higher power in a future time slot, as in the on-off policy, may lead to an improved outage performance. We note that, any \quotes{intermediate} policy, whose transmit power vs. stored energy curve lies between those of the best-effort and on-off  policies, cf. Fig. \ref{fig:OO_BE_motivation}, is expected to result in a performance that lies between the performances of these two policies.
\vspace{-0.4cm}
\subsection{Channel Model} 
\label{ss:Channel_model}\vspace{-0.2cm}
Both the UL and DL channels are assumed to be flat block fading, i.e., the channels remain constant during one time slot and change \emph{independently} from one slot to the next. Hence, the UL and DL channel power gain sequences $\{\hul(i)\}$ and $\{\hdl(i)\}$ are mutually independent i.i.d. random processes. We assume that $\{\hul(i)\}$ and $\{\hdl(i)\}$ capture the joint effect of the large scale path loss
and the small-scale multipath fading. Furthermore, both UL and DL channels are assumed to be stationary and ergodic processes with means $\Omegaul=\E[\hul(i)]$ and $\Omegadl=\E[\hdl(i)]$, respectively. In this paper, we analyze the WPC system for Nakagami-$m$ fading UL and DL channels. The Nakagami-$m$ fading model is adopted since it is known to provide a good fit to outdoor and indoor multipath propagation and it is a general fading model that reduces to Rayleigh fading for $m=1$ and can approximate Ricean fading \cite{Statistical_channel_model}, \cite[Eq. (2.26)]{Digital_comm_fading_Alouini2005}. We assume that the DL CSI and the UL CSI are unknown to the AP and the EH node, respectively, and that additive white Gaussian noise of power $\sigma^2$ impairs the received signal at the AP.\vspace{-0.2cm}
\subsection{EH Model}
\label{ss:EH_model}
In time slot $i$, the EH node collects $X(i)$ units of RF energy  and stores it in its energy buffer. We assume that the RF energy broadcasted by the AP is the only source of energy for the EH node. We further assume that the energy replenished in a time slot may only be used in future time slots. We adopt the EH receiver model in \cite{WIPT_Architecture_Rui_Zhang_2012}, where the harvested energy in time slot $i$ is given by $X(i)=\eta\Pdl\hdl(i)$, where $0<\eta< 1$ is the RF-to-direct current (DC) conversion efficiency of the EH module and $\Pdl$ is the constant DL transmit power of the AP. Since the DL channel $\{\hdl(i)\}$ is i.i.d., the energy replenishment sequence $\{X(i)\}$ is also an i.i.d. stationary and ergodic process with mean $\bar{X}=\eta\Pdl\Omegadl$, probability density function (pdf) $f(x)$, cumulative distribution function (cdf) $F(x)=\mathbb{P}(X(i)<x)$ and complementary cumulative distribution function (ccdf) $\bar{F}(x)=1-F(x)$.

\begin{remark}We note that although we focus on energy harvested via RF WPT, all analytical results in this paper are applicable to any i.i.d. EH process $\{X(i)\}$ with pdf $f(x)$ and cdf $F(x)$.
\label{rem:validity_any_iid_EH_model}
\end{remark}
\vspace{-0.5cm}
\subsection{Energy Storage Model}
\label{ss:storage_model}
The harvested energy $X(i)$ is stored in an energy buffer, such as a rechargeable battery and/or a supercapacitor \cite{Culler_2005},\cite{supercapacitor_on_chip}, with storage capacity $K$. The dynamics of the storage process $\{B(i)\}$ are given by the storage equation\vspace{-0.3cm}
\begin{equation}
B(i+1)=\min\left(B(i)-\Pul(i)+X(i),K\right)
\label{eq:general_storage_equation}
\end{equation}
which reduces to\vspace{-0.3cm}
\begin{equation}
B(i+1)\Big|_{\text{best-effort}}=\min\left([B(i)-M]^++X(i),K\right)
\label{eq:general_storage_equation_BE}
\end{equation}
for the best-effort policy and to\vspace{-0.3cm}
\begin{equation}
B(i+1)\Big|_{\text{on-off}}=\min\left(B(i)-M\mathds{1}_{B(i)> M}+X(i),K\right) 
\label{eq:general_storage_equation_OO}
\end{equation}
for the on-off policy. The storage processes  in (\ref{eq:general_storage_equation_BE}) and (\ref{eq:general_storage_equation_OO}) are discrete-time Markov chains on a \emph{continuous} state space $S$, where  $S=[0,K]$  and $S=[0,\infty)$ for a finite- and an infinite-size energy buffer, respectively.
\begin{remark}
Interestingly, our storage model for the best-effort policy in (\ref{eq:general_storage_equation_BE}) is similar (but not identical) to the water dam model proposed by Moran in 1956 in \cite{Moran_1956}. In Moran's model, every year $X(i)$ units of water flow into a dam of capacity $K$ and a constant amount of water $M$ is released just before the following year. Moran studies the amount of water $\{Z(i)\}$ stored in the dam just after release, which is modeled by the storage equation\footnote{Our storage equation for $B(i)$ given in (4) is different from Moran's storage equation for $Z(i)$ because we study the stored quantity after the harvested energy $X(i)$ has been added to the energy buffer, whereas Moran studies the stored quantity after the water has been released from the water buffer.} $Z(i+1)= [\min(Z(i)+X(i),K)-M]^+$, which for an infinite-capacity dam reduces to \small $Z(i+1)=\begin{cases} 0 & Z(i)+X(i) \leq M \\ Z(i)+X(i)-M & Z(i)+X(i)> M \end{cases}$. \normalsize
From \cite{Infinite_dam_Gani_Prabhu_1957}, the stationary distribution\footnote{If a Markov chain is characterized by a distribution $\pi$ in a certain time slot and  this distribution is unchanged for all future time slots, then $\pi$ is said to be a stationary distribution of the Markov chain \cite{Meyn_Tweedie}.} of $\{Z(i)\}$ (if it exists) is obtained from that of another process $\{U(i)\}$ defined as $U(i)=Z(i)+X(i)$, which is in fact identical in distribution to our process $\{B(i)\}$. This is because, if we add $X(i+1)$ to $Z(i+1)$, we get \small $U(i+1)=\begin{cases} X(i+1) & U(i) \leq M \\ U(i)-M+X(i+1) & U(i)> M \end{cases},$ \normalsize
which is identical in distribution to $\{B(i)\}$ in (\ref{eq:general_storage_equation_BE}) for $K\to\infty$. Hence, the distribution of $\{U(i)\}$ in Moran's dam model is identical to the distribution of $\{B(i)\}$ in our energy buffer model. Similarly, for a finite storage capacity, $\{B(i)\}$ is equivalent to $\{\min(U(i),K)\}$.
\label{remark:equivalence_to_Morans_Model}
\end{remark}
\begin{remark}
The storage model for the on-off policy in (\ref{eq:general_storage_equation_OO}) is similar to the double service rate process  defined by Gaver and Miller in 1962 in \cite[Section 3]{Gaver_Miller_1962}. In their process, one of two service rates is used to serve customers in a queue depending on the queue length (or equivalently the total waiting time). If the waiting time exceeds a prescribed value, say $M$, customers are served with a fast service rate $r_2>r_1$, otherwise, customers are served with a low service rate $r_1$. As explained in \cite{Gaver_Miller_1962}, this model may also describe a dam storage system, where water is released at an increased rate if the dam content exceeds a prescribed value $M$. Here, we use the model to describe an energy storage system with an energy release rate of $r_2=M$ if the energy stored is greater than $M$, and an energy release rate of $r_1=0$, otherwise.
However, we note a minor difference between this model and our on-off model. In Gaver's and Miller's model, the release rate may change within one time slot, namely if the waiting time drops below $M$ in the middle of the time slot, the rate switches from $r_2$ to $r_1$ within the same slot. In our model, however, if the energy stored at the beginning of the time slot $B(i)$ is larger than $M$, the transmit power remains $M$ throughout the whole slot. 
\label{remark:equivalence_onoff_to_Millers_model}
\end{remark}
\begin{remark}
Having related our energy storage models to dam and queuing models, we note that physically distinct Markov models may lead to the same \emph{update} equations, such as those in (\ref{eq:general_storage_equation_BE}) and (\ref{eq:general_storage_equation_OO}) and therefore to the same limiting distributions\footnote{A limiting distribution of a Markov chain is a stationary distribution that the chain asymptotically converges to starting from any initial distribution \cite{Meyn_Tweedie}.}. More specifically, models with equal-length time slots (such as our energy storage model and Moran's dam model) and models with random-length time slots (such as queuing models) may be mathematically equivalent. Moreover, a model where the input/output flow occurs instantaneously may be mathematically identical to a model whose input/output flow has a steady rate during the whole time slot. In other words, two models having the same update equations are mathematically equivalent regardless of how or when the inputs and outputs occur. A detailed discussion on these equivalences is given by Gani in \cite[pp.199, 200]{review_paper_Gani_1957}.
\label{remark:equivalent_models}
\end{remark}
\begin{figure}[!t] 
\vspace{-0.85cm}
 \subfloat[Best-effort policy]{\label{subfig:BE_samplepath}\includegraphics[width=0.495\textwidth]{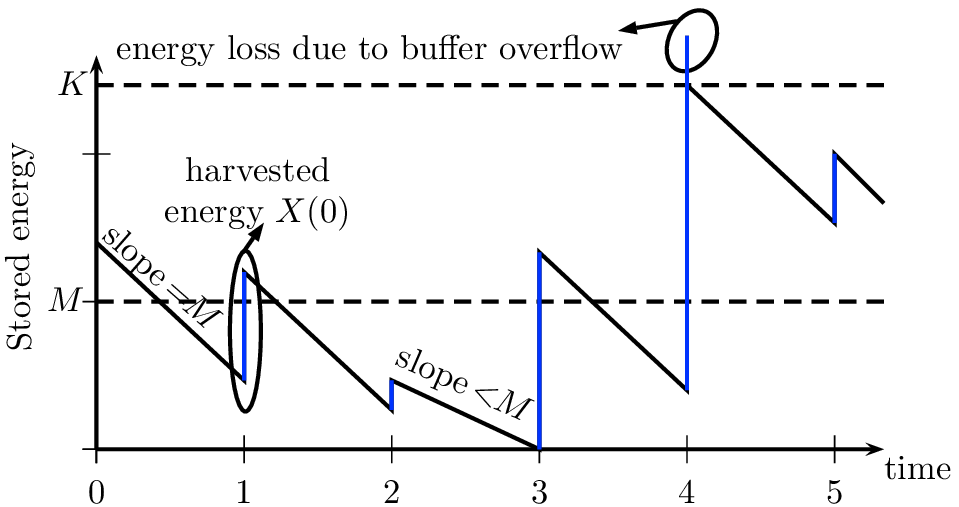}}
 \subfloat[On-off policy]{\label{subfig:OO_samplepath}\includegraphics[width=0.495\textwidth]{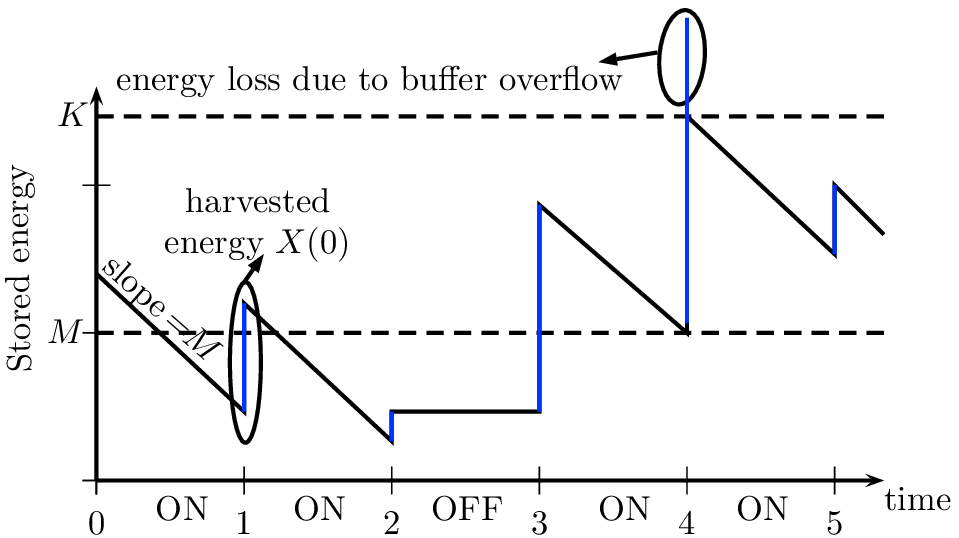}}
\caption{Sample paths for the best-effort and on-off policies for a finite-size energy buffer for the same given harvested energy sample sequence.}
\label{fig:BE_OO_samplepath}\vspace{-0.8cm}
\end{figure}

In Fig. \ref{fig:BE_OO_samplepath}, we show possible sample paths for the best-effort and the on-off policies for a finite-size buffer. Since the harvested energy $X(i)$ may only be used in future time slots, cf. Section \ref{ss:EH_model}, we add it --for illustration purposes-- instantaneously at the end of the time slot (shown as a jump in the sample path) instead of adding it with a steady rate throughout the time slot. This, however, has no effect on the storage model, cf. Remark \ref{remark:equivalent_models}. Fig. \ref{fig:BE_OO_samplepath} shows that if the stored energy at the beginning of a time slot $B(i)$ is larger than $M$, then  for both policies, the EH node transmits with power $M$, i.e., the energy is drained from the buffer at a rate of $M$. Otherwise, the energy is drained at a rate of  $B(i)<M$ for the best-effort policy or zero for the on-off policy. Furthermore, whenever the harvested energy causes the buffer to overflow, the excess energy is lost. Since the on-off policy saves energy in the low-energy mode of operation, an energy buffer overflow is more likely to occur compared to the best-effort policy for  finite-size energy buffers.
\vspace{-0.3cm}
\subsection{Consideration of Imperfections}
\label{s:imperfections}
We consider non-idealities of the power amplifier and the energy buffer. In particular, we take into account the following imperfections: (a) To produce an RF power of $\Pul$, the power amplifier of the EH node consumes a total power of $\rho\Pul$, where $\rho> 1$ is the power amplifier inefficiency. (b) The EH node circuit consumes a constant power of $\Pc$ during transmission. (c) The energy buffer is characterized by a storage efficiency $0<\beta<1$, i.e., if $X$ amount of energy is applied at the input of the buffer, only an amount of $\beta X$ is stored. 
Given that the desired UL transmit power is $\Pul(i)=M$, then if $B(i)<(\Pc+\rho M)$, the UL transmit power for the best-effort policy is reduced to satisfy $B(i)=\Pc+\rho\Pul(i)$, i.e., $\Pul(i)=[B(i)-\Pc]^+/\rho$, and for the on-off policy, the EH node remains silent. Hence, the UL transmit powers of the EH node for the best-effort and the on-off policies reduce respectively to 
\begin{equation}
\Pul(i)\Big|_{\text{best-effort}}\!=\begin{cases} \frac{[B(i)-\Pc]^+}{\rho} & B(i)\leq \tilde{M} \\ M & B(i) > \tilde{M}\end{cases},\quad{\rm  and } \quad \Pul(i)\Big|_{\text{on-off}}\!=\begin{cases} 0 & B(i)\leq \tilde{M} \\ M & B(i) > \tilde{M}\end{cases},
\label{eq:Puwith_imperfections}
\end{equation} where $\tilde{M}\!=\!\Pc+\rho M$.  Hence, the corresponding storage equations for the best-effort and the on-off policies reduce respectively to $
B(i+1)\Big|_{\text{best-effort}}=\min\left([B(i)\,-\,(\Pc+\rho M)]^++\beta X(i),K\right)$ and $B(i+1)\Big|_{\text{on-off}}=\min\left(B(i)\!-\!(\Pc+\rho M)\mathds{1}_{B(i)>(\Pc+\rho M)}+\beta X(i),K\right)$, which are identical to (\ref{eq:general_storage_equation_BE}) and  (\ref{eq:general_storage_equation_OO}), respectively,  after replacing $M$ by $\tilde{M}\!=\!\Pc+\rho M$ and $f(x)$ by $\tilde{f}(x)\!=\!\frac{1}{\beta}f\left(\frac{x}{\beta}\right)$. Thus, the average harvested energy considering storage inefficiency reduces to  $\widetilde{\bar{X}}=\beta\eta\Pdl \Omegadl$, i.e., $\beta$ has the effect of scaling the RF-to-DC conversion efficiency. In the following, we perform the analysis for an ideal system (i.e., $\Pc=0$, $\rho\!=\!1$ and $\beta\!=\!1$). For a non-ideal system, the results derived in Sections \ref{s:Infinite_buffer} and \ref{s:Finite_buffer} hold with the aforementioned substitutions.
\begin{remark}
We note that if a transmission policy is based on the amount of the stored energy, such as in (\ref{eq:Pul_policy_BE}) and (\ref{eq:Pul_policy_OO}), then the limiting distribution of the stored energy is essential for deriving almost any performance metric of the system. To this end, for a given transmission policy, the limiting energy distribution can be obtained in two steps. First, an integral equation for the limiting distribution can be formulated for a general i.i.d. EH process. Second, given the statistics of the EH process, the limiting energy distribution can be obtained by solving these integral equations. In this manner, we obtain in Sections \ref{s:Infinite_buffer} and \ref{s:Finite_buffer} the limiting energy distributions for the best-effort and the on-off transmission policies for infinite and finite-size energy buffers, respectively, which are then used in Section \ref{s:Outage_analysis} to analyze the outage performance of the communication link.
\end{remark}
\section{Infinite-Capacity Energy Buffer}
\label{s:Infinite_buffer}
Studying an infinite-size energy buffer is of interest since the average  harvested energy can be several orders of magnitude lower than the capacity of commonly-used energy buffers, e.g. rechargeable batteries.
\vspace{-0.2cm}\subsection{Existence of a Limiting Energy Distribution}
\begin{theorem}\normalfont
For the storage processes $\{B(i)\}$ in (\ref{eq:general_storage_equation_BE}) and (\ref{eq:general_storage_equation_OO}) with an infinite-size energy buffer, if the maximum output energy is less than the average harvested input energy, i.e., if $M<\bar{X}$, then $\{B(i)\}$ does not possess a stationary distribution. Furthermore, after a finite number of time slots, $\Pul(i)=M$ holds almost surely (a.s.).
\label{theo:no_stationary_dist}
\end{theorem}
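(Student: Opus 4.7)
My plan is to exploit a pathwise lower bound that is common to both storage equations in the infinite-buffer setting, where the outer $\min(\cdot,K)$ in (\ref{eq:general_storage_equation_BE}) and (\ref{eq:general_storage_equation_OO}) is inactive. Since $[x]^+\ge x$ for all $x$ and $M\mathds{1}_{B(i)>M}\le M$, both update rules imply
\begin{equation*}
B(i+1) \;\ge\; B(i) + X(i) - M,
\end{equation*}
which, iterated from slot $0$, yields $B(i) \ge B(0) + \sum_{j=0}^{i-1}\bigl(X(j)-M\bigr)$. Because $\{X(i)\}$ is i.i.d.\ with finite mean $\bar X$ (cf.\ Section \ref{ss:EH_model}), the strong law of large numbers gives $\tfrac{1}{i}\sum_{j=0}^{i-1}(X(j)-M)\to \bar X - M$ a.s., and under the hypothesis $M<\bar X$ this limit is strictly positive. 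Hence $B(i)\to\infty$ a.s., irrespective of the initial state $B(0)$.

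From this divergence the second claim of the theorem follows at once: on any sample path with $B(i)\to\infty$ there exists a (random) finite $i_0$ such that $B(i)>M$ for all $i\ge i_0$, and then the transmit-power rules (\ref{eq:Pul_policy_BE}) and (\ref{eq:Pul_policy_OO}) both evaluate to $\Pul(i)=M$. To rule out a stationary distribution, I would argue by contradiction: suppose a stationary probability measure $\pi$ on $[0,\infty)$ existed. Starting the chain from $B(0)\sim\pi$ would keep $B(i)\sim\pi$ for every $i$, so $\pi([0,K])=\pr(B(i)\le K)$ for every $K$ and every $i$. Letting $i\to\infty$, the a.s.\ divergence $B(i)\to\infty$ forces $\pr(B(i)\le K)\to 0$, whence $\pi([0,K])=0$ for every $K\ge 0$, contradicting that $\pi$ is a probability measure.

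The only delicate point is checking that the SLLN applies, i.e., $\bar X<\infty$; this holds because the Nakagami-$m$ DL gain $\hdl(i)$ has finite mean $\Omegadl$. No heavier Markov-chain machinery (Harris recurrence, Foster--Lyapunov drift, $\psi$-irreducibility arguments, etc.) is needed, since the single pathwise bound already erases the dependence on the initial condition and drives $B(i)$ to infinity uniformly in distribution. The main ``obstacle'', such as it is, is recognising that the trivial inequalities $[x]^+\ge x$ and $\mathds{1}_{B(i)>M}\le 1$ collapse both policies onto the same random-walk lower bound with positive drift, so that the continuous-state, policy-dependent structure never actually has to be analyzed.
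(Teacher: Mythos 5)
Your proposal is correct, and it takes a genuinely different route from the paper. The paper's proof works in expectation: it takes expectations of the storage equation to obtain $\E[B(i+1)]-\E[B(i)]=\bar{X}-\E[\Pul(i)]\geq \bar{X}-M>0$, concludes that the mean of $\{B(i)\}$ strictly increases and hence no stationary distribution exists, and then argues from $\lim_{i\to\infty}\E[B(i)]=\infty$ that $B(i)>M$ a.s.\ after some finite slot $j$ (with a short contradiction argument to show $j$ is finite). You instead establish the pathwise random-walk lower bound $B(i)\geq B(0)+\sum_{j=0}^{i-1}(X(j)-M)$ and invoke the strong law of large numbers to get $B(i)\to\infty$ a.s., from which both claims follow directly. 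Your approach buys rigor at two points where the paper is loose: (i) the paper's ``mean increases, hence no stationary distribution'' argument implicitly assumes a putative stationary law has finite mean (otherwise the telescoped expectation identity is $\infty-\infty$), whereas your contradiction via $\pi([0,K])=\lim_i \pr(B(i)\leq K)=0$ needs no such assumption; and (ii) divergence of $\E[B(i)]$ does not by itself imply almost-sure eventual exceedance of $M$, whereas your a.s.\ divergence does. What the paper's expectation-based route buys in exchange is brevity and the reusable identity $\E[B(i+1)]-\E[B(i)]=\bar{X}-\E[\Pul(i)]$, which is invoked again in the proof of Theorem \ref{theo:stationary_dist_infinite} to conclude $\E[\Pul(i)]=\bar{X}$. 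Your one stated caveat, finiteness of $\bar{X}$, is indeed all that the SLLN requires and holds for the Gamma-distributed EH process considered here.
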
\vspace{-0.4cm}
\ifARXIV
\begin{proof} The proof is provided  in \cite[Appendix A]{Morsi_ICC2015} and in Appendix \ref{app:no_stationary_dist}. \end{proof}
\else
\begin{proof} The proof is provided in \cite[Appendix A]{Morsi_ICC2015} and \cite[Appendix A]{Morsi_storage_arxiv}\footnote{Due to space limitations, proofs already given in \cite{Morsi_Asilomar2014} and  \cite{Morsi_ICC2015} are not repeated in this manuscript. In addition, we refer the reader to \cite{Morsi_storage_arxiv}, which is identical to the current manuscript but includes all proofs from \cite{Morsi_Asilomar2014} and  \cite{Morsi_ICC2015} as well as detailed versions of new proofs.}. 
\end{proof}
\fi\vspace{-0.3cm}
\begin{theorem}\normalfont
For the storage processes $\{B(i)\}$ in (\ref{eq:general_storage_equation_BE}) and (\ref{eq:general_storage_equation_OO}) with an infinite-size energy buffer, if the maximum output energy is larger than the average harvested input energy, i.e., if $M>\bar{X}$, then $\{B(i)\}$ is a stationary and ergodic random process which possesses a unique stationary distribution $\pi$ that is absolutely continuous on $(0,\infty)$. Furthermore, the process converges in total variation to the limiting distribution $\pi$ from any initial distribution. In addition, $\E[\Pul(i)]=\bar{X}$ holds.
\label{theo:stationary_dist_infinite}
\end{theorem}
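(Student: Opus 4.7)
The plan is to invoke the theory of Harris chains on general state spaces (à la Meyn and Tweedie) and verify a Foster--Lyapunov drift condition together with $\phi$-irreducibility and aperiodicity. The drift argument is the crux. Taking the trial function $V(b)=b$, for the best-effort policy the storage equation (\ref{eq:general_storage_equation_BE}) with $K=\infty$ gives
\begin{equation}
\E[V(B(i+1))-V(B(i))\,|\,B(i)=b] = -\min(b,M)+\bar X,
\end{equation}
while for the on-off policy (\ref{eq:general_storage_equation_OO}) gives $-M\mathds{1}_{b>M}+\bar X$. In both cases, whenever $b$ exceeds the compact set $C=[0,M]$ (or $b>M$ for on-off), the one-step drift equals $-M+\bar X<0$ by the hypothesis $M>\bar X$. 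This is exactly the Foster--Lyapunov criterion (V4 in Meyn--Tweedie) outside a small set, which yields positive recurrence provided the chain is $\phi$-irreducible and aperiodic.

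Next I would establish $\phi$-irreducibility and the small-set property of $C$. The assumed pdf $f(x)$ of the arrivals is positive on $(0,\infty)$, so the one-step transition kernel $P(b,\cdot)$ has an absolutely continuous component: from any $b\in C$ the next state equals $X(i)$ (or $[b-M]^++X(i)$ in the best-effort case) and therefore has a density bounded below by a positive multiple of $f$ on any compact subset of $(0,\infty)$. This simultaneously gives a minorisation $P(b,A)\geq \varepsilon\,\nu(A)$ for $b\in C$ with $\nu$ the normalised Lebesgue restriction of $f$ to a bounded interval, proving that $C$ is a small set, that the chain is $\psi$-irreducible with $\psi\propto$ Lebesgue, and, since $\nu(C)>0$, that it is aperiodic.

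Combining the drift condition with the small-set property, the Meyn--Tweedie theorem (e.g.\ Thm.\ 13.0.1 and Thm.\ 14.0.1 in \emph{Markov Chains and Stochastic Stability}) gives simultaneously: (i) existence and uniqueness of an invariant probability measure $\pi$; (ii) positive Harris recurrence; (iii) $\|P^n(b,\cdot)-\pi\|_{\mathrm{TV}}\to 0$ for every starting point $b$, hence convergence in total variation from any initial distribution; and (iv) the ergodic theorem, implying stationarity and ergodicity of $\{B(i)\}$ under $\pi$. Absolute continuity of $\pi$ follows from the invariance identity $\pi=\pi P$ together with the fact that $P(b,\cdot)$ is dominated by Lebesgue measure for every $b$: any atom of $\pi$ would be transported by one step to an absolutely continuous measure, contradicting invariance.

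Finally, the identity $\E[\Pul(i)]=\bar X$ is obtained by taking expectations in the storage equation under $\pi$: since $K=\infty$, $B(i+1)=B(i)-\Pul(i)+X(i)$, and stationarity gives $\E[\Pul(i)]=\E[X(i)]=\bar X$, provided $\E_\pi[B]<\infty$ so that the cancellation $\E_\pi[B(i+1)]-\E_\pi[B(i)]=0$ is legitimate. This integrability is precisely the content of the $f$-norm ergodic theorem applied with $f=V$: the Foster condition with $V(b)=b$ implies $\pi(V)<\infty$. The expected obstacle is the rigorous verification of the small-set/irreducibility conditions for the on-off chain, since on the set $[0,M]$ the deterministic part of the dynamics is trivial; however, as noted above, one step suffices because the additive arrival $X(i)$ has a density of full support, so $C=[0,M]$ is still small and the argument goes through uniformly for both policies.
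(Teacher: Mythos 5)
Your route is genuinely different from the paper's. The paper proves stability by mapping the best-effort chain onto Moran's dam model and hence onto the sojourn time of a GI/G/1 queue (citing Asmussen for existence, uniqueness, and total-variation convergence under $M>\bar{X}$), and handles the on-off chain separately by appeal to Gaver and Miller's double-service-rate model; the remaining claims are inherited from those references. You instead verify the Meyn--Tweedie machinery directly: a Foster--Lyapunov drift with $V(b)=b$ (your drift computations $-\min(b,M)+\bar X$ and $-M\mathds{1}_{b>M}+\bar X$ are correct), a one-step minorisation making $C=[0,M]$ small, $\psi$-irreducibility, aperiodicity, and domination of $P(b,\cdot)$ by Lebesgue measure for absolute continuity of $\pi$. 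This buys a unified, self-contained treatment of both policies without invoking queueing-theoretic equivalences, at the price of having to check the small-set and aperiodicity conditions by hand. Two of those checks are stated too loosely: for the on-off chain the one-step density from $b\in C$ is $f(x-b)$, which is \emph{not} bounded below by a multiple of $f$ on an arbitrary compact subset of $(0,\infty)$ (it vanishes as $x\downarrow b$ when $m\ge 2$), so you must pick the minorisation window away from the moving support edge, e.g.\ $[M+1,M+2]$; and once you do, $\nu(C)=0$, so your aperiodicity argument ``since $\nu(C)>0$'' no longer applies to the on-off chain and needs a separate (easy) argument, e.g.\ that every one-step kernel is supported on a half-line, which is incompatible with a cyclic decomposition of period $d\ge 2$.

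The genuine gap is in the last step. To justify $\E_\pi[B(i+1)]-\E_\pi[B(i)]=0$ you claim that $\pi(V)<\infty$ ``is precisely the content of the $f$-norm ergodic theorem applied with $f=V$'' under the Foster condition for $V(b)=b$. That is not what the drift condition gives you: $\Delta V\le -(M-\bar X)+b\mathds{1}_C$ is condition (V2) with a \emph{constant} negative drift, and the comparison theorem then only bounds $\pi$ of that constant, which is vacuous; to conclude $\pi(V)<\infty$ you would need a drift of the form $\Delta W\le -V+b\mathds{1}_C$ for some $W$ (e.g.\ $W(b)=b^2$), which requires $\E[X^2]<\infty$ --- an assumption neither you nor the theorem statement makes, and indeed for heavy-tailed $X$ the stationary mean of $B$ is genuinely infinite (as for the GI/G/1 waiting time). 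The conclusion $\E[\Pul(i)]=\bar X$ is nevertheless true and can be rescued without any integrability of $B$: sum the storage equation to get $B(N+1)-B(1)=\sum_{i=1}^{N}(X(i)-\Pul(i))$, divide by $N$, and use the ergodic theorem for the bounded sequence $\Pul(i)\le M$ together with $B(N+1)\ge 0$ to get $\E_\pi[\Pul]\le\bar X$; strict inequality would force $B(N)/N$ to a positive limit and hence $B(N)\to\infty$ a.s., contradicting Harris recurrence (the chain returns to $C$ infinitely often). Replace the $f$-norm argument with this pathwise one and the proof is complete.
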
\vspace{-0.2cm}
\ifARXIV
\begin{proof} The proof is provided partially in \cite[Appendix B]{Morsi_ICC2015} and fully in Appendix \ref{app:stationary_dist_infinite_BE}.
\end{proof}
\else
\begin{proof} The proof is provided  partially in \cite[Appendix B]{Morsi_ICC2015} and fully in \cite[Appendix B]{Morsi_storage_arxiv}. \end{proof}
\fi \vspace{-0.4cm}
\vspace{-0.1cm}\subsection{Best-Effort Policy with an Infinite-Size Energy Buffer}
\label{ss:BE_scheme_INF}
\begin{theorem}\normalfont
Consider the storage process $\{B(i)\}$ of the best-effort policy in (\ref{eq:general_storage_equation_BE}) with an infinite-size energy buffer and $M>\bar{X}$. Let $g(x)$ on $(0,\infty)$ be the limiting pdf of the stored energy. If $f(x)$ and $F(x)$ are respectively the pdf and the cdf of the EH process $\{X(i)\}$, then $g(x)$ must satisfy the following integral equation\vspace{-0.3cm}
\begin{equation}
g(x)=f(x)\int\limits_0^M g(u) \dd u + \int\limits_M^{M+x} f(x-u+M) g(u) \dd u,
\label{eq:Integral_eqn_infinite_BE}
\end{equation}
and the limiting cdf $G(x)$ of the stored energy satisfies the following Lindley integral equation
\begin{equation}
G(x)=\int\limits_{u=0}^x G(x-u+M)\dd F(u).
\label{eq:CDF_Integral_eqn_infinite_BE}
\end{equation}
\label{theo:integral_eq_infinite_BE}
\end{theorem}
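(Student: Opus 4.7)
My plan is to work directly from the storage update equation \eqref{eq:general_storage_equation_BE}, which for the infinite buffer ($K\to\infty$) reduces to
\begin{equation*}
B(i+1) = [B(i)-M]^+ + X(i).
\end{equation*}
By Theorem \ref{theo:stationary_dist_infinite}, under the condition $M>\bar{X}$ the chain possesses a unique stationary distribution $\pi$ which is absolutely continuous on $(0,\infty)$, so a limiting pdf $g$ and cdf $G$ are well defined. I will therefore take the stationary regime where $B(i)$ and $B(i+1)$ share this common distribution, and exploit the independence of $X(i)$ from $B(i)$ (which comes from the i.i.d.\ nature of $\{X(i)\}$ and the Markovian construction of $B(i)$ from $\{X(j)\}_{j<i}$).

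For the pdf equation \eqref{eq:Integral_eqn_infinite_BE}, I would split the event $\{B(i+1)\in\dd x\}$ according to whether $B(i)\le M$ or $B(i)>M$. On the low‑energy event, the clipping gives $[B(i)-M]^+=0$, so $B(i+1)=X(i)$ and conditioning on $B(i)=u\in[0,M]$ contributes $f(x)\,g(u)\dd u$ upon integration over $u\in[0,M]$; this yields the first term $f(x)\int_0^M g(u)\dd u$. On the event $B(i)>M$, we have $B(i+1)=B(i)-M+X(i)$, so conditioning on $B(i)=u$ gives density $f(x-u+M)g(u)$ for the outgoing state. Integrating over $u\in[M,\infty)$ and then restricting the range to $u\in[M,M+x]$ (because $f$ is supported on $[0,\infty)$, forcing $x-u+M\ge 0$) produces the convolution‑like second term. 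Summing the two contributions and invoking stationarity $g(x)=g_{B(i+1)}(x)$ gives \eqref{eq:Integral_eqn_infinite_BE}.

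For the cdf equation \eqref{eq:CDF_Integral_eqn_infinite_BE} I would avoid splitting cases and instead use the cleaner observation that $Y\definedas[B(i)-M]^+$ and $X(i)$ are independent nonnegative random variables whose sum is $B(i+1)$. A key identity is
\begin{equation*}
\pr(Y\le y) = \pr(B(i)\le y+M) = G(y+M), \qquad y\ge 0,
\end{equation*}
which already absorbs the probability mass of $Y$ at $0$ (corresponding to $B(i)\le M$) into the value $G(M)$. Then, since $X(i)$ has support on $[0,\infty)$,
\begin{equation*}
G(x) = \pr(Y+X(i)\le x) = \int_0^x \pr(Y\le x-u)\,\dd F(u) = \int_0^x G(x-u+M)\,\dd F(u),
\end{equation*}
which is \eqref{eq:CDF_Integral_eqn_infinite_BE}. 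As a sanity check one can verify that differentiating this Lindley equation with respect to $x$ reproduces \eqref{eq:Integral_eqn_infinite_BE}, the jump at $u=0$ accounting for the $f(x)\int_0^M g(u)\dd u$ term through the atom $G(M)$ of $Y$.

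The calculations themselves are routine; the only subtlety I anticipate is the careful treatment of the atom of $Y$ at zero when writing the pdf version, which must be handled as a separate mass that the independent convolution with $f$ smears into the absolutely continuous term $f(x)\,\pr(B(i)\le M)$. The existence of $g$ on $(0,\infty)$ needed to even write \eqref{eq:Integral_eqn_infinite_BE} is not part of this theorem but is already supplied by Theorem \ref{theo:stationary_dist_infinite}, so no further regularity argument is required here.
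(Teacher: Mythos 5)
Your proposal is correct. The derivation of the pdf equation (\ref{eq:Integral_eqn_infinite_BE}) is essentially the same as the paper's: both split on the events $B(i)\le M$ and $B(i)>M$, use the conditional densities $f(x)$ and $f(x-u+M)$, and obtain the upper limit $M+x$ from the nonnegativity of $X(i)$. For the cdf equation (\ref{eq:CDF_Integral_eqn_infinite_BE}) you take a genuinely different, and cleaner, route: you observe that $B(i+1)$ is the sum of the independent nonnegative variables $Y=[B(i)-M]^+$ and $X(i)$, note that $\pr(Y\le y)=G(y+M)$ absorbs the atom of $Y$ at zero into $G(M)$, and then apply the standard convolution formula conditioning on $X(i)=u$. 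The paper instead conditions on the previous state $B(i)=\nu$, writes $G(x)$ as $\int_0^M F(x)\,\dd G(\nu)+\int_M^{M+x}F(x+M-\nu)\,\dd G(\nu)$, and must then integrate by parts (using $F(0)=0$ to kill the boundary terms) and change variables to reach the Lindley form. Your argument reaches the same equation in one line and makes the probabilistic structure (an independent-increment convolution with a clipped state) more transparent; the paper's version has the minor advantage of not requiring one to first identify the distribution of the clipped variable $Y$, but buys nothing essential. Your closing sanity check that differentiating the Lindley equation recovers (\ref{eq:Integral_eqn_infinite_BE}), with $G(M)=\int_0^M g(u)\,\dd u$ supplying the first term, is also valid and ties the two identities together in a way the paper does not make explicit.
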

\vspace{-1cm}
\begin{proof}
\ifARXIV
The proof is provided partially in \cite[Proof of Theorem 3]{Morsi_ICC2015} and fully in Appendix \ref{app:proof_BE_integral_eqn_infinite}.
\else
The proof is provided partially in \cite[Proof of Theorem 3]{Morsi_ICC2015} and fully in \cite[Appendix C]{Morsi_storage_arxiv}.
\fi
\end{proof} \vspace{-0.2cm}
Because of the relation between our storage model and Moran's model, cf. Remark \ref{remark:equivalence_to_Morans_Model}, (\ref{eq:Integral_eqn_infinite_BE}) is identical to \cite[eq. (5)]{Infinite_dam_Gani_Prabhu_1957} and (\ref{eq:CDF_Integral_eqn_infinite_BE}) is identical to \cite[eq. (7)]{Infinite_dam_Gani_Prabhu_1957}. Next, we consider the case when the DL channel is Nakagami-$m$ block fading, i.e., the EH process is Gamma distributed.
 \vspace{-0.3cm}\begin{corollary}\normalfont 
Consider the storage process $\{B(i)\}$ of the best-effort policy in (\ref{eq:general_storage_equation_BE}) with an infinite-size energy buffer and $M\!>\!\bar{X}$. If the EH process $\{X(i)\}$ is Gamma distributed with an integer shape parameter $m\in\{1,2,\ldots\}$ and pdf $f(x)\!=\!\frac{\lambda^m}{\Gamma(m)}x^{m-1}\e^{-\lambda x}$, where $\lambda\!\!=\!\!m/\bar{X}$, then the stored energy has a limiting cdf given by $G(x)\!\!=\!\!1-\sum_{n=0}^{m-1}c_n\e^{-\lambda_n x}$ and a limiting pdf given by $g(x)\!\!=\!\!\sum_{n=0}^{m-1}\lambda_n c_n\e^{-\lambda_n x}$, where $\lambda_n\!=\lambda+\frac{m}{M} W_0\left(-\delta \e^{-\delta}\e^{-\J\frac{2 \pi n}{m}}\right)$ and $\delta\!=\!M/\bar{X}=\lambda M/m$. The coefficients $c_n$, $n=0,\ldots,m-1$, are obtained by solving the non-homogeneous system of linear equations $\V{A} \V{c}=\V{1}_m$, where $\V{c}=[c_0,\ldots,c_{m-1}]^T$ and $\V{A}$ is an $m\times m$ matrix whose entry in the $s^{\text{th}}$ row and the $n^{\text{th}}$ column is given by $A_{s,n}=\left(\frac{\lambda-\lambda_n}{\lambda}\right)^s$, where $s,n\in\{0,\ldots,m-1\}$.
\label{theo:stationary_dist_infinite_Nakagami}
\end{corollary}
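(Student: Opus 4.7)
My plan is to invoke Theorem \ref{theo:stationary_dist_infinite}, which guarantees that the limiting distribution of $\{B(i)\}$ exists and is unique whenever $M>\bar X$, and then to verify the corollary by an explicit ansatz. Specifically, I would posit the sum-of-exponentials form $G(x)=1-\sum_{n=0}^{m-1}c_n\e^{-\lambda_n x}$ with \emph{a priori} unknown rates $\lambda_n$ and amplitudes $c_n$, substitute it into the Lindley equation \eqref{eq:CDF_Integral_eqn_infinite_BE} from Theorem \ref{theo:integral_eq_infinite_BE}, and show that self-consistency pins down the $\lambda_n$ via a characteristic equation and the $c_n$ via a linear system. Uniqueness then identifies the candidate with the actual limiting c.d.f., and the pdf form follows immediately by differentiation.

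Inserting the ansatz into \eqref{eq:CDF_Integral_eqn_infinite_BE} gives
\begin{equation*}
1-\sum_{n=0}^{m-1}c_n\e^{-\lambda_n x}=F(x)-\sum_{n=0}^{m-1}c_n\e^{-\lambda_n(x+M)}\int_0^x \e^{\lambda_n u}f(u)\,\dd u.
\end{equation*}
For integer-$m$ Gamma, the inner integral evaluates in closed form, via the lower-incomplete-gamma identity stated in the paper's notation, to $\bigl[\lambda/(\lambda-\lambda_n)\bigr]^m\gamma(m,(\lambda-\lambda_n)x)/\Gamma(m)$. The key observation is that if $\lambda_n$ is chosen to satisfy
\begin{equation*}
(\lambda-\lambda_n)^m=\lambda^m\e^{-\lambda_n M},
\end{equation*}
equivalently $[\lambda/(\lambda-\lambda_n)]^m=\e^{\lambda_n M}$, then the prefactor $\e^{-\lambda_n M}[\lambda/(\lambda-\lambda_n)]^m$ collapses to $1$ and every $\e^{-\lambda_n x}$-term on the right-hand side exactly cancels the corresponding term on the left. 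This characteristic equation, rewritten by taking $m$-th roots and setting $w=(\lambda_n-\lambda)M/m$, becomes $w\e^w=-\delta\e^{-\delta}\e^{\J 2\pi n/m}$ with $\delta=\lambda M/m=M/\bar X$; inverting via the principal branch of the Lambert $W$ function and relabeling the roots of unity produces exactly $\lambda_n=\lambda+\frac{m}{M}W_0\bigl(-\delta\e^{-\delta}\e^{-\J 2\pi n/m}\bigr)$.

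With the $\lambda_n$ fixed, what survives of the integral equation is the polynomial identity
\begin{equation*}
1-F(x)=\e^{-\lambda x}\sum_{k=0}^{m-1}\frac{x^k}{k!}\sum_{n=0}^{m-1}c_n(\lambda-\lambda_n)^k.
\end{equation*}
Substituting the closed-form Gamma tail $1-F(x)=\e^{-\lambda x}\sum_{k=0}^{m-1}(\lambda x)^k/k!$ and matching the coefficient of $x^k$ for each $k=0,\ldots,m-1$ yields $\sum_n c_n[(\lambda-\lambda_n)/\lambda]^k=1$, which is precisely the system $\V{A}\V{c}=\V{1}_m$ with $A_{s,n}=[(\lambda-\lambda_n)/\lambda]^s$ claimed in the corollary. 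The row indexed by $s=0$ encodes the normalization $G(\infty)=1$, while the rows $s\geq 1$ enforce cancellation of the polynomial tail of the incomplete-gamma expansion.

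The hard part will be branch selection: the transcendental equation $(\lambda-\lambda_n)^m=\lambda^m\e^{-\lambda_n M}$ admits many complex solutions beyond the $m$ principal-branch ones in the corollary, and I must argue that the $W_0$-solutions are precisely those with $\Re(\lambda_n)>0$, so that the candidate $G$ is a genuine c.d.f.\ supported on $[0,\infty)$, and are pairwise distinct, so that the Vandermonde-type matrix $\V{A}$ is invertible and $\V{c}$ is uniquely defined. The hypothesis $M>\bar X$, equivalently $\delta>1$, is exactly what keeps $W_0(-\delta\e^{-\delta})$ away from the degenerate real value $-\delta$ (which would otherwise collapse $\lambda_0$ to zero), and an analogous branch-geometry argument bounds $\Re(\lambda_n)$ away from zero for the complex branches $n\neq 0$. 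A final, largely routine step is to confirm that the resulting $G$ is real-valued and nondecreasing, which follows because the non-real $\lambda_n$ and $c_n$ occur in complex-conjugate pairs under the symmetry $n\mapsto m-n$ of the characteristic equation.
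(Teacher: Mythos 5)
Your proposal follows essentially the same route as the paper's proof in Appendix~\ref{app:stationary_dist_infinite_Nakagami_BE}: postulate the sum-of-exponentials cdf, substitute it into the Lindley equation (\ref{eq:CDF_Integral_eqn_infinite_BE}), evaluate the Gamma integral via the lower incomplete gamma function, obtain the characteristic equation $\left(\frac{\lambda}{\lambda-\lambda_n}\right)^m \e^{-\lambda_n M}=1$ solved by the Lambert $W_0$ function, and match the residual polynomial in $x^s\e^{-\lambda x}$ to get the linear system $\V{A}\V{c}=\V{1}_m$. The branch-selection and invertibility concerns you flag at the end are legitimate but are likewise left unaddressed by the paper, which simply postulates the solution and appeals to the uniqueness guaranteed by Theorem~\ref{theo:stationary_dist_infinite} (real-valuedness via the conjugate symmetry $\lambda_n=\lambda_{m-n}^*$, $c_n=c_{m-n}^*$ is handled separately in Remark~\ref{coro:real_valued_pdfs}).
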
 \vspace{-0.4cm}
\begin{proof} The proof is provided in Appendix \ref{app:stationary_dist_infinite_Nakagami_BE}. \end{proof} \vspace{-0.3cm}
\vspace{-0.3cm}\subsection{On-Off Policy with an Infinite-Size Energy Buffer}
\label{ss:OO_scheme_INF}
\begin{theorem}\normalfont
Consider the storage process $\{B(i)\}$ of the on-off policy in (\ref{eq:general_storage_equation_OO}) with an infinite-size energy buffer and $M>\bar{X}$. Let $g(x)$ on $(0,\infty)$ be the limiting pdf of the stored energy. If $f(x)$ is the pdf of the EH process $\{X(i)\}$, then $g(x)$ must satisfy the following integral equation\vspace{-0.1cm}
\begin{numcases}{g(x)\!=\!\! \label{eq:g_integral_eqn_infinite_OO}}
\int\limits_{u=0}^x f(x-u) g(u) \dd u + \int\limits_{u=M}^{M+x} f(x-u+M) g(u) \dd u, &  $0\leq x< M$ \label{eq:Integral_eqn_infinite_parta_OO}\\
\int\limits_{u=0}^M f(x-u)g(u) \dd u + \int\limits_{u=M}^{M+x} f(x-u+M) g(u) \dd u,  &  $x\geq M$  \label{eq:Integral_eqn_infinite_partb_OO}\end{numcases}
\label{theo:OO_integral_eqn_infinite}
\end{theorem}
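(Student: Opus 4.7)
The plan is to derive the integral equation by applying the Chapman--Kolmogorov relation $g(x) = \int_0^\infty p(u,x) g(u)\, du$ to the stationary density of $\{B(i)\}$, where $p(u,x)$ denotes the one-step transition density of the Markov chain in (\ref{eq:general_storage_equation_OO}) with $K\to\infty$. Theorem \ref{theo:stationary_dist_infinite} guarantees that such a limiting density $g$ exists under the assumption $M > \bar{X}$ and is absolutely continuous on $(0,\infty)$, so the event $\{B(i) = M\}$ has probability zero and the indicator $\mathds{1}_{B(i)>M}$ may be replaced by $\mathds{1}_{B(i)\geq M}$ without affecting the analysis.

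First I would obtain $p(u,x)$ directly from the recursion (\ref{eq:general_storage_equation_OO}). For $u \in [0,M]$ the chain evolves as $B(i+1) = u + X(i)$, so the conditional density of $B(i+1)$ given $B(i)=u$ equals $f(x-u)$ on $\{x \geq u\}$ and zero otherwise. For $u > M$ the chain evolves as $B(i+1) = u - M + X(i)$, so the conditional density equals $f(x-u+M)$ on $\{x \geq u-M\}$ and zero otherwise. Substituting these two kernels into the Chapman--Kolmogorov equation and splitting the outer integral at $u = M$ yields
\begin{equation*}
g(x) = \int_{0}^{M} f(x-u)\, g(u)\, \mathds{1}_{\{u \leq x\}}\, du \;+\; \int_{M}^{\infty} f(x-u+M)\, g(u)\, \mathds{1}_{\{u \leq x+M\}}\, du.
\end{equation*}

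Second, I would resolve the indicator functions by distinguishing the two regimes for $x$. When $0 \leq x < M$, the constraint $u \leq x$ tightens the first integral to $u \in [0,x]$ (since $x < M$), while the constraint $u \leq x+M$ caps the second integral at $u = x+M$, giving (\ref{eq:Integral_eqn_infinite_parta_OO}). When $x \geq M$, the constraint $u \leq x$ is automatic over $[0,M]$, so the first integral runs over the entire interval $[0,M]$, while the second integral again runs over $(M, x+M]$, yielding (\ref{eq:Integral_eqn_infinite_partb_OO}). Reading off the two cases recovers (\ref{eq:g_integral_eqn_infinite_OO}).

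The main obstacle is not the algebra but the measure-theoretic justification of the Chapman--Kolmogorov step: one must be sure that $g$ is indeed a density (not merely a measure with a possible atom at $x=M$, which would arise because the deterministic output is $0$ whenever $B(i) \leq M$) and that the integration limits genuinely separate the two drift regimes without double-counting the boundary $u=M$. Both points follow from Theorem \ref{theo:stationary_dist_infinite}, which rules out atoms in the stationary distribution, together with the fact that $X(i)$ is independent of $B(i)$ and supported on $[0,\infty)$. Once this is in place, the derivation reduces to the kernel computation and the case split described above.
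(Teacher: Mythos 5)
Your proposal is correct and follows essentially the same route as the paper's own proof (Appendix E of the theorem): both set $B(i)=u$, $B(i+1)=x$, read off the conditional transition density $f(x-u)$ for $u\leq M$ and $f(x-u+M)$ for $u>M$ from the update equation, and then determine the integration limits from the non-negativity of the harvested energy, splitting cases at $x=M$. The only difference is that you make the Chapman--Kolmogorov step and the absence of atoms (via Theorem \ref{theo:stationary_dist_infinite}) explicit, whereas the paper treats these points informally; the substance of the argument is identical.
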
\vspace{-0.8cm}
\ifARXIV
\begin{proof} The proof is provided in \cite[Proof of Theorem 3]{Morsi_Asilomar2014} and in Appendix \ref{app:proof_OO_integral_eqn_infinite}. \end{proof}
\else
\begin{proof} The proof is provided in \cite[Proof of Theorem 3]{Morsi_Asilomar2014} and \cite[Appendix E]{Morsi_storage_arxiv}. \end{proof}
\fi
Next, we consider the case when the DL channel is Nakagami-$m$ block fading, i.e., the EH process is Gamma distributed.
\begin{corollary} \normalfont
Consider the storage process $\{B(i)\}$ of the on-off policy in (\ref{eq:general_storage_equation_OO}) with an infinite-size energy buffer and $M\!>\!\bar{X}$. If the EH process $\{X(i)\}$ is Gamma distributed with an integer shape parameter $m\in\{1,2,\ldots\}$ and pdf $f(x)\!=\!\frac{\lambda^m}{\Gamma(m)}x^{m-1}\e^{-\lambda x}$, where $\lambda\!\!=\!\!m/\bar{X}$, then the stored energy has a limiting pdf given by
\begin{numcases}{\hspace{-0.7cm}g(x)\!=\!\! \label{eq:PDF_INF_BE_Nakagami}}
\begin{aligned}
\!\!\!&\sum\limits_{n=0}^{m-1} \lambda_n c_n\Biggg[ \e^{-\lambda_n x}+\frac{\lambda}{m}\sum\limits_{k=0}^{m-1}\Big(a_{nk}\e^{-\theta_k x}+b_{nk}\e^{-\theta_k^* x}-(a_{nk}+b_{nk})\e^{-\lambda_n x}\Big)\\[-2ex]
&\!\!\!-\!\sum\limits_{t=0}^{m-1}\frac{(\lambda\!-\!\lambda_n)^t}{t!}\e^{-\lambda x}\Bigg\{x^t\!+\! \frac{1}{m}\sum\limits_{k=0}^{m-1}\Re\{(\lambda\e^{\J\eta_k})^{-t}\e^{\lambda x\e^{\J\eta_k}}\gamma(t\!+\!1,\lambda x\e^{\J\eta_k})\}\Bigg\}\Biggg],
\end{aligned} & \hspace{-0.5cm}$0\!\leq \!x\!<\!\!M$ \label{eq:g1_INF_OO}\\
\sum\limits_{n=0}^{m-1} \lambda_n c_n\e^{-\lambda_n x},  &  \hspace{0cm}$x\!\geq\! M$  \label{eq:g2_INF_OO}\end{numcases}
where $\eta_k=\frac{2\pi k}{m}$, $\theta_k=\lambda(1-\e^{\J\eta_k})$, $a_{nk}=[2\big(\lambda_n\e^{-\J\eta_k}+\theta_k^*\big)]^{-1}$, $b_{nk}=[2\big(\lambda_n\e^{\J\eta_k}+\theta_k\big)]^{-1}$, $\lambda_n\!=\lambda+\frac{m}{M} W_0\left(-\delta \e^{-\delta}\e^{-\J\frac{2 \pi n}{m}}\right)$, and  $\delta\!=\!M/\bar{X}=\lambda M/m$.  The coefficients $c_n$, $n=0,\ldots,m-1$, are obtained by solving the system of linear equations $\V{A} \V{c}=\V{1}_m$, where $\V{c}=[c_0,\ldots,c_{m-1}]^T$ and $\V{A}$ is an $m\times m$ matrix whose entry in the $s^{\text{th}}$ row and the $n^{\text{th}}$ column, for $s,n\in\{0,\ldots,m-1\}$, is given by $A_{s,n}=D_n+B_{sn}$, where
\ifARXIV \vspace{-1cm} \fi
\begin{equation}
\begin{aligned}
D_n&\!=\!\frac{\lambda_n\lambda}{m}\sum\limits_{k=0}^{m-1}\Big(a_{nk}\zeta_k+b_{nk}\zeta_k^*-\frac{a_{nk}+b_{nk}}{\lambda_n}(1-\e^{-\lambda_n M})\Big)
-\lambda_n\sum\limits_{t=0}^{m-1}(\lambda\!-\!\lambda_n)^t\Biggg[\frac{\lambda^{-t-1}}{t!}\gamma(t+1,\lambda M)\\[-1ex]
&+\frac{1}{m}\sum\limits_{k=0}^{m-1}\Re\Bigg\{(\lambda\e^{\J\eta_k})^{-t}\Big(\zeta_k
-\sum\limits_{q=0}^{t}\frac{\e^{\J\eta_k q}\lambda^{-1}}{q!}\gamma(q+1,\lambda M)\Big)\Bigg\}\Biggg]+1,
\end{aligned}
\label{eq:D_n_OO}
\end{equation}
\begin{equation}
\begin{aligned}
&B_{sn}\!\!=\!\!-\frac{\lambda_n(\lambda\!\!-\!\!\lambda_n)^s}{\lambda^{s+1}}\!\!+\!\!\frac{\lambda^m(-\!1)^{m\!-\!1\!-\!s}\lambda_n}{\lambda^{s+1}(m\!-\!\!1\!\!-\!\!s)!}\!\!\Bigg[\!\!I(\lambda_n\!\!-\!\!\lambda)\!+\!\!\frac{\lambda}{m}\!\!\sum\limits_{k=0}^{m-1}\!\!\Big(\!a_{nk}I(\theta_k\!\!-\!\!\lambda)\!+\!b_{nk}I(\theta_k^*\!\!-\!\!\lambda)\!\!-\!\!(a_{nk}\!+\!b_{nk})I(\lambda_n\!\!\!-\!\!\lambda)\Big)\\[-1ex]
&-\!\sum\limits_{t=0}^{m-1}\frac{(\lambda\!-\!\lambda_n)^{t}}{t!}\Bigg\{\frac{M^{m+t-s}}{(m+t-s)}\!+\!\frac{1}{m}\sum\limits_{k=0}^{m-1}\!\Re\Big\{(\lambda\e^{\J\eta_k})^{-t}t!\Big[I(-\lambda\e^{\J\eta_k})
\!-\!\sum\limits_{q=0}^{t}\frac{(\lambda\e^{\J\eta_k})^q}{q!}\frac{M^{m+q-s}}{(m\!+\!q\!-\!s)}\Big]\Big\}\Bigg\}\Bigg],
\end{aligned}
\label{eq:B_sn_OO}
\end{equation}
with $I(\beta)=\beta^{-m+s}\gamma(m-s,\beta M)$ and $\zeta_{k}=M$ for $k=0$ and $\zeta_k=(1-\e^{-\theta_k M})/\theta_k$ for $ k\neq 0$.
\label{theo:stationary_dist_infinite_Nakagami_OO}
\end{corollary}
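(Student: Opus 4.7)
The plan is to build on the proof of Corollary~\ref{theo:stationary_dist_infinite_Nakagami} and exploit the fact that the tail integral $\int_M^{M+x}f(x-u+M)g(u)\,du$ is identical in \eqref{eq:Integral_eqn_infinite_partb_OO} and in the best-effort equation \eqref{eq:Integral_eqn_infinite_BE}. First I would postulate that on the upper range $x\ge M$ the limiting pdf takes the same exponential form
\begin{equation*}
g(x)=\sum_{n=0}^{m-1}\lambda_n c_n e^{-\lambda_n x},\qquad x\ge M,
\end{equation*}
with the Lambert--W roots $\lambda_n$ inherited from Corollary~\ref{theo:stationary_dist_infinite_Nakagami}. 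Substituting into \eqref{eq:Integral_eqn_infinite_partb_OO}, shifting the tail integral by $v=u-M$, and using the Erlang convolution identity
\begin{equation*}
\int_0^x f(x-v)e^{-\lambda_n v}\,dv=\frac{\lambda^m}{(\lambda-\lambda_n)^m}e^{-\lambda_n x}-e^{-\lambda x}\sum_{t=0}^{m-1}\frac{\lambda^m(\lambda-\lambda_n)^{t-m}}{t!}x^t,
\end{equation*}
matching the $e^{-\lambda_n x}$ coefficients forces exactly the characteristic equation $(\lambda-\lambda_n)^m e^{\lambda_n M}=\lambda^m$ that the Lambert--W formula solves. The leftover $x^t e^{-\lambda x}$ terms must be absorbed by $\int_0^M f(x-u)g(u)\,du$, yielding the $m$ moment constraints
\begin{equation*}
\frac{\lambda^m(-1)^{m-1-s}}{(m-1-s)!}\int_0^M u^{m-1-s}e^{\lambda u}g(u)\,du=\sum_{n=0}^{m-1}\lambda_n c_n(\lambda-\lambda_n)^s,\qquad s=0,\ldots,m-1.
\end{equation*}

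Next I would feed the postulated $g|_{[M,\infty)}$ back into the lower-range equation \eqref{eq:Integral_eqn_infinite_parta_OO}. Its second integral reduces, via the same convolution identity, to an explicit combination of $e^{-\lambda_n x}$ and $x^t e^{-\lambda x}$ terms, so the equation becomes a Volterra integral equation of the second kind on $[0,M)$,
\begin{equation*}
g(x)-\int_0^x f(x-u)g(u)\,du=h(x),
\end{equation*}
with forcing
\begin{equation*}
h(x)=\sum_{n=0}^{m-1}\lambda_n c_n e^{-\lambda_n x}-\sum_{n=0}^{m-1}\lambda_n c_n\sum_{t=0}^{m-1}\frac{(\lambda-\lambda_n)^t}{t!}x^t e^{-\lambda x},
\end{equation*}
which is completely explicit once the $c_n$ are known. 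Since $f$ is Erlang-$m$, this Volterra equation is solvable by the resolvent-kernel method: the resolvent has Laplace transform $\lambda^m/[(s+\lambda)^m-\lambda^m]$, whose denominator factorizes as $\prod_{k=0}^{m-1}(s+\theta_k)$ with $\theta_k=\lambda(1-e^{\J\eta_k})$ and $\eta_k=2\pi k/m$ --- precisely the exponents that appear in \eqref{eq:g1_INF_OO}. Partial-fraction expansion then gives the resolvent $R(x)$ as a linear combination of the $e^{-\theta_k x}$.

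Writing the solution as $g=h+R\ast h$ and carrying out the two families of convolutions produces the closed form \eqref{eq:g1_INF_OO}. Concretely, $R\ast e^{-\lambda_n\cdot}$ generates the combinations $a_{nk}(e^{-\theta_k x}-e^{-\lambda_n x})+b_{nk}(e^{-\theta_k^* x}-e^{-\lambda_n x})$ after the conjugate-symmetric $\theta_k$'s are grouped together, while $R\ast(x^t e^{-\lambda x})$ generates, via the identity $\int_0^x\tau^t e^{-\alpha\tau}\,d\tau=\alpha^{-t-1}\gamma(t+1,\alpha x)$, the lower-incomplete-gamma contributions $e^{\lambda x e^{\J\eta_k}}\gamma(t+1,\lambda x e^{\J\eta_k})$; taking real parts collapses the conjugate pairs into the displayed $\tfrac{1}{m}\Re\{\cdot\}$ form. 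The coefficients $c_0,\ldots,c_{m-1}$ are then pinned down by substituting this explicit $g$ into the $m$ moment constraints from the first paragraph (one of which is equivalent to the normalization $\int_0^M g+\int_M^\infty g=1$) and evaluating every integral $\int_0^M u^{m-1-s}e^{-\beta u}\,du=\beta^{-m+s}\gamma(m-s,\beta M)$ for the exponents $\beta\in\{\lambda-\lambda_n,\,\theta_k-\lambda,\,\theta_k^*-\lambda,\,-\lambda e^{\J\eta_k}\}$ appearing in $g$; reassembling the pieces produces the entries $A_{s,n}=D_n+B_{s,n}$ in \eqref{eq:D_n_OO}--\eqref{eq:B_sn_OO}.

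I expect the main obstacle to be the algebraic bookkeeping in the last step: every matrix entry receives contributions from four distinct exponent families, and tracking the conjugate pairings, the $\lambda/m$ prefactors from the resolvent, and the arguments of the incomplete Gamma functions without error is the only substantial difficulty. Conceptually the argument is a mild adaptation of the best-effort proof, the main structural difference being that the term $f(x)\int_0^M g(u)\,du$ in \eqref{eq:Integral_eqn_infinite_BE} is replaced by the convolution $\int_0^M f(x-u)g(u)\,du$ in \eqref{eq:g_integral_eqn_infinite_OO}, which is exactly what promotes the elementary coefficient-matching of Corollary~\ref{theo:stationary_dist_infinite_Nakagami} to the Volterra equation solved above.
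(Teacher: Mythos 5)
Your proposal is correct and follows essentially the same route as the paper's proof: postulating the exponential tail form on $[M,\infty)$, extracting the Lambert--W characteristic equation and the moment constraints from \eqref{eq:Integral_eqn_infinite_partb_OO}, converting \eqref{eq:Integral_eqn_infinite_parta_OO} into a Volterra equation of the second kind whose resolvent kernel is exactly the root-of-unity partial-fraction expansion you describe (the paper writes it as $R(x)=\frac{\e^{-\lambda x}}{m}\sum_k\lambda\e^{\lambda x\cos\eta_k}\cos(\eta_k+\lambda x\sin\eta_k)$), and evaluating the resulting convolutions and incomplete-Gamma integrals to assemble $B_{sn}$ and $D_n$. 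The only imprecision is your parenthetical that one of the $m$ moment constraints "is equivalent to" the normalization: those constraints form a homogeneous (rank-deficient) system, and the paper resolves this by \emph{adding} the non-homogeneous unit-area condition $\sum_n D_n c_n=1$ to every row, giving $A_{s,n}=D_n+B_{sn}$, rather than by replacing one constraint.
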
 \vspace{-0.3cm}
\ifARXIV\vspace{-0.2cm}
\begin{proof} The proof is provided in Appendix \ref{app:stationary_dist_infinite_Nakagami_OO}. \end{proof}
\else
\begin{proof} The proof is provided in Appendix \ref{app:stationary_dist_infinite_Nakagami_OO_shortened} and is given in more detail in \cite[Appendix F]{Morsi_storage_arxiv}. \end{proof}\vspace{-0.3cm}
\fi
\begin{remark} \normalfont
The limiting distributions of the energy stored in an infinite-size energy buffer for the best-effort and on-off policies obtained in Corollaries \ref{theo:stationary_dist_infinite_Nakagami} and \ref{theo:stationary_dist_infinite_Nakagami_OO} are real-valued despite the complex-valued parameters $\lambda_n$ and $c_n$. Furthermore, $\lambda_{n}\!=\!\lambda_{m-n}^*$ and $c_{n}\!=\!c_{m-n}^*$, $\forall$ $n\!=\!1,\ldots,\frac{m}{2}\!-\!1$  if $m$ is even and $\forall$ $n\!=\!1,\ldots,\frac{m-1}{2}$, if $m$ is odd, see 
\ifARXIV
Appendix \ref{app:real_valued_pdfs}.
\else
\cite[Appendix G]{Morsi_storage_arxiv}.
\fi
\label{coro:real_valued_pdfs}
\end{remark}

\section{Finite-Capacity Energy Buffer}
\label{s:Finite_buffer}
Studying a finite-size energy buffer is of interest if the average harvested energy is in the same order of magnitude as the capacity of the energy buffer. This can be the case when a supercapacitor is used as an energy buffer since supercapacitors are characterized by a very small energy density compared to rechargeable batteries \cite{Culler_2005}. On the other hand, supercapacitors are advantageous over rechargeable batteries since they provide a faster charging rate, a longer cycle lifetime, and a higher storage efficiency \cite{Culler_2005},\cite{Kansal_2007} and can be integrated on chip, see \cite{supercapacitor_on_chip}. 
 \vspace{-0.4cm}
\subsection{Existence of a Limiting Energy Distribution}
\begin{theorem}  \normalfont
For the storage processes in (\ref{eq:general_storage_equation_BE}) and (\ref{eq:general_storage_equation_OO}), if the energy buffer has a  finite size $K$, and the EH process $\{X(i)\}$ is characterized by a distribution with an infinite positive tail, then the process $\{B(i)\}$ is a stationary and ergodic process which possesses a unique stationary distribution $\pi$ that has a density on $(0,K)$ and an atom at $K$. Furthermore, the process converges in total variation to the limiting distribution $\pi$ from any initial distribution. 
\label{theo:limiting_dist_Finite}
\end{theorem}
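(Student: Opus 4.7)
The plan is to invoke classical Doeblin / Meyn--Tweedie ergodic theory by exhibiting $\{K\}$ as an accessible atom that dominates the one-step kernel uniformly in the state variable. Because the state space $[0,K]$ is compact and the EH distribution has an infinite positive tail, so that $\bar F(x)>0$ for every finite $x$, the argument should be considerably cleaner than in the infinite-buffer case: no drift or Foster--Lyapunov step is needed, and uniform ergodicity will essentially come for free.

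The first step is to establish a uniform Doeblin minorisation. For the best-effort kernel, inspection of (\ref{eq:general_storage_equation_BE}) gives
\[
P(b,\{K\}) \;=\; \pr\bigl([b-M]^+ + X(i) \ge K\bigr) \;=\; \bar F\bigl(K-[b-M]^+\bigr) \;\ge\; \bar F(K) \;>\; 0,
\]
uniformly in $b\in[0,K]$, and the analogous bound $P(b,\{K\})\ge \bar F(K)$ for the on-off kernel follows from (\ref{eq:general_storage_equation_OO}) by a short case split ($b\le M$ versus $b> M$). Hence in both cases $P(b,\cdot) \ge \bar F(K)\,\delta_K(\cdot)$ for every $b\in[0,K]$, which is Doeblin's condition with minorising measure $\delta_K$ and constant $\epsilon=\bar F(K)>0$. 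In particular, the whole state space is a small set.

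From this I would invoke the uniform-ergodicity theorem (e.g.\ Meyn--Tweedie, Theorem~16.0.2), which delivers in a single stroke: $\psi$-irreducibility with $\psi=\delta_K$, aperiodicity (since $P(K,\{K\})\ge\bar F(M)>0$), positive Harris recurrence, the existence of a unique invariant probability measure $\pi$, and the claimed (in fact geometric) convergence in total variation from any initial distribution. Stationarity and ergodicity of $\{B(i)\}$ under $\pi$ then follow from the Harris ergodic theorem in \cite{Meyn_Tweedie}.

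It remains to verify the structural claim that $\pi$ has a Lebesgue density on $(0,K)$ and a single atom at $K$. For this I would decompose the one-step kernel as
\[
P(b,dx)\;=\;p(b,x)\,dx \;+\; P(b,\{K\})\,\delta_K(dx) \qquad \text{on }[0,K],
\]
where $p(b,\cdot)$ is the Lebesgue density of $[b-M]^+ + X(i)$ in the best-effort case, respectively of $b - M\mathds{1}_{b>M} + X(i)$ in the on-off case, restricted to $[0,K)$, and therefore inherits absolute continuity directly from the density $f$ of the EH process. Substituting this decomposition into $\pi=\pi P$ and applying Fubini, the restriction $\pi|_{(0,K)}$ is forced to be absolutely continuous with respect to Lebesgue measure, while the remaining mass collects at $K$; the absence of an atom at $0$ or at any other interior point follows from $\pr(X(i)=c)=0$ for every $c$. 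The main obstacle I anticipate is keeping this last structural step rigorous --- in particular, ruling out hidden point masses inside $(0,K)$ --- but the fact that every transition kernel admits the explicit continuous-plus-Dirac decomposition above forces the same decomposition on any invariant measure.
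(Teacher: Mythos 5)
Your proposal is correct, and it takes a genuinely different route from the paper's proof. The paper argues via the accessible atom at $K$: it defines $\phi$ concentrated on $\{K\}$, deduces $\phi$- and hence $\psi$-irreducibility, asserts that all sets are reachable in finite mean time to get positive Harris recurrence (citing Propositions 9.1.1 and 10.2.2 of \cite{Meyn_Tweedie}), and then invokes Theorem 13.3.3 there for total-variation convergence, following the random-walk-on-a-half-line template. You instead observe the \emph{uniform} one-step minorisation $P(b,\cdot)\ge\bar F(K)\,\delta_K(\cdot)$ for every $b\in[0,K]$ --- which I have checked holds for both kernels, since $K-[b-M]^+\le K$ in the best-effort case and $K-b+M\mathds{1}_{b>M}\le K$ in the on-off case --- so the whole state space is a small set and Doeblin's condition holds. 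This is cleaner and strictly stronger: it replaces the paper's somewhat informal ``all energy sets are reachable a.s.\ in finite mean time'' step with a one-line computation, and it yields \emph{uniform geometric} ergodicity with explicit rate $(1-\bar F(K))^n$ rather than bare total-variation convergence. You also supply something the paper's proof leaves implicit: the verification that $\pi$ decomposes into a Lebesgue density on $(0,K)$ plus a single atom at $K$, via the continuous-plus-Dirac decomposition of the kernel substituted into $\pi=\pi P$; this does require that $\{X(i)\}$ admit a pdf $f$ (so that $\pr(X(i)=c)=0$), which is slightly more than the theorem's literal hypothesis of an infinite positive tail but is assumed throughout the paper, so the argument is consistent with the intended setting. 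The one point worth stating explicitly if you write this up is that Doeblin's condition already implies $\delta_K$-irreducibility and (together with $P(K,\{K\})\ge\bar F(M)>0$) aperiodicity, so the hypotheses of the uniform-ergodicity theorem are met without separate verification.
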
\vspace{-0.3cm}
\ifARXIV
\begin{proof} The proof is provided in \cite[Appendix D]{Morsi_ICC2015} and in Appendix \ref{app:limiting_dist_Finite_existence_uniqueness_BE_OO}. \end{proof}
\else
\begin{proof} The proof is provided in \cite[Appendix D]{Morsi_ICC2015} and \cite[Appendix H]{Morsi_storage_arxiv}. 
\end{proof}
\fi
\vspace{-0.4cm}
\subsection{Best-Effort Policy with a Finite-Size Energy Buffer}
\label{ss:BE_scheme_Finite}
\begin{theorem}  
\label{theo:BE_Finite_intgeral_eqn}\normalfont
Consider the storage process $\{B(i)\}$ for the best-effort policy in (\ref{eq:general_storage_equation_BE}) and a finite buffer size $K$. Let $g(x)$ be the limiting pdf of the stored energy on $(0,K)$ and $\pi(K)$ be the limiting probability of a full buffer (i.e., the atom at $K$). If $f(x)$ and $\bar{F}(x)$ are respectively the pdf and the ccdf of the EH process $\{X(i)\}$, then, $g(x)$ and $\pi(K)$ must jointly satisfy\vspace{-0.1cm}
\begin{numcases}{\hspace{-0.3cm}g(x)\!=\!\! \label{eq:g_integral_eqn_finite_BE}}
\hspace{-0.1cm}f(x)\!\! \int\limits_{u=0}^{M}\!\! g(u) \dd u \!+\!\! \int\limits_{u=M}^{M+x}\!\! f(x-u+M) g(u)\dd u, & \hspace{-0.3cm}$0\!\leq \!x<\!K\!-\! M $ \label{eq:parta_BE}\\
\hspace{-0.1cm}f(x)\!\!\int\limits_{u=0}^{M} \!\! \!\!g(u) \dd u\! +\!\!\!\! \int\limits_{u=M}^{K}\!\!\!\! f(x\!-\! u\!+\! M) g(u)\dd u \!+\! \pi(K) f(x\!-\! K\!+\! M), & \hspace{-0.3cm}$K\!-\!M\!\leq \! x\!<\!K$, \label{eq:partb_BE}
\end{numcases}
\begin{equation}
\pi(K)\!=\!\frac{\left[\bar{F}(K)\int\limits_{u=0}^{M} g(u) \dd u + \!\!\int\limits_{u=M}^{K} \bar{F}(K-u+M) g(u)\dd u \right]}{1-\bar{F}(M)},
\label{eq:partc_BE}
\end{equation}
and the unit area condition $\int_{0}^{K} g(u) \dd u + \pi(K)=1.
$
\end{theorem}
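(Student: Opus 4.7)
The plan is to derive the integral equation directly from the stationarity of the limiting distribution guaranteed by Theorem~\ref{theo:limiting_dist_Finite}. Since $\{B(i)\}$ converges to $\pi$, in the limit $B(i)$ and $B(i+1)$ have the same distribution, i.e., if $B(i)$ has density $g(\cdot)$ on $(0,K)$ and mass $\pi(K)$ at $K$, then so does $B(i+1)$. Using (\ref{eq:general_storage_equation_BE}), I will express the distribution of $B(i+1)$ by conditioning on $B(i)$, splitting the state space into three regions that correspond to the three distinct dynamics of the best-effort policy: (i) $B(i)\in[0,M)$, where $[B(i)-M]^+=0$ and thus $B(i+1)=\min(X(i),K)$ is driven purely by the EH input; (ii) $B(i)\in(M,K)$, where $B(i+1)=\min(B(i)-M+X(i),K)$; and (iii) the atom $B(i)=K$, which contributes $B(i+1)=\min(K-M+X(i),K)$ with probability mass $\pi(K)$.

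To obtain the density expression (\ref{eq:g_integral_eqn_finite_BE}), I would compute, for a target value $x\in(0,K)$, the total contribution of each region to the density at $x$. Region (i) contributes $f(x)\int_0^M g(u)\,\dd u$ because the conditional density of $\min(X(i),K)$ at any $x<K$ equals $f(x)$. Region (ii) contributes $\int g(u)f(x-u+M)\,\dd u$ by a change of variable, where the range of $u$ is determined by requiring both $u\in(M,K)$ and $x-u+M\ge 0$, i.e., $u\in(M,\min(K,x+M))$. This is exactly where the split into the two subcases $0\le x< K-M$ and $K-M\le x<K$ arises: in the first subcase the upper limit is $x+M$ and the atom at $K$ cannot land at $x$, while in the second subcase the upper limit is $K$ and the atom contributes the extra term $\pi(K)f(x-K+M)$ from region (iii).

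For the atom at $K$, I will collect all ways of reaching the buffer-full state in one step. Region (i) contributes $\bar F(K)\int_0^M g(u)\,\dd u$ (the EH input exceeds $K$), region (ii) contributes $\int_M^K \bar F(K-u+M)g(u)\,\dd u$, and region (iii) contributes the self-loop term $\pi(K)\bar F(M)$, since starting from $K$ the buffer remains full iff $X(i)\ge M$. Adding these and solving the resulting linear equation for $\pi(K)$ yields (\ref{eq:partc_BE}). Finally, the unit-area condition $\int_0^K g(u)\,\dd u + \pi(K)=1$ follows from the fact that $\pi$ is a probability measure on $[0,K]$.

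The main bookkeeping obstacle is to keep track of the boundary of region (ii), especially to recognize that the atom at $u=K$ must be treated separately from the density piece on $(M,K)$, and that this atom only influences the density $g(x)$ for $x\in[K-M,K)$ (where $x-K+M\ge 0$ makes $f(x-K+M)$ nonzero). A secondary subtlety is verifying that no additional atoms appear in the interior of $(0,K)$: this is ensured because the EH distribution has a density, so each of the three conditional transitions above has a conditional density on $(0,K)$, justifying the pointwise balance equation rather than a more general measure-theoretic formulation.
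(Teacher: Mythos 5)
Your proposal is correct and follows essentially the same route as the paper's proof: conditioning on $B(i)=u$ in the three regimes $u\le M$, $M<u<K$, and the atom $u=K$, using stationarity to equate the distributions of $B(i)$ and $B(i+1)$, with the case split at $x=K-M$ arising from the constraint $u\le\min(K,x+M)$ and the denominator $1-\bar F(M)$ arising from the self-loop $\pi(K)\bar F(M)$ at the full-buffer state. Your write-up actually makes the self-loop term and the balance equation for the atom more explicit than the paper's appendix does, but the argument is the same.
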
\vspace{-0.3cm}
\ifARXIV
\begin{proof} The proof is provided in \cite[Proof of Theorem 5]{Morsi_ICC2015} and in Appendix \ref{app:BE_Finite_intgeral_eqn}. \end{proof}
\else
\begin{proof} The proof is provided in \cite[Proof of Theorem 5]{Morsi_ICC2015} and \cite[Appendix I]{Morsi_storage_arxiv}. \end{proof}
\fi 
Next, we consider the case when the DL channel is Nakagami-$m$ block fading., i.e., the EH process is Gamma distributed.
\begin{corollary}  \normalfont
Consider the storage process $\{B(i)\}$ for the best-effort policy in (\ref{eq:general_storage_equation_BE}) and a finite buffer of size $K\!=\!lM\!+\!\Delta$, where $l\!\in\!\mathbb{Z}^+$ and $0\!\leq\!\Delta\! <\! M$. If the EH process $\{X(i)\}$ is i.i.d. Gamma distributed with an integer shape parameter $m\in\{1,2,\ldots\}$ and pdf $f(x)\!=\!\frac{\lambda^m}{\Gamma(m)}x^{m-1}\e^{-\lambda x}$, where $\lambda\!\!=\!\!m/\bar{X}$, then the stored energy has a limiting pdf which can be obtained in stripes of width $M$, cf. Fig. \ref{fig:g_x_stripes}. In particular, $g(x)\!=\!g_n(x),\, K\!-\!(n\!+\!1)M\!\leq\! x\!<\!K\!-\!nM,\, n\!=\!0,\ldots,l'$, with, $l'\!=\!l-1$ if
$\Delta\!=\!0$ and $l'\!=\!l$ if $\Delta\!\neq\!0$, where $g_n(x)$ is given by
\begin{equation}
g_n(x)\!=\!\e^{\!\!-\lambda x}\sum\limits_{r=0}^{m-1}\!\frac{\alpha_r}{r!}\Biggg[\!\!
\sum\limits_{q=0}^{n}\!\!\lambda^{(q+1)m-r}\e^{\!-\lambda M q}\frac{ (qM\!+\!x\!-\!K)^{(q+1)m-r-1}}{((q+1)m-r-1)!}
 - \!\sum\limits_{q=1}^{n}\!\!\lambda^{qm}\e^{\!-\lambda M q}\frac{ (qM\!+\!x\!-\!K)^{qm-1}}{(qm-1)!}\Biggg], 
  \label{eq:pdf_Nakagami_m}\vspace{-0.5cm}
\end{equation}
\ifARXIV\\[-2ex]
\noindent and the probability of a full buffer $\pi(K)$ is given by \vspace{-0.2cm}
\else
and the probability of a full buffer $\pi(K)$ is given by
\fi
\begin{equation}
\pi(K)=\e^{-\lambda K}\sum\limits_{r=0}^{m-1}\frac{\alpha_r}{r!},
\label{eq:Pi_K_Nakagami}
\end{equation}
\begin{figure}[!thp] 
\centering
\includegraphics[width=1\textwidth]{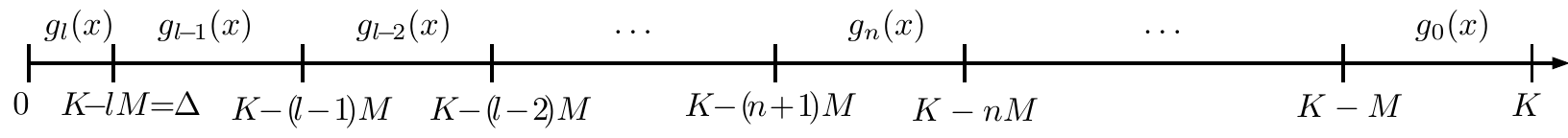}
 \caption{The energy distribution is obtained in stripes of width $M$ for an energy buffer of size $K\!=\!lM\!+\!\Delta$, $l\in\mathbb{Z}^+,\,0\!\leq\!\Delta\! <\! M$, see (\ref{eq:pdf_Nakagami_m}) and (\ref{eq:pdf_Nakagami_OO}) for the best-effort and the on-off policies, respectively.\vspace{-0.7cm}}
\label{fig:g_x_stripes}
\end{figure}
where the coefficients $\alpha_r$, $r=0,\ldots,m-1$, are obtained by solving the  system of linear equations $\left(\V{I}+\V{A}\right)\V{\alpha}=\V{1}_m$, where $\V{\alpha}=[\alpha_0,\ldots,\alpha_{m-1}]^T$, and $\V{A}$ is an $m\times m$ matrix whose entry in the $s^{\text{th}}$  row and the $r^{\text{th}}$ column, for $s,r=0,\ldots,m-1$, is given by
\begin{equation}
\begin{aligned}
\!&A_{sr}\!=\!\frac{(1\!-\!K^s\lambda^s)}{r!}\!\!\sum\limits_{q=0}^{l}\!\!\e^{-\lambda M q}\!\!\!\!\!\sum\limits_{t=qm}^{(q+1)m-r-1}\!\!\!\frac{\left(\lambda(qM\!-\!K)\right)^t}{t!}\!+\!K^s\frac{\lambda^s}{r!}\sum\limits_{q=0}^{l-1}\!\e^{-\lambda M (q+1)}\!\!\!\!\!\!\sum\limits_{t=qm}^{(q+1)m-r-1}\!\!\frac{\left(\lambda\left((q\!+\!1)M\!-\!K\right)\right)^t}{t!}\\
&\!+\!\frac{K^s\lambda^{s}}{r!}\!\sum\limits_{t=0}^s\binom{s}{t}K^{-t}t!\sum\limits_{q=0}^{l-1}\!\!\lambda^{qm}\e^{-\lambda M (q+1)}\!\left[\!\frac{\lambda^{m-r}\left((q\!+\!1)M-K\right)^{(q+1)m-r+t}}{((q+1)m-r+t)!}\!-\!\frac{\left((q\!+\!1)M\!-\!K\right)^{qm+t}}{(qm+t)!}\right].
\end{aligned}
\label{eq:asr}
\end{equation}
\label{coro:limiting_dist_Finite_Nakagami_exact_BE}  
\end{corollary}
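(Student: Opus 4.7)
My plan is to reduce the integral equations (\ref{eq:parta_BE})--(\ref{eq:partc_BE}) to a delay-differential equation by exploiting the structure of the Gamma density, and then to build the solution stripe-by-stripe starting from the top of the buffer.

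First I would substitute $h(x) := e^{\lambda x} g(x)$ and write $P_0 := \int_0^M g(u) \dd u$. After the change of variable $v = u - M$, equation (\ref{eq:parta_BE}) becomes a Volterra equation for $h$ with polynomial kernel $\lambda^m (x-v)^{m-1}/(m-1)!$ and polynomial forcing $\lambda^m x^{m-1} P_0 / (m-1)!$. Since that kernel vanishes together with its first $m-2$ derivatives at the origin while its $(m-1)$-th derivative equals the constant $\lambda^m$, differentiating $m$ times kills the polynomial forcing and yields the delay equation
\[
h^{(m)}(x) = \lambda^m e^{-\lambda M}\, h(x + M), \qquad x \in (0, K - M).
\]
Applying the same manipulation to (\ref{eq:partb_BE}) on the top stripe $[K-M, K)$, where the convolution is over the fixed interval $[M, K]$ and the atom contributes another polynomial in $x$ of degree $m-1$, gives $h^{(m)}(x) = 0$. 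Hence $h_0$ is a polynomial of degree at most $m-1$, and I would parameterize its $m$ coefficients exactly as in (\ref{eq:pdf_Nakagami_m}) with $n=0$, identifying them with $\alpha_0, \ldots, \alpha_{m-1}$. Formula (\ref{eq:Pi_K_Nakagami}) for $\pi(K)$ then drops out of (\ref{eq:partc_BE}) after substituting this parameterization and telescoping.

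Next I would propagate the solution downward by induction on $n$. On stripe $n \ge 1$, the delay equation $h_n^{(m)}(x) = \lambda^m e^{-\lambda M} h_{n-1}(x+M)$ is solved by $m$-fold integration from $x = K - nM$ downward. Because $h$ satisfies a Volterra equation with a polynomial kernel vanishing to order $m-1$ at the origin, $h$ belongs to $C^{m-1}(0, K)$; the $m$ matching conditions $h_n^{(j)}((K - nM)^-) = h_{n-1}^{(j)}((K - nM)^+)$ for $j = 0, \ldots, m-1$ then determine the $m$ constants of integration uniquely from the previous stripe. The inductive step is expected to produce the closed form (\ref{eq:pdf_Nakagami_m}): the first inner sum is the $m$-fold antiderivative assembled from the $q$ contributions accumulated as the energy crosses successive stripe interfaces, while the second inner sum is the polynomial correction that enforces $C^{m-1}$ smoothness at each interface. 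The telescoping between these two sums is what keeps the number of free parameters fixed at $m$ no matter how many stripes the buffer contains.

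Finally I would close the system for $\alpha_0, \ldots, \alpha_{m-1}$. With $\pi(K)$ already expressed via (\ref{eq:Pi_K_Nakagami}), the remaining degrees of freedom are pinned down by combining the unit-area requirement $\int_0^K g(u) \dd u + \pi(K) = 1$ with the consistency of the stripe-wise formula at $x = 0$, where $P_0$ recomputed from (\ref{eq:pdf_Nakagami_m}) must equal $\int_0^M g(u) \dd u$. Each stripe integral is an elementary antiderivative of $x^k e^{-\lambda x}$, reducible to an incomplete Gamma function. Collecting the coefficient of each $\alpha_r$ then produces the $m \times m$ linear system $(\V{I} + \V{A})\V{\alpha} = \V{1}_m$ with the entries (\ref{eq:asr}); the identity matrix $\V{I}$ comes from the $\pi(K)$ contribution to normalization, while $\V{A}$ collects the stripe integrals. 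The main obstacle will be the inductive step verifying (\ref{eq:pdf_Nakagami_m}): one must check that the double-sum ansatz, with one correction term per stripe traversed, simultaneously gives the $m$-fold antiderivative of $h_{n-1}(x+M)$ and enforces all $m$ continuity conditions at $x = K - nM$. The bookkeeping is combinatorial rather than conceptually deep, but algebraically dense; once the ansatz is verified, deriving (\ref{eq:asr}) is mechanical though lengthy incomplete-Gamma algebra.
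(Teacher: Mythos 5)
Your reduction to the delay--differential equation $h^{(m)}(x)=\lambda^m\e^{-\lambda M}h(x+M)$ is sound, and the top-down stripe construction parallels the paper's own recursion (which instead substitutes the already-known upper stripes directly into the integral $\int_{M+x}^{K}$ of (\ref{eq:parta_BE}) and evaluates). However, there is a genuine gap in your matching step: the claim that $h\in C^{m-1}(0,K)$ is false at the interface $x=K-M$. The Volterra representation that guarantees $C^{m-1}$ smoothness holds only on $(0,K-M)$; on the top stripe the governing equation (\ref{eq:partb_BE}) carries the additional atom term $\pi(K)f(x-K+M)$, which is absent from (\ref{eq:parta_BE}). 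For $m=1$ this produces a jump of size $\lambda\pi(K)$ in $g$ itself at $x=K-M$; for $m\geq 2$ it produces a jump of size $\lambda^m\e^{\lambda(K-M)}\pi(K)$ in $h^{(m-1)}$ (the lower derivatives of $(x-K+M)^{m-1}$ vanish at $x=K-M$, but the $(m-1)$-th does not). If you impose plain $C^{m-1}$ continuity at that interface, the homogeneous part of $h_1$ loses exactly the term $-\lambda^m\e^{-\lambda M}\pi(K)\e^{\lambda K}(M+x-K)^{m-1}/(m-1)!$, and since the entire second sum in (\ref{eq:pdf_Nakagami_m}) is nothing but this one atom-induced term propagated downward by successive $m$-fold integrations (note it is proportional to $\pi(K)=\e^{-\lambda K}\sum_r\alpha_r/r!$), your construction would return only the first sum and hence the wrong density. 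The fix is to read the interface condition off the difference between (\ref{eq:parta_BE}) and (\ref{eq:partb_BE}) rather than asserting smoothness; at the lower interfaces $x=K-nM$, $n\geq 2$, which are interior to $(0,K-M)$, your $C^{m-1}$ matching is indeed correct.

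A secondary mismatch: your closure conditions (unit area plus consistency of the recomputed $P_0=\int_0^Mg$ at $x=0$, i.e.\ $h^{(j)}(0)=0$ for $j\leq m-2$ and $h^{(m-1)}(0)=\lambda^mP_0$) are a legitimate way to pin down $\V{\alpha}$, but they do not literally produce the matrix (\ref{eq:asr}). The paper obtains (\ref{eq:asr}) from a different pair of conditions: the self-consistency of the definition $\alpha_s=\lambda^s\bigl(K^sI_1+I_2(s)+M^s\pi(K)\e^{-\lambda(M-K)}\bigr)$ with $I_1,I_2(s)$ recomputed from the derived stripes (giving the homogeneous part $-d_{sr}$), added to the unit-area condition (giving $a_r$), so that $A_{sr}=a_r-d_{sr}$ and the identity matrix comes from the $\alpha_s$ on the left of the self-consistency relation, not from the $\pi(K)$ contribution to normalization as you suggest. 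To prove the corollary as stated you would still need to show your system is equivalent to $(\V{I}+\V{A})\V{\alpha}=\V{1}_m$ with the specific entries (\ref{eq:asr}).
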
\vspace{-1.2cm}
\ifARXIV
\begin{proof} The proof is provided in Appendix \ref{app:limiting_dist_Finite_Nakagami_exact_BE}. \end{proof}
\else
\begin{proof} The proof is provided in Appendix \ref{app:limiting_dist_Finite_Nakagami_exact_BE}  and is given in more detail in \cite[Appendix J]{Morsi_storage_arxiv}. \end{proof}
\fi\vspace{-0.4cm}
\subsection{On-Off Policy with a Finite-Size Energy Buffer}
\label{ss:OO_scheme_Finite}
\begin{theorem}  \normalfont
\label{theo:OO_Finite_intgeral_eqn}
Consider the storage process $\{B(i)\}$ for the on-off policy in (\ref{eq:general_storage_equation_OO}) and a finite buffer size\footnote{Note that if $K<2M$, then (\ref{eq:parta}) is valid for $0<x<K-M$, (\ref{eq:partb}) does not exist, and (\ref{eq:partc}) remains  unchanged.} 
 $K >2M$. Let $g(x)$ be the limiting pdf of the stored energy on $(0,K)$ and $\pi(K)$ be the limiting probability of a full buffer (i.e., the atom at $K$). If $f(x)$ and $\bar{F}(x)$ are respectively the pdf and the ccdf of the EH process $\{X(i)\}$, then, $g(x)$ and $\pi(K)$ must jointly satisfy\vspace{-0.1cm}
\begin{numcases}{\hspace{-1.5cm}g(x)\!=\!\! \label{eq:g_integral_eqn_finite}}
\hspace{-0.1cm}\int\limits_{u=0}^{x}\!\!f(\!x\!-\!u\!) g(\!u\!) \dd u +\!\!\!\! \int\limits_{u=M}^{M+x} f(\!x\!-\!u\!+\!M\!) g(\!u\!)\dd u, & $0\!\leq\! x\!<\!M $ \label{eq:parta}\\
\hspace{-0.1cm}\int\limits_{u=0}^{M}\!\!f(\!x\!-\!u\!) g(\!u\!) \dd u +\!\!\!\! \int\limits_{u=M}^{M+x} f(\!x\!-\!u\!+\!M\!) g(\!u\!)\dd u, & $M\!\leq \!x\!<\!K\!-\!M $ \label{eq:partb}\\
\hspace{-0.1cm}\int\limits_{u=0}^{M}\!\!f(\!x\!-\!u\!) g(\!u\!) \dd u +\!\!\!\! \int\limits_{u=M}^{K}\!\! f(\!x\!-\! u\!+\! M\!) g(\!u\!)\dd u \!+\! \pi(\!K\!) f(\!x\!-\! K\!+\! M\!), & $K\!-\!M\!\leq \! x\!<\!K$
\label{eq:partc}
\end{numcases}
\begin{equation}
\pi(K)\!=\!\frac{\left[\int\limits_{u=0}^{M}\bar{F}(K-u) g(u) \dd u + \!\!\int\limits_{u=M}^{K} \bar{F}(K-u+M) g(u)\dd u \right]}{1-\bar{F}(M)},
\label{eq:partd}
\end{equation}
and the unit area condition $\int_{0}^{K} g(u) \dd u + \pi(K)=1.$
\end{theorem}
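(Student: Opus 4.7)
My plan is to apply the stationary balance (Chapman--Kolmogorov) equation for the Markov chain $\{B(i)\}$ defined by (\ref{eq:general_storage_equation_OO}). Existence, uniqueness, and the structure of the limiting distribution (absolutely continuous on $(0,K)$ with a single atom at $K$) are already guaranteed by Theorem~\ref{theo:limiting_dist_Finite}, so it suffices to write down the balance equation for the invariant measure and simplify it region by region. Conditioning on the previous state $u = B(i)$ splits the transition kernel into two regimes: for $u \in [0, M]$ the node is silent and $B(i+1) = \min(u + X(i), K)$, while for $u \in (M, K]$ the node transmits power $M$ and $B(i+1) = \min(u - M + X(i), K)$. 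The atom at $K$ must be tracked separately, since from $u = K$ the transition is $\min(K - M + X(i), K)$, which either lands in the density on $[K-M, K)$ when $X(i) < M$, or returns to the atom with probability $\bar{F}(M)$.

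For the density $g(x)$ I would partition $(0, K)$ into the three sub-intervals appearing in the theorem and, in each, determine which source ranges contribute a non-overflow transition to $x$. For $0 \leq x < M$, a silent source contributes only through $u \in [0, x]$ (since $X(i) = x - u \geq 0$ is required), a transmitting source through $u \in (M, M + x]$ (since $X(i) = x - u + M \geq 0$), and the atom contributes nothing because it would require $X(i) = x - K + M < 0$; this yields (\ref{eq:parta}). For $M \leq x < K - M$, one has $x \geq M$ and $x + M < K$, so silent sources contribute over all of $[0, M]$, transmitting sources over $u \in (M, x + M] \subset (M, K)$, and the atom again contributes nothing, giving (\ref{eq:partb}). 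For $K - M \leq x < K$, silent sources still contribute over $[0, M]$, transmitting sources now contribute over the full $(M, K]$ because $x + M \geq K$, and the atom adds the extra term $\pi(K) f(x - K + M)$, giving (\ref{eq:partc}).

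For the atom itself I would collect the transitions that reach $B(i+1) = K$: silent sources overflow with probability $\bar{F}(K - u)$, transmitting sources with probability $\bar{F}(K - u + M)$, and the atom is retained with probability $\bar{F}(M)$. Setting $\pi(K)$ equal to the corresponding sum, transferring the self-loop term $\pi(K) \bar{F}(M)$ to the left, and dividing by $1 - \bar{F}(M)$ yields (\ref{eq:partd}); the unit-area condition is simply the normalization of the invariant measure. The main obstacle is the bookkeeping of the integration limits in the three density regions --- in particular, verifying that the transmitting-source integrand is integrated over the full $(M, K]$ in the top region (since $x + M \geq K$) but only over $(M, x + M]$ in the middle region (since $x + M < K$), and that no atom contribution appears in the first two regions. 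Beyond that, the argument closely parallels the proofs of Theorem~\ref{theo:OO_integral_eqn_infinite} (on-off, infinite buffer) and Theorem~\ref{theo:BE_Finite_intgeral_eqn} (best-effort, finite buffer), which already establish analogous balance equations in simpler settings.
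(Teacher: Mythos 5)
Your proposal is correct and follows essentially the same route as the paper's proof: writing the stationary balance equation conditioned on the previous state $u=B(i)$, splitting into the silent regime $u\leq M$, the transmitting regime $u>M$, and the atom at $K$, and then tracking the integration limits region by region (with the atom's self-loop term $\pi(K)\bar{F}(M)$ moved to the left to produce the $1-\bar{F}(M)$ denominator in (\ref{eq:partd})). No gaps.
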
\vspace{-0.4cm}
\ifARXIV
\begin{proof} The proof is provided in \cite[Proof of Theorem 5]{Morsi_Asilomar2014} and in Appendix \ref{app:OO_Finite_intgeral_eqn}. \end{proof}
\else
\begin{proof} The proof is provided in \cite[Proof of Theorem 5]{Morsi_Asilomar2014} and \cite[Appendix K]{Morsi_storage_arxiv}. \end{proof}
\fi
\begin{corollary}  \normalfont
Consider the storage process $\{B(i)\}$ for the on-off policy in (\ref{eq:general_storage_equation_OO}) and a finite buffer of size $K=lM$ with $l\in\{3,4,\ldots\}$ and a Nakagami-$m$ fading DL channel, i.e., an i.i.d. Gamma distributed EH process $\{X(i)\}$ with an integer shape parameter $m\in\{1,2,\ldots\}$ and pdf $f(x)\!=\!\frac{\lambda^m}{\Gamma(m)}x^{m-1}\e^{-\lambda x}$, where $\lambda\!\!=\!\!m/\bar{X}$. Then, the stored energy has a limiting pdf which can be obtained in stripes of width $M$, cf. Fig. \ref{fig:g_x_stripes} (at $\Delta=0$). In particular, 
 \small
\begin{numcases}{\hspace{-0.2cm}g(x)\!=\!\!\label{eq:pdf_Nakagami_OO}}
\begin{aligned}\!\!\!g_{l\!-\!1}(x)\!=\!\e^{-\lambda x}\!\!\sum\limits_{r=0}^{m-1}\!\frac{\alpha_r}{r!}\Biggg[\!\!&\sum\limits_{q=0}^{l-2}\!\lambda^{(q+2)m-r}\e^{\!-\lambda M(q+1)}\!\!\sum\limits_{t=0}^{(q\!+\!1)m\!-\!r\!-\!1}\!\!\frac{((q\!+\!1)M\!-\!K)^{(q\!+\!1)m\!-\!r\!-\!1\!-\!t}}{((q\!+\!1)m\!-\!r\!-\!1\!-\!t)!(m\!+\!t)!}C(x,t)\\[-1ex]
&\!\!\!-\sum\limits_{q=1}^{l-2}\lambda^{(q+1)m}\e^{-\lambda M(q+1)}\sum\limits_{t=0}^{qm\!-\!1}\frac{((q\!+\!1)M\!-\!K)^{qm\!-\!1\!-\!t}}{(qm\!-\!1\!-\!t)!(m\!+\!t)!}C(x,t)\Biggg],\end{aligned} & 
\hspace{-2.4cm} $0\!\leq \!x<\! M$ \label{eq:g1_OO}\\
\!\!\!g_n(x)\!=\!\e^{\!-\lambda x}\!\!\sum\limits_{r=0}^{m-1}\!\frac{\alpha_r}{r!}\Biggg[\!\!\!
\sum\limits_{q=0}^{n}\!\!\lambda^{(q\!+\!1)m-r}\e^{\!-\lambda M q}\frac{ (qM\!+\!x\!-\!K)^{(q\!+\!1)m\!-\!r\!-\!1}}{((q\!+\!1)m\!-\!r\!-\!1)!}
 -\!\sum\limits_{q=1}^{n}\!\!\lambda^{qm}\e^{\!-\lambda M q}\frac{ (qM\!+\!x\!-\!K)^{qm\!-\!1}}{(qm\!-\!1)!}\Biggg],   &\nonumber \\[-1ex]
\hspace{9.4cm} \begin{aligned}
  &&[K\!-\!(n\!+\!1)M]^+\leq x<K\!-\!nM  \\[-2ex] 
  &&n\!=\!0,\ldots,l-2,
\end{aligned} & \label{eq:g2_OO}\end{numcases} 

 \normalsize 
 where
 \small
 \begin{equation}
 C(x,t)=\left(x^{m+t}+\frac{1}{m}\sum\limits_{k=0}^{m-1}\Re\Big\{(\lambda\e^{\J\eta_k})^{-(m+t)}\e^{\lambda x \e^{\J\eta_k}}\gamma\Big(m+t+1,\lambda x \e^{\J\eta_k}\Big)\Big\}\right),
\label{eq:C_x_t}
\end{equation}
\normalsize
  and $\eta_k=\frac{2\pi k}{m}$. The probability of a full buffer  $\pi(K)$ is given by\vspace{-0.2cm}
 \small
\begin{equation}
\pi(K)=\e^{-\lambda K}\sum\limits_{r=0}^{m-1}\frac{\alpha_r}{r!},
\label{eq:Pi_K_Nakagami_OO}
\end{equation}
\normalsize
where $\alpha_r$, $r=0,\ldots,m-1$, is obtained by solving the non-homogeneous system of linear equations $\left(\V{I}+\V{A}\right)\V{\alpha}=\V{1}_m$, where $\V{\alpha}=[\alpha_0,\ldots,\alpha_{m-1}]^T$, and $\V{A}$ is an $m\times m$ matrix whose entry in the $s^{\text{th}}$  row and the $r^{\text{th}}$ column, for $s,r=0,\ldots,m-1$, is given by
\small 
\begin{equation}
\begin{aligned}
A_{sr}&=\frac{1}{r!}\Biggg\{\sum\limits_{q=0}^{l-2}\lambda^{(q+2)m-r}\e^{-\lambda M(q+1)}\sum\limits_{t=0}^{(q+1)m-r-1}\frac{((q+1)M-K)^{(q+1)m-r-1-t}}{((q+1)m-r-1-t)!}\Big(H(t)-\lambda^sF(t,s)\Big)\\[-1ex]
&-\sum\limits_{q=1}^{l-2}\lambda^{(q+1)m}\e^{-\lambda M(q+1)}\sum\limits_{t=0}^{qm-1}\frac{((q+1)M-K)^{qm-1-t}}{(qm-1-t)!}\Big(H(t)-\lambda^sF(t,s)\Big)\\[-1ex] 
&+\sum\limits_{q=0}^{l-2}\e^{-\lambda M(q+1)}\sum\limits_{t=qm}^{(q+1)m-r-1}\frac{\lambda^t((q+1)M-K)^{t}}{t!}\\[-1ex]
&+\lambda^sK^s\sum\limits_{q=0}^{l-2}\lambda^{qm}\e^{-\lambda M (q+1)}\sum\limits_{t=0}^s\binom{s}{t}K^{-t}t!\left[\frac{\lambda^{m-r}\left((q+1)M-K\right)^{(q+1)m-r+t}}{((q+1)m-r+t)!}-\frac{\left((q+1)M-K\right)^{qm+t}}{(qm+t)!}\right]
&\Biggg\},
\end{aligned}
\label{eq:asr_OO}
\end{equation}
\normalsize
where
\small
\begin{equation}
\begin{aligned}
&F(t,s)=\sum\limits_{b=0}^{s}\binom{s}{b}(K-M)^{s-b}\Biggg[\frac{b! M^{m+t+b+1}}{(m+t+b+1)!}+\frac{1}{m}\sum\limits_{k=0}^{m-1}\Re\Bigg\{\e^{\lambda M\e^{\J\eta_k}} \big(\lambda \e^{\J\eta_k}\big)^{-(m+t+b+1)}\gamma\big(b+1,\lambda M\e^{\J\eta_k}\big)\\[-1ex]
&-b!\sum\limits_{w=0}^{m+t} \big(\lambda \e^{\J\eta_k}\big)^{w-(m+t)}\frac{M^{w+b+1}}{(w+b+1)!}\Bigg\}\Biggg],
\end{aligned}
\label{eq:F_t_s}
\end{equation}
\small
\begin{flalign}
\hspace{-2cm}H(t)=\lambda^{-(m+t+1)}\Biggg[\frac{\gamma\Big(m+t+1,\lambda M\Big)}{(m+t)!}+\frac{1}{m}\sum\limits_{k=0}^{m-1}\Re\Big\{\rho_k-\sum\limits_{w=0}^{m+t}\frac{\e^{\J\eta_k(w-(m+t))}}{w!}\gamma\Big(w+1,\lambda M\Big)\Big\}\Biggg],
\label{eq:H_T}
\end{flalign}
and $\rho_k=
\begin{cases}
\lambda M & k=0\\
\e^{-\J\eta_k(m+t)}(1-\e^{\J\eta_k})^{-1}(1-\e^{-\lambda M(1-\e^{\J\eta_k})}))&k\neq 0
\end{cases}
$.\normalsize
\label{coro:limiting_dist_Finite_Nakagami_exact_OO}  
\end{corollary}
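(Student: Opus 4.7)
The plan is to extend the stripe-wise strategy of Corollary \ref{coro:limiting_dist_Finite_Nakagami_exact_BE} to the on-off policy, solving the piecewise integral equation (\ref{eq:g_integral_eqn_finite}) in slabs of width $M$ by working backward from the atom at $x=K$. For the stripes indexed by $n=0,\ldots,l-2$, equation (\ref{eq:partb}) (together with (\ref{eq:partc}) and the atom balance (\ref{eq:partd}) for $n=0$) has the same ``polynomial-times-$\e^{-\lambda x}$ is preserved by the kernel'' structure as in the best-effort case. First I would initialize at the top stripe $[K-M,K)$ by substituting the Gamma pdf into (\ref{eq:partc}); since $f(x-K+M)$ is a degree-$(m-1)$ polynomial in $x$ times a pure exponential, and (\ref{eq:partd}) ties $\pi(K)$ linearly to the sought-after integrals, this produces an ansatz parameterised by the $m$ unknowns $\alpha_0,\ldots,\alpha_{m-1}$ multiplying $\e^{-\lambda x}$. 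Each subsequent downward step uses (\ref{eq:partb}): substituting the previous $g_{n-1}$ and the Gamma pdf into the integral kernel leaves only elementary polynomial integrals against $\e^{-\lambda u}$, and a straightforward induction on $n$ reproduces the claimed form (\ref{eq:g2_OO}), with coefficient algebra essentially identical to that of the best-effort proof.

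The main obstacle is the first stripe $[0,M)$. Equation (\ref{eq:parta}) differs from the middle-stripe equation by replacing the constant lower limit $M$ with the variable $x$ in the first integral, introducing the genuine Volterra convolution $\int_0^x f(x-u)g_{l-1}(u)\dd u$ of the unknown $g_{l-1}$ with itself. Together with the forcing term $\int_M^{M+x}f(x-u+M)g_{l-2}(u)\dd u$, already known from the inductive step, this is a Volterra integral equation of the second kind for $g_{l-1}$. My plan is to solve it by the same roots-of-unity technique used in the proof of Corollary \ref{theo:stationary_dist_infinite_Nakagami_OO}: the Gamma kernel has Laplace transform $\lambda^m/(s+\lambda)^m$, so $1-\hat{f}(s)$ vanishes at the $m$ values $s=-\theta_k$ with $\theta_k=\lambda(1-\e^{\J\eta_k})$ and $\eta_k=2\pi k/m$. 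Expanding the resolvent against these roots and integrating term by term produces integrals of the type $\int_0^x u^{m+t}\e^{-\lambda u(1-\e^{\J\eta_k})}\dd u$, each of which is, up to the prefactor $(\lambda\e^{\J\eta_k})^{-(m+t+1)}\e^{\lambda x\e^{\J\eta_k}}$, the lower incomplete Gamma $\gamma(m+t+1,\lambda x\e^{\J\eta_k})$. Taking the real part, averaging over the $m$ roots, and adding the already-present polynomial contribution $x^{m+t}$ yields exactly the auxiliary kernel $C(x,t)$ of (\ref{eq:C_x_t}) and hence the closed form (\ref{eq:g1_OO}) for $g_{l-1}$.

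To close the system I would derive (\ref{eq:Pi_K_Nakagami_OO}) by evaluating the atom equation (\ref{eq:partd}) with the Nakagami-$m$ ccdf $\bar{F}(x)=\e^{-\lambda x}\sum_{n=0}^{m-1}(\lambda x)^n/n!$; the prefactor $\e^{-\lambda K}$ factors cleanly and leaves the claimed $\sum_{r=0}^{m-1}\alpha_r/r!$ combination. The linear system $(\V{I}+\V{A})\V{\alpha}=\V{1}_m$ then follows by substituting the closed-form $g_{l-1}$ back into (\ref{eq:parta}) together with the normalization $\int_0^K g(u)\dd u+\pi(K)=1$, and matching polynomial coefficients of $x^s$ for $s=0,\ldots,m-1$; this yields $m$ independent linear relations in the $\alpha_r$. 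The bracketed polynomial sums in (\ref{eq:asr_OO}) arise from the polynomial-times-exponential contributions of the stripes $n=0,\ldots,l-2$, while the auxiliary constants $H(t)$ and $F(t,s)$ in (\ref{eq:H_T})--(\ref{eq:F_t_s}) come from integrating the $C(x,t)$ portion of $g_{l-1}$ against $1$ and against polynomial weights $(x-M)^b$ over $[0,M]$, respectively. I anticipate that no new structural obstacle arises beyond the roots-of-unity inversion of the first-stripe Volterra equation; the remainder of the work is the algebraic bookkeeping of the nested sums indexed by $(q,t,k,r,s)$.
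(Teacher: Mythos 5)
Your proposal follows essentially the same route as the paper's proof: reuse the best-effort stripe recursion for $g_n$, $n=0,\ldots,l-2$, and for $\pi(K)$ (since (\ref{eq:partb})--(\ref{eq:partd}) differ from the best-effort equations only in the first term, which affects only the values of the $\alpha_r$, not the form of the recursion); treat the first stripe $[0,M)$ as a Volterra equation of the second kind whose resolvent is the roots-of-unity kernel, yielding $C(x,t)$ and (\ref{eq:g1_OO}); and close the system with $m$ linear relations plus the normalization, with $F(t,s)$ and $H(t)$ arising from integrating the $C(u,t)$ part of $g_{l-1}$ over $[0,M]$ against $(K-u)^s\e^{\lambda u}$ and against $1$, respectively.

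One step is misdirected, though not fatally: you claim the linear system $(\V{I}+\V{A})\V{\alpha}=\V{1}_m$ follows by ``substituting the closed-form $g_{l-1}$ back into (\ref{eq:parta}) and matching coefficients of $x^s$.'' Equation (\ref{eq:parta}) is satisfied identically by construction for \emph{any} choice of the $\alpha_r$, since $g_{l-1}$ is obtained as the exact resolvent solution of the Volterra equation derived from it; substituting back there yields $0=0$ and no constraints. The $m$ consistency relations instead come from the top stripe and the atom, i.e., from requiring that the coefficients $\alpha_s=\lambda^s\bigl[I_1(s)+I_2(s)+M^s\pi(K)\e^{-\lambda(M-K)}\bigr]$, introduced as shorthand for integrals of $g$ in (\ref{eq:partc})--(\ref{eq:partd}), reproduce themselves when $I_1(s)=\int_0^M(K-u)^s\e^{\lambda u}g_{l-1}(u)\,\dd u$ and $I_2(s)$ are evaluated with the derived stripe expressions. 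This is where the closed form of $g_{l-1}$ (and hence $F(t,s)$) actually enters the system; with that correction the rest of your plan goes through as in the paper.
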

\ifARXIV
\begin{proof} The proof is provided in Appendix \ref{app:limiting_dist_Finite_Nakagami_exact_OO_arxiv}. \end{proof}
\else
\begin{proof}
The proof is provided in Appendix \ref{app:limiting_dist_Finite_Nakagami_exact_OO_short} and is given in more detail in \cite[Appendix L]{Morsi_storage_arxiv}.
\end{proof}
\fi
\begin{remark}
Although the energy distributions obtained in Corollaries  \ref{theo:stationary_dist_infinite_Nakagami}--\ref{coro:limiting_dist_Finite_Nakagami_exact_OO} seem quite involved, they are easy to implement and have a low computational complexity. The reason for this is three-fold. First, a matrix inversion is required only  for matrices of size $m\!\times \!m$, where $m$ is usually a small number, e.g., up to $m=4$ for realistic Nakagami-$m$ fading channels. This is unlike the case of discrete-state Markov chains, see \cite{DTS_WPC_Energy_accumulation_2015}, where even for Rayleigh fading, i.e., $m=1$,  matrix inversions of matrices of size $L\times L$ are required, where $L$ is the number of discrete energy states which needs to be large (e.g., $L\!=\!300$) for an accurate approximation of the energy distribution. Second, the limits of the summations in the expressions for the energy distribution  are either $m$ for an infinite-size buffer or $m$ and $l$, for a finite-size buffer. As discussed previously, $m$ is a small number and if $l\!=\!\lfloor K/M \rfloor$ is large, the buffer can be approximated as having infinite size. Furthermore, with an infinite-size buffer, only half of the unknown parameters $\lambda_n$ and $c_n$ have to be obtained,  cf. Remark \ref{coro:real_valued_pdfs}. Finally, since the energy distributions depend on the statistics of the system, they have to be obtained only once for a given setup. Then, using the derived energy distributions, the  performance of the considered transmission policies can be analyzed and the system parameters can be optimized, as will be shown in Sections \ref{s:Outage_analysis} and \ref{s:Simulations}.
\label{rem:low_computation_energy_distributions}
\end{remark}
\section{Outage Probability and Average Throughput Analysis}
\label{s:Outage_analysis}
In this section, we analyze the outage probability  and the corresponding average throughput of the UL channel, when the UL and DL channels are both Nakagami-$m$ faded. Thus, the UL channel power $\hul$ is Gamma distributed with shape parameter $\mul$, mean $\Omegaul$, and cdf $\mathbb{P}(\hul<x)=\frac{\gamma\left(\mul,\frac{\mul}{\Omegaul}x\right)}{\Gamma(\mul)}$. Since the CSI is unknown at the EH node, the node transmits data at a constant rate of $R$ bits/(channel use). Therefore, assuming the use of a capacity-achieving code, an outage occurs when $R>\log_2(1+\gamma)\Rightarrow \gamma<\gamma_{\rm thr}$, where $\gamma$ is the UL instantaneous signal-to-noise ratio (SNR) and $\gamma_{\rm thr}=2^{R}-1$ is the threshold SNR. Before analyzing the outage probability of the considered transmission policies, we first provide the optimal operating range for the maximum UL transmit power $M$ for both transmission policies for an infinite-size energy buffer.
\begin{corollary}\normalfont
For the best-effort and the on-off transmission policies with an infinite-size energy buffer, the optimal maximum UL transmit power $M$, which minimizes the outage probability of information transmission in the UL, is always larger than or equal to the average harvested power $\bar{X}$, i.e., $M\geq\bar{X}$ or $\delta=M/\bar{X}\geq 1$ is optimal.
\label{coro:delta1_opt_BE_INF}
\end{corollary}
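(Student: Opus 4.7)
The plan is to exploit Theorem 1 to show that choosing $M<\bar{X}$ is strictly suboptimal, which immediately forces the optimum into the region $M\geq\bar{X}$. The key observation is that Theorem 1 tells us that whenever the maximum output power is smaller than the average input, $\Pul(i)=M$ holds almost surely after a finite number of slots for both the best-effort and the on-off policy. Intuitively, the buffer content drifts to $+\infty$, so the low-energy regime ($B(i)<M$) is visited only finitely often and contributes nothing to the limiting outage.

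First I would make this formal by starting from the definition of the long-term outage probability
\begin{equation*}
P_{\rm out}=\lim_{N\to\infty}\frac{1}{N}\sum_{i=1}^{N}\mathds{1}_{\Pul(i)\hul(i)<\gamma_{\rm thr}\sigma^{2}},
\end{equation*}
and splitting the sum into the (finite) set of slots in which $\Pul(i)<M$ and the (asymptotically full-measure) set in which $\Pul(i)=M$. Since Theorem 1 guarantees that the first set is a.s.\ finite, its contribution vanishes in the Cesàro limit, so for every $M<\bar{X}$
\begin{equation*}
P_{\rm out}(M)=\pr(M\,\hul<\gamma_{\rm thr}\sigma^{2})=\frac{\gamma\!\left(\mul,\tfrac{\mul\gamma_{\rm thr}\sigma^{2}}{\Omegaul M}\right)}{\Gamma(\mul)},
\end{equation*}
for both transmission policies. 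This expression is the cdf of the Gamma random variable $\hul$ evaluated at an argument $\propto 1/M$, hence strictly decreasing in $M$ on $(0,\bar{X})$.

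Next I would conclude that because $P_{\rm out}(M)$ strictly decreases as $M$ is increased up to $\bar{X}$, every $M<\bar{X}$ is strictly dominated by values arbitrarily close to $\bar{X}$. Consequently the minimizer of $P_{\rm out}(M)$ over $(0,\infty)$ must lie in $[\bar{X},\infty)$, which is exactly the statement $\delta=M/\bar{X}\geq 1$.

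The only mildly delicate point in this plan is justifying the interchange of limit and sum when passing from the empirical time average to $P_{\rm out}(M)$; the main obstacle lies in arguing rigorously that the finitely many slots in which $B(i)<M$ do not affect the asymptotic average. I would handle this by invoking the a.s.\ finiteness claim in Theorem 1 together with the boundedness of the indicator, so that the correction term is $O(1/N)$ and vanishes. Once this is settled, the rest is a direct monotonicity argument on the Gamma cdf and requires no further computation.
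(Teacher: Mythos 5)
Your proposal is correct and follows essentially the same route as the paper: both invoke Theorem \ref{theo:no_stationary_dist} to conclude that for $M<\bar{X}$ the node transmits with power $M$ in (almost) every slot, so the outage probability reduces to $\pr(M\hul<\gamma_{\rm thr}\sigma^2)$, which is decreasing in $M$, forcing the optimum into $[\bar{X},\infty)$. Your Ces\`aro-average argument merely makes rigorous the paper's informal claim that the finitely many low-energy slots do not affect the limiting outage probability.
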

\begin{proof}
In order to find the optimal maximum transmit power $M$ of the EH node, which minimizes the outage probability, we study how the best-effort and the on-off transmission policies behave as $M$  increases relative to the constant average input harvested power $\bar{X}$. First, if $M<\bar{X}$, then both transmission policies are identical and from Theorem \ref{theo:no_stationary_dist}, the energy will accumulate in the buffer and the desired transmit power $M$ will always be available in the energy buffer, i.e., $B(i)\geq M$ and $\Pul(i)=\min(B(i),M)=M,\,\forall\,\,i$  for the best-effort policy and $\Pul(i)=M\mathds{1}_{B(i)\geq M}=M,\,\, \forall \,\, i$ for the on-off policy. This means, for $M<\bar{X}$, increasing $M$ always improves the outage performance since the UL transmit power increases, and the system will suffer from fewer outage events for a given transmission rate. Therefore, the optimal maximum UL power $M$ must be larger than or equal to $\bar{X}$. This completes the proof. 
\end{proof}
\begin{remark}
We note that, so far, in Sections \ref{s:Infinite_buffer} and \ref{s:Finite_buffer}, we have studied the distribution of the stored energy for an ideal system, cf. Section \ref{s:imperfections}. For a non-ideal system, we can use the derived energy distributions after replacing $M$ by $\tilde{M}\!=\!\Pc+\rho M$, $\bar{X}$ by $\widetilde{\bar{X}}=\beta\bar{X}$, $\delta$ by $\tilde{\delta}=\tilde{M}/\widetilde{\bar{X}}$, and $\lambda$ by $\tilde{\lambda}=\frac{m}{\widetilde{\bar{X}}}=\frac{\lambda}{\beta}$. In this case, $\tilde{M}$ is the minimum amount of stored energy needed to transmit with UL power $M$ and the UL transmit power is given by (\ref{eq:Puwith_imperfections}).
\end{remark}
\begin{table*}[!tbp]
  \caption{Outage Probability for the best-effort and the on-off policies. $\Gamma_{\rm thr}=\frac{\mul\gamma_{\rm thr}\sigma^2}{\Omegaul}$, $\tilde{\delta}=\frac{\tilde{M}}{\widetilde{\bar{X}}}$, and $P_M\!=\!\mathbb{P}(\Pul\!=\!M)$. $\lambda_n$ and $c_n$ for the best-effort and the on-off policies are specified in Corollaries \ref{theo:stationary_dist_infinite_Nakagami} and \ref{theo:stationary_dist_infinite_Nakagami_OO}, respectively. $\alpha_r$ for the best-effort and the on-off policies are specified in Corollaries \ref{coro:limiting_dist_Finite_Nakagami_exact_BE} and \ref{coro:limiting_dist_Finite_Nakagami_exact_OO}, respectively. $N(t,a,b,c,d)=\int\limits_0^1 \frac{\e^{-\frac{a}{x}-b x}}{x^t}\frac{(x-c)^d}{d!}\dd x$.}\vspace{-0.2cm}
\label{tab:outage_probabilities}
		\begin{tabular}{@{}ll@{}}\toprule   
		Case & $P_M$ and $P_{\text{out}}$\\ \midrule \addlinespace[1em]
		$\begin{aligned} &\text{Best-effort and on-off} \\[-2ex]
&K\to\infty\text{, }\tilde{\delta}\leq 1 \\[-2ex]
&{\rm cf.\,\, Theorem\,\,\ref{theo:no_stationary_dist} }\end{aligned}$ & $\begin{aligned}&P_M=1\\[-2ex]
		&P_{\text{out}}=P_{\text{out}}\big|_{\Pul=M}=\frac{\gamma\left(\mul,\Gamma_{\rm thr}\rho/(\tilde{M}-\Pc)\right)}{\Gamma(\mul)}\end{aligned}$
\\ \midrule 
				$\begin{aligned} &\text{Best-effort, }K\to\infty \\[-2ex]
&\tilde{\delta}> 1 \\[-2ex]
&{\rm cf.\,\, Corollary\,\,\ref{theo:stationary_dist_infinite_Nakagami} } \end{aligned}$& $\begin{aligned}&P_M=\sum\limits_{n=0}^{m-1}c_n\e^{-\lambda_n \tilde{M}}\\[-5ex]
				&P_{\text{out}}\!=\!P_MP_{\text{out}}\big|_{\Pul=M}\!+\!(1\!-\!P_M)\!-\!\overbrace{(\tilde{M}\!-\!\Pc)\!\sum\limits_{n=0}^{m-1}\!\lambda_nc_n \e^{-\lambda_n\Pc}\!\!\sum\limits_{t=0}^{\mul-1}\!\frac{\left(\!\frac{\Gamma_{\rm thr}\rho}{\tilde{M}-\Pc}\!\right)^t}{t!}N\!\left(\!t,\frac{\Gamma_{\rm thr}\rho}{\tilde{M}\!-\!\Pc},\lambda_n(\tilde{M}\!-\!\Pc),0,0\!\right)}^{\Sigma_I}\end{aligned}$
\\ \midrule
				$\begin{aligned} &\text{On-Off, }K\to\infty \\[-2ex]
&\tilde{\delta}> 1 \\[-2ex]
&{\rm cf.\,\, Corollary\,\,\ref{theo:stationary_dist_infinite_Nakagami_OO} }  \end{aligned}$& $\begin{aligned}&P_M=\sum\limits_{n=0}^{m-1}c_n\e^{-\lambda_n \tilde{M}}\\[-1ex]
				&P_{\text{out}}=P_MP_{\text{out}}\big|_{\Pul=M}+(1-P_M)\end{aligned}$		
\\ \midrule
	$\begin{aligned} &\text{Best-effort, }K<\infty \\[-2ex]
&{\rm cf.\,\, Corollary\,\,\ref{coro:limiting_dist_Finite_Nakagami_exact_BE} } \end{aligned}$ & $\begin{aligned}&P_M=\sum\limits_{r=0}^{m-1}\frac{\alpha_r}{r!}\sum\limits_{q=0}^{l-1}\e^{-\tilde{\lambda} \tilde{M}(q+1)}\sum\limits_{t=qm}^{(q+1)m-r-1}\frac{\big(\tilde{\lambda}((q+1)\tilde{M}-K)\big)^t}{t!}\\[-1ex]
	&P_{\text{out}}=P_MP_{\text{out}}\big|_{\Pul=M}+(1-P_M)-\overbrace{\sum\limits_{t=0}^{\mul-1}\frac{(\Gamma_{\rm thr}\rho)^t}{t!} I_t}^{\Sigma_F},\\[-1ex]
		&I_t\!=\!\!\sum\limits_{r=0}^{m-1}\!\frac{\alpha_r}{r!}\Bigg[\!\sum\limits_{q=0}^{l'}\!\e^{-\tilde{\lambda} (q\tilde{M}+\Pc)}\frac{(\tilde{\lambda} (D\!-\!\Pc))^{(q\!+\!1)m\!-\!r}}{(D\!-\!\Pc)^t}N\!\left(\!t,\frac{\Gamma_{\rm thr}\rho}{D\!-\!\Pc},\tilde{\lambda}(D\!-\!\Pc),\frac{(K\!-\!(q\tilde{M}\!+\!\Pc))}{D\!-\!\Pc},(q\!+\!1)m\!-\!r\!-\!1\!\right)\\[-1ex]
	&\hspace{2cm}-\sum\limits_{q=1}^{l'}\e^{-\tilde{\lambda} (q\tilde{M}+\Pc)}\frac{(\tilde{\lambda}(D\!-\!\Pc))^{qm}}{(D\!-\!\Pc)^t}N\!\left(\!t,\frac{\Gamma_{\rm thr}\rho}{D\!-\!\Pc},\tilde{\lambda} (D\!-\!\Pc),\frac{(K\!-\!(q\tilde{M}\!+\!\Pc))}{(D\!-\!\Pc)},qm\!-\!1\!\right)\Bigg]\\
	& l'=\begin{cases} l-1 & \Delta<\Pc\\ l & \Delta>\Pc\end{cases},\quad D=\begin{cases} \tilde{M} & q<l \\	\Delta & q=l\end{cases},\quad K=l\tilde{M}+\Delta,\quad l\in\mathbb{Z^+},\quad \Delta<\tilde{M}.
	\end{aligned}$
\\ \midrule
$\begin{aligned} &\text{On-Off, }K<\infty \\[-2ex]
&{\rm cf.\,\, Corollary\,\,\ref{coro:limiting_dist_Finite_Nakagami_exact_OO} } \end{aligned}$
& $\begin{aligned}&P_M=\sum\limits_{r=0}^{m-1}\frac{\alpha_r}{r!}\sum\limits_{q=0}^{l-2}\e^{-\tilde{\lambda} \tilde{M}(q+1)}\sum\limits_{t=qm}^{(q+1)m-r-1}\frac{\big(\tilde{\lambda}((q+1)\tilde{M}-K)\big)^t}{t!}\\
	&P_{\text{out}}=P_MP_{\text{out}}\big|_{\Pul=M}+(1-P_M)\end{aligned}$
\\ \bottomrule\vspace{-1cm}
\end{tabular}		
\end{table*}
\vspace{-0.6cm}\begin{proposition}\normalfont
For the on-off policy, the outage probability is the probability that either the EH node does not transmit because the stored energy is less than $\tilde{M}$ or the EH node transmits with power $M$ but an outage occurs, i.e.,  $P_{\text{out}}\big|_{\rm on-off}=P_MP_{\text{out}}\big|_{\Pul=M}+(1-P_M)$, where $P_M$ is defined as the probability that the EH node transmits with  power $M$, i.e., $P_M\!=\!\mathbb{P}(\Pul(i)\!=\!M)=\mathbb{P}(B(i)\geq \tilde{M})=\int_{\tilde{M}}^{K}g(x)\dd x$ and $P_{\text{out}}\big|_{\Pul=M}$ is the outage probability given that the EH node transmits with power $M$, i.e., $P_{\text{out}}\big|_{\Pul=M}=\mathbb{P}(\frac{Mh_{\rm UL}}{\sigma^2}<\gamma_{\rm thr})=\frac{\gamma(\mul,\Gamma_{\rm thr}/M)}{\Gamma(\mul)}$, where $\Gamma_{\rm thr}=\frac{\mul\gamma_{\rm thr}\sigma^2}{\Omegaul}$.  For the best-effort policy, the outage probability is given by $P_{\text{out}}\big|_{\rm best-effort}\!=\! \!P_MP_{\text{out}}\big|_{\Pul=M}+\!\int_0^{\tilde{M}}\!\mathbb{P}\left(\!\frac{[x-\Pc]^+}{\rho}\frac{h_{\rm UL}}{\sigma^2}<\gamma_{\text{thr}}\!\right)g(x) \dd x$, since transmissions are also allowed when the energy stored is less than $\tilde{M}$. Using these definitions and the pdf $g(x)$ of the stored energy given in Corollaries  \ref{theo:stationary_dist_infinite_Nakagami}--\ref{coro:limiting_dist_Finite_Nakagami_exact_OO}, the outage probabilities of the considered transmission policies with finite- and infinite-size energy buffers can be computed and are summarized in Table \ref{tab:outage_probabilities}. Furthermore, for a transmission rate of $R$ bits/(channel use), the average throughput is $T=R(1-P_{\rm out})$.
\end{proposition}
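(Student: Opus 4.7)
The plan is to compute $P_{\text{out}}$ in each case by conditioning on whether the EH node transmits at the target constant power $M$, and then substitute the closed-form limiting pdfs of Corollaries~\ref{theo:stationary_dist_infinite_Nakagami}--\ref{coro:limiting_dist_Finite_Nakagami_exact_OO} together with the Nakagami-$m$ statistics of the UL channel. By the law of total probability,
\begin{equation*}
P_{\text{out}}=\mathbb{P}(\text{outage}\mid \Pul(i)=M)P_M+\mathbb{P}(\text{outage}\mid \Pul(i)\neq M)(1-P_M),
\end{equation*}
where $P_M:=\mathbb{P}(\Pul(i)=M)=\mathbb{P}(B(i)\geq\tilde{M})$ follows from \eqref{eq:Puwith_imperfections} and equals $\int_{\tilde{M}}^{K} g(x)\,\dd x$ (with $K\to\infty$ for the unbounded buffer). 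Conditioned on $\Pul(i)=M$, the UL SNR is $Mh_{\text{UL}}/\sigma^2$, and since $h_{\text{UL}}$ is Gamma distributed with shape $\mul$ and mean $\Omegaul$, the conditional outage reduces to $\gamma(\mul,\Gamma_{\rm thr}/M)/\Gamma(\mul)$ by the definition of the lower incomplete Gamma function.

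For the on-off policy, the complementary event $\{\Pul(i)\neq M\}$ coincides with $\{\Pul(i)=0\}$, on which no information is delivered and an outage occurs with probability one; combining the two contributions yields $P_{\text{out}}\big|_{\rm on\text{-}off}=P_MP_{\text{out}}\big|_{\Pul=M}+(1-P_M)$. For the best-effort policy, the complementary event coincides with $\{B(i)<\tilde{M}\}$, on which $\Pul(i)=[B(i)-\Pc]^+/\rho$ according to \eqref{eq:Puwith_imperfections}, so the corresponding contribution is $\int_0^{\tilde{M}} \mathbb{P}\big([x-\Pc]^+h_{\text{UL}}/(\rho\sigma^2)<\gamma_{\rm thr}\big)g(x)\,\dd x$, producing the claimed mixed expression.

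The Table~\ref{tab:outage_probabilities} entries are then obtained by substituting $g(x)$ from the relevant corollary. In the infinite-buffer cases, $g(x)$ on $[\tilde{M},\infty)$ is the exponential sum $\sum_n \lambda_n c_n \e^{-\lambda_n x}$, and integration immediately yields $P_M=\sum_n c_n \e^{-\lambda_n\tilde{M}}$. In the finite-buffer cases, $g(x)$ is piecewise across stripes of width $\tilde{M}$ (cf.\ Fig.~\ref{fig:g_x_stripes}), so $P_M$ decomposes into a double sum over the stripe index $q$ and the Gamma-series index $t$ as displayed. The degenerate regime $\tilde\delta\leq 1$ of Theorem~\ref{theo:no_stationary_dist} forces $B(i)\geq\tilde{M}$ almost surely after a finite transient, hence $P_M=1$ and $P_{\text{out}}$ collapses to $P_{\text{out}}\big|_{\Pul=M}$.

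The main obstacle is the best-effort low-energy integral, in which the Gamma-CDF argument $\Gamma_{\rm thr}\rho/(x-\Pc)$ is an inverse-linear function of $x$. The plan is to expand $\gamma(\mul,y)=(\mul-1)!\big(1-\e^{-y}\sum_{t=0}^{\mul-1} y^t/t!\big)$ with $y=\Gamma_{\rm thr}\rho/(x-\Pc)$ and multiply the result by the stripe-wise polynomial-times-exponential form of $g(x)$ from Corollary~\ref{coro:limiting_dist_Finite_Nakagami_exact_BE}. On each stripe this produces integrands of the generic shape $(x-c)^d x^{-t}\e^{-a/x-bx}$, which after a variable change that rescales the stripe of length $D\in\{\tilde{M},\Delta\}$ to $[0,1]$ collapses into the auxiliary function $N(t,a,b,c,d)$ defined in the caption of Table~\ref{tab:outage_probabilities}; this yields the $I_t$ and $\Sigma_F$ expressions. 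The delicate bookkeeping---the upper stripe index $l'$ toggles between $l-1$ and $l$ depending on whether $\Delta<\Pc$ or $\Delta>\Pc$, and the end-slice contribution replaces $\tilde{M}$ by $\Delta$ in the outermost stripe---is the most tedious aspect, but everything beyond the substitution is mechanical. The throughput identity $T=R(1-P_{\text{out}})$ is then immediate since each non-outage slot delivers exactly $R$ bits/(channel use).
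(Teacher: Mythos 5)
Your proposal is correct and follows essentially the same route as the paper's proof: conditioning on whether the node transmits at power $M$, computing $P_M$ from the limiting pdf (with the $\tilde{\delta}\leq 1$ degenerate case handled via Theorem~\ref{theo:no_stationary_dist}), treating the on-off low-energy event as a sure outage, expanding the lower incomplete Gamma function in the best-effort low-energy integral, and reducing the resulting stripe-wise integrands to the numerically evaluated function $N(t,a,b,c,d)$ with the same $\Delta$-versus-$\Pc$ bookkeeping for $l'$. The only cosmetic difference is that you do not explicitly single out the atom $\pi(K)$ in $P_M=\int_{\tilde{M}}^{K}g(x)\,\dd x+\pi(K)$ for the finite-buffer case, but this matches the notational convention of the proposition statement itself and does not affect the argument.
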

\begin{proof}
Consider first an infinite-size buffer. If $\tilde{\delta}\leq1$, both transmission policies are identical since $\Pul(i)\!=\!M$, $\forall i$, cf. Theorem \ref{theo:no_stationary_dist}. For an infinite-size buffer with $\tilde{\delta}>1$, $P_M=\int_{\tilde{M}}^\infty g(x) \dd x$, which can be easily obtained from $g(x)=\sum_{n=0}^{m-1}\lambda_nc_n\e^{-\lambda_n x}$ for both policies. We note that the best-effort and the on-off policies have different coefficients $c_n$, cf. Corollaries  \ref{theo:stationary_dist_infinite_Nakagami} and \ref{theo:stationary_dist_infinite_Nakagami_OO}, and therefore different $P_M$. For the best-effort policy, the second term in the outage equation is given by $\int_0^{\tilde{M}}\!\mathbb{P}\left(\!\frac{[x-\Pc]^+}{\rho}\frac{h_{\rm UL}}{\sigma^2}<\gamma_{\text{thr}}\!\right)g(x) \dd x\!=\!\int_0^{\Pc} g(x)\dd x+\int_{\Pc}^{\tilde{M}}\frac{\gamma\left(\mul,\frac{\Gamma_{\rm thr}\rho}{x-\Pc}\right)}{\Gamma(\mul)}g(x) \dd x$. Using $\frac{\gamma\left(\mul,\frac{\Gamma_{\rm thr}\rho}{x-\Pc}\right)}{\Gamma(\mul)}\!=\!1\!-\!\e^{-\frac{\Gamma_{\rm thr}\rho}{x-\Pc}}\sum_{t=0}^{\mul-1}\frac{\left(\frac{\Gamma_{\rm thr}\rho}{x-\Pc}\right)^t}{t!}$ and $\int_0^{\Pc} g(x)\dd x+\int_{\Pc}^{\tilde{M}} g(x)\dd x=1-P_M$, we get\\\vspace{-0.5cm}
\begin{equation}
\int_0^{\tilde{M}}\!\mathbb{P}\left(\!\frac{[x-\Pc]^+}{\rho}\frac{h_{\rm UL}}{\sigma^2}<\gamma_{\text{thr}}\!\right)g(x) \dd x=(1-P_M)-\sum_{t=0}^{\mul-1}\frac{(\Gamma_{\rm thr}\rho)^t}{t!}\underbrace{\int_{\Pc}^{\tilde{M}}\!\frac{\e^{-\frac{\Gamma_{\rm thr}\rho}{x-\Pc}}}{(x-\Pc)^t}g(x) \dd x}_{I_t}.
\label{eq:second_term_outage_BE}
\end{equation}
Using $g(x)=\sum_{n=0}^{m-1}\lambda_nc_n\e^{-\lambda_n x}$, the outage probability can be obtained. Consider next a finite-size buffer. In this case, the mathematical expressions for $P_M\!=\!\int_{\tilde{M}}^K g(x)\dd x\!+\!\pi(K)$  are identical for both the best-effort and the on-off policies, since the expressions for $g(x)$, $x>\tilde{M}$, and $\pi(K)$  are identical  for both policies, cf. (\ref{eq:pdf_Nakagami_m}), (\ref{eq:g2_OO}), (\ref{eq:Pi_K_Nakagami}), and (\ref{eq:Pi_K_Nakagami_OO}). However, $P_M$ has a different value for each policy due to the different coefficients $\alpha_r$, $r=0,\ldots,m-1$, cf. Corollary \ref{coro:limiting_dist_Finite_Nakagami_exact_BE} and Corollary \ref{coro:limiting_dist_Finite_Nakagami_exact_OO}. $P_M$ can be obtained in a similar manner as (\ref{eq:area_under_g_step4}) in Appendix \ref{app:limiting_dist_Finite_Nakagami_exact_BE} except that the lower integral limit is $\tilde{M}$. By solving integrals of the form $\int_{\tilde{M}}^b\e^{-\tilde{\lambda} u}\frac{(u-b)^c}{c!}=-\tilde{\lambda}^{-c-1}\left(\e^{-\tilde{\lambda} b}-\e^{-\tilde{\lambda} \tilde{M}}\sum_{t=0}^{c}\frac{(\tilde{\lambda}(\tilde{M}-b))^t}{t!}\right)$, for $b=K-q\tilde{M}$, and $c=(q\!+\!1)m\!-\!r\!-\!1$ and $c=qm\!-\!1$, respectively, $P_M$ can be obtained. Note that the upper limit of $q$ in $P_M$ in the last two rows of Table \ref{tab:outage_probabilities} is $l-1$ for the best-effort policy and $l-2$ for the on-off policy. This is because for the on-off policy, we assume that $K=l\tilde{M}$, with $l\in\mathbb{Z}^+$, i.e., $\Delta=0$, cf. Fig \ref{fig:g_x_stripes}. Finally, to obtain the integral $I_t$ in (\ref{eq:second_term_outage_BE}), define $E(x)=\!\frac{\e^{-\frac{\Gamma_{\rm thr}\rho}{x-\Pc}}}{(x-\Pc)^t}$, then $I_t$ can be written as  $I_t=\int\limits_{\Pc}^{\tilde{M}}\!E(x)g(x) \dd x$. If $\Delta<\Pc$, then $I_t=\int\limits_{\Pc}^{\tilde{M}}\!\!E(x)g_{l-1}(x) \dd x$. Otherwise, if $\Delta>\Pc$, then $I_t=\int\limits_{\Pc}^\Delta\!\!E(x)g_l(x) \dd x+\int\limits_\Delta^{\tilde{M}}\!\!E(x)g_{l-1}(x) \dd x=\int\limits_{\Pc}^{\tilde{M}}\!\!E(x)g_{l-1}(x) \dd x+\int\limits_{\Pc}^\Delta\!\!E(x)L(x) \dd x$, where $L(x)=g_l(x)-g_{l-1}(x)$. This requires solving integrals of the form $\int_0^1\frac{\e^{-\frac{\Gamma_{\rm thr}\rho}{(D-\Pc)x}-\tilde{\lambda} (D-\Pc) x-\tilde{\lambda}\Pc}}{(D-\Pc)^t x^t}\frac{(x-\frac{K-(q\tilde{M}+\Pc)}{D-\Pc})^{d}}{d!} (D-\Pc)^{d+1}\,\dd x= (D-\Pc)^{d+1-t}\e^{-\tilde{\lambda}\Pc} N(t,\frac{\Gamma_{\rm thr}\rho}{D-\Pc},\tilde{\lambda} (D-\Pc),\frac{K-(q\tilde{M}+\Pc)}{D-\Pc},d)$ for $d=(q\!+\!1)m\!-\!r\!-\!1$ and $d=qm\!-\!1$, where \small  $D=\begin{cases} \tilde{M} & q<l\\	  \Delta & q=l\end{cases}$ \normalsize and $N(t,a,b,c,d)=\int\limits_0^1 \frac{\e^{-\frac{a}{x}-b x}}{x^t}\frac{(x-c)^d}{d!}\dd x$ is a bounded integral that has no closed form and is therefore solved numerically. This completes the proof.
\end{proof}
\begin{proposition}\normalfont
Define $G\definedas\frac{\Gamma(\mul,\mul b)}{\Gamma(\mul)}$, $G_{\tilde{\delta}_b}\definedas\frac{\Gamma(\mul,\mul b/\tilde{\delta}_b)}{\Gamma(\mul)}$, $b\definedas\frac{\gamma_{\rm thr}\sigma^2\rho}{\Omegaul(\widetilde{\bar{X}}-\Pc)}$, $\tilde{\delta}_b=\frac{\tilde{M}-\Pc}{\widetilde{\bar{X}}-\Pc}$ \footnote{Note that in $G_{\tilde{\delta}_b}$, $\frac{b}{\tilde{\delta}_b}=\frac{\gamma_{\rm thr}\sigma^2\rho}{\Omegaul(\tilde{M}-\Pc)}$ is always non-negative, since $\tilde{M}>\Pc$, by definition. Also, $b$ in $G$ is always non-negative, since $G=1-P_{\rm out}|_{K\to\infty,\,\tilde{\delta}=1}$, where at $\tilde{\delta}=1$, $\widetilde{\bar{X}}=\tilde{M}>\Pc$.}, and $\Sigma\definedas\Sigma_I$ for an infinite-size buffer and  $\Sigma\definedas\Sigma_F$ for a finite size buffer, where $\Sigma_I$ and $\Sigma_F$ are defined in the second and fourth rows of Table \ref{tab:outage_probabilities}, respectively.
Then, the on-off policy has a superior outage performance compared to the best-effort policy if $\exists\, \tilde{M}$ such that $P_{M,\rm on-off}\,G_{\tilde{\delta}_b}>P_{M,\rm best-effort}G_{\tilde{\delta}_b}+\Sigma$, where $P_M{}_{\rm,on-off}$ and $P_M{}_{\rm,best-effort}$ are the probabilities that the EH node transmits with power $M$ for the on-off and best-effort policies, respectively, as given in Table \ref{tab:outage_probabilities} for infinite- and finite-size energy buffers. Furthermore, for an infinite-size buffer, this implies that $P_{M,\rm on-off}\,G_{\tilde{\delta}_b}>G$ must also hold, which implies that $\tilde{\delta}=\tilde{M}/\widetilde{\bar{X}}>1$ yields the minimum outage probability for the on-off policy.
\label{prop:On_off_wins}
\end{proposition}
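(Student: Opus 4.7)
The strategy is to first rewrite the outage expressions of Table~\ref{tab:outage_probabilities} in the compact form suggested by the proposition and then to exploit the degenerate regime $\tilde{\delta}\le 1$ separately for the second claim.

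First, I would verify that $G_{\tilde{\delta}_b}$ coincides with the conditional success probability $1-P_{\rm out}\big|_{\Pul=M}$. Substituting the definitions of $b$ and $\tilde{\delta}_b$ yields $\mul b/\tilde{\delta}_b=\mul\gamma_{\rm thr}\sigma^2\rho/(\Omegaul(\tilde{M}-\Pc))=\Gamma_{\rm thr}\rho/(\tilde{M}-\Pc)$, so $G_{\tilde{\delta}_b}=\Gamma(\mul,\Gamma_{\rm thr}\rho/(\tilde{M}-\Pc))/\Gamma(\mul)=1-P_{\rm out}\big|_{\Pul=M}$. Reading off the on-off and best-effort rows of Table~\ref{tab:outage_probabilities} then gives $1-P_{\rm out,\,on\text{-}off}=P_{M,\rm on-off}\,G_{\tilde{\delta}_b}$ and $1-P_{\rm out,\,best\text{-}effort}=P_{M,\rm best-effort}\,G_{\tilde{\delta}_b}+\Sigma$, where $\Sigma$ collects the extra success contribution from best-effort transmissions with $B(i)<\tilde{M}$. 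Subtracting the two expressions immediately shows that $P_{\rm out,\,on\text{-}off}<P_{\rm out,\,best\text{-}effort}$ at a given $\tilde{M}$ is equivalent to $P_{M,\rm on-off}\,G_{\tilde{\delta}_b}>P_{M,\rm best-effort}\,G_{\tilde{\delta}_b}+\Sigma$, which is the first statement of the proposition.

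For the second statement I would specialize to the infinite-buffer case and invoke Theorem~\ref{theo:no_stationary_dist}. When $\tilde{\delta}\le 1$, i.e., $\tilde{M}\le\widetilde{\bar{X}}$, both policies satisfy $\Pul(i)=M$ almost surely after finitely many slots, so $P_M=1$, $\Sigma=0$, and both policies yield $1-P_{\rm out}=G_{\tilde{\delta}_b}$. Since $\Gamma(\mul,\cdot)$ is monotonically decreasing and $\tilde{\delta}\le 1 \Leftrightarrow \tilde{\delta}_b\le 1$, $G_{\tilde{\delta}_b}$ is maximized over this regime at $\tilde{\delta}=1$, where it equals $G$. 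In particular, because the best-effort policy can always fall back to $\tilde{\delta}=1$, its minimum outage over $\tilde{M}$ is at most $1-G$.

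Consequently, for the on-off policy to be genuinely superior, its minimum outage must drop strictly below $1-G$; by the identity from the first step this requires $P_{M,\rm on-off}\,G_{\tilde{\delta}_b}>G$ for some $\tilde{M}$. The monotonicity argument rules out $\tilde{\delta}\le 1$, since there $P_{M,\rm on-off}\,G_{\tilde{\delta}_b}=G_{\tilde{\delta}_b}\le G$, so any witness $\tilde{M}$ must satisfy $\tilde{\delta}>1$, which is exactly the optimality statement for on-off. The main subtlety will be bookkeeping the distinction between $\tilde{\delta}$ and $\tilde{\delta}_b$ in the presence of the circuit power $\Pc$, but both are increasing affine functions of $\tilde{M}$ that coincide at the value $1$, so the chain of monotonicity implications carries through unchanged.
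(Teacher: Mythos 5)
Your proposal is correct and follows essentially the same route as the paper's proof: rewrite the Table~\ref{tab:outage_probabilities} expressions as $1-P_{\rm out}=P_M G_{\tilde{\delta}_b}(+\Sigma)$ using $\mul b/\tilde{\delta}_b=\Gamma_{\rm thr}\rho/(\tilde{M}-\Pc)$, then observe that both policies coincide with success probability $G$ at $\tilde{\delta}=1$, so that on-off superiority forces $P_{M,\rm on-off}G_{\tilde{\delta}_b}>G$ and hence $\tilde{\delta}>1$. The only cosmetic difference is that you re-derive the optimality of $\tilde{\delta}\geq 1$ from Theorem~\ref{theo:no_stationary_dist} and monotonicity of the incomplete Gamma function, whereas the paper simply cites Corollary~\ref{coro:delta1_opt_BE_INF}.
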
\vspace{-0.3cm}
\begin{proof} 
From Table \ref{tab:outage_probabilities}, for both infinite- and finite-size energy buffers, the outage probabilities for the on-off and the best-effort policies can be written respectively as $P_{\rm out,on-off}=1-P_{M,\rm on-off}G_{\tilde{\delta}_b}$ and $P_{\rm out,best-effort}=1-\left(P_{M,\rm best-effort}G_{\tilde{\delta}_b}+\Sigma\right)$, where we used $\frac{\mul b}{\tilde{\delta}_b}=\frac{\Gamma_{\rm thr}\rho}{(\tilde{M}-\Pc)}$. Therefore, the on-off policy is superior to the best-effort policy if $\max\limits_{\tilde{\delta}}P_{M,\rm on-off}G_{\tilde{\delta}_b}>\!\max\limits_{\tilde{\delta}}P_{M,\rm best-effort}G_{\tilde{\delta}_b}+\Sigma$. In addition, for an infinite-size energy buffer, we know that $\tilde{\delta} \geq 1$ is optimal for both the best-effort and the on-off policies, cf. Corollary \ref{coro:delta1_opt_BE_INF}, and that $P_{\rm out,on-off}|_{\tilde{\delta}=1}=P_{\rm out,best-effort}|_{\tilde{\delta}=1}=1-G$, cf. Table \ref{tab:outage_probabilities}. Hence, for the on-off policy to be superior to the best-effort policy, $P_{\rm out,on-off}|_{\tilde{\delta}>1}<P_{\rm out,on-off}|_{\tilde{\delta}=1}$ also has to hold, i.e., $P_{M,\rm on-off}G_{\tilde{\delta}_b}>G$. This completes the proof. 
\end{proof}
\begin{remark}
We note that although the outage probability expressions obtained in Table \ref{tab:outage_probabilities} seem quite involved, they are easy to implement and simple to evaluate, cf. Remark \ref{rem:low_computation_energy_distributions}. Furthermore, the outage probability expressions depend on the statistical properties of the channel. Hence, for given statistical properties of the channel, the derived expressions can be used to optimize the system parameters, as will be shown in Section \ref{s:Simulations}. For example, the desired transmit power $M$ that minimizes the outage probability for a given buffer size can be optimized with just a single one-dimensional search for the optimal $\tilde{\delta}$. Equivalently, the system designer can determine the minimum size of the energy buffer that guarantees a certain outage performance without performing extensive simulations. 
\label{eq:usefulness_of_outage_expressions}
\end{remark}
\vspace{-0.3cm}
\section{Simulation and Numerical Results}
\label{s:Simulations}
In this section, we validate the analytical expressions obtained for the energy distribution, the outage probability, and the average throughput of the investigated energy management policies through simulations. The simulation parameters are summarized in Table \ref{tab:simulation_parameters}. We assume that the EH node stores the harvested energy in a micro-supercapacitor\footnote{It has been recently reported in \cite{supercapacitor_on_chip} that small-size micro-supercapacitors with a storage capacity of up to $K=\unit[0.2]{J}$ can be integrated on a chip  which makes them suitable storage devices for EH nodes.} of size $K$\footnote{Although the energy distribution obtained in Corollary \ref{coro:limiting_dist_Finite_Nakagami_exact_OO} for the on-off policy is only valid for $K=l\tilde{M}$ with $l=\{3,4,\ldots\}$, in this section, this condition will be violated when $\tilde{M}$ is varied for a given $K$. In this case, we approximate $l$ to $l={\rm round}(K/\tilde{M})$ only in the summation limits of Corollary \ref{coro:limiting_dist_Finite_Nakagami_exact_OO} and Table \ref{tab:outage_probabilities} but we use the exact value of $K$ elsewhere. This approximation is tight and provides results very close to the simulations.}.
 \begin{table}[!tp]
\caption{Simulation Parameters}\vspace{-0.2cm}
\begin{tabular}{@{}ll@{}}  \toprule   
Parameter & Value \\ \midrule 
DL and UL center frequencies  & $915\,$MHz and $2.45\,$GHz\\ 
Noise power at the AP &  $\sigma^2=-101\,$dBm \\ 
DL and UL path loss exponent & $2.5$ \\ 
AP and EH node antenna gains & $12\,$dBi and $3\,$dBi\\ 
Power amplifier inefficiency & $\rho=1.4$ \\ 
Storage efficiency & $\beta=0.9$ \\ 
RF-to-DC conversion efficiency & $\eta=0.5$\\ 
DL transmit power & $\Pdl=1\,$W \\ 
AP to EH node distance & $d=7\,$m in Fig. \ref{fig:Publication_fig_pdf_energy_distribution}, \ref{fig:Publication_fig_Outage_compare_with_optimal}, \ref{fig:Publication_fig_Outage_diff_Pc} and $d=12\,$m  in Fig.  \ref{fig:Publication_fig_Outage_compare_with_optimal_on_off_wins}--\ref{fig:Opt_Throughput_vs_R}.\\ 
Constant circuit power consumption & $\Pc=0$, except for Fig. \ref{fig:Publication_fig_Outage_diff_Pc}. \\ 
Average harvested energy & $\widetilde{\bar{X}}=\beta\bar{X}=\beta\eta\Pdl\Omegadl=1.2\times 10^{-5}\,$J  in Fig. \ref{fig:Publication_fig_pdf_energy_distribution}, \ref{fig:Publication_fig_Outage_compare_with_optimal}, \ref{fig:Publication_fig_Outage_diff_Pc} and $\widetilde{\bar{X}}=3.14\times 10^{-6}\,$J in Fig. \ref{fig:Publication_fig_Outage_compare_with_optimal_on_off_wins}--\ref{fig:Opt_Throughput_vs_R}.\\ 
DL and UL channel models & Nakagami-$m$ fading with $m\!=\!\mul\!=\!2$ in Fig. \ref{fig:Publication_fig_pdf_energy_distribution}, \ref{fig:Publication_fig_Outage_compare_with_optimal}, \ref{fig:Publication_fig_Outage_diff_Pc} and $m\!=\!\mul\!=\!3$ in Fig.  \ref{fig:Publication_fig_Outage_compare_with_optimal_on_off_wins}--\ref{fig:Opt_Throughput_vs_R}.\\ 
\bottomrule
\end{tabular}
\label{tab:simulation_parameters}
\vspace{-0.5cm}
\end{table}
\vspace{-0.3cm}
\subsection{Baseline Buffer-less Policy and the Optimal Policy}
We compare our proposed transmission policies with a baseline buffer-less transmission policy that uses all the energy harvested in a time slot for UL information transmission in the subsequent slot. Since the buffer-less policy performs only short-term energy storage (for one time slot), the harvested energy can be stored in a conventional capacitor which is characterized by almost perfect storage efficiency, i.e., $\beta\approx 1$. We note that as $M\to\infty$, the best-effort transmission policy described in (\ref{eq:Puwith_imperfections}) with $\beta=1$ tends to the buffer-less policy, i.e., $\Pul(i)=\frac{ [X(i-1)-\Pc]^+}{\rho},\,\forall\, i$. Hence, the outage probability of the buffer-less policy can be obtained from (\ref{eq:second_term_outage_BE}) for $\tilde{M}\to\infty$  and $g(x)\to f(x)$, where $f(x)\!=\!\frac{\lambda^m}{\Gamma(m)}x^{m-1}\e^{-\lambda x}$, which results in $P_{\rm out}\big|_{\rm buffer-less}=\int_0^{\Pc}f(x)\dd x+\int_{\Pc}^{\infty}\Pr\left(\hul<\frac{\gamma_{\rm thr}\sigma^2\rho}{x-\Pc}\right)f(x)\dd x=1-\int_{\Pc}^{\infty}\frac{\Gamma\left(\mul,\frac{\Gamma_{\rm thr}\rho}{x-\Pc}\right)}{\Gamma(\mul)}f(x)\dd x$. Using \cite[eq. 3.471.9]{table_of_integrals_Ryzhik}, we get 
\begin{equation}
P_{\rm out}\big|_{\rm buffer-less}\!=\!1-2\frac{\lambda^m}{\Gamma(m)}\e^{-\lambda\Pc}\!\!\sum\limits_{t=0}^{\mul-1}\!\!\frac{(\Gamma_{\rm thr}\rho)^t}{t!}\sum\limits_{l=0}^{m-1}\!\binom{\!m\!-\!1\!}{l}{\Pc}^{m-1-l}\left(\!\frac{\Gamma_{\rm thr}\rho}{\lambda}\!\right)^{\!\!\frac{l-t+1}{2}}\!\!\!\!K_{l-t+1}\left(\!2\sqrt{\Gamma_{\rm thr}\rho\lambda}\!\right),
\end{equation}
where $K_n(\cdot)$ is the modified Bessel function of the second kind and order $n$.

In addition, we compare our proposed policies to the optimal transmission policy in \cite{outage_minimization_Rui_Zhang_2014}, which minimizes the outage probability assuming no CSI knowledge at the EH node and infinite-size energy buffers. In particular, we use the optimal offline power allocation in \cite[Algorithm II]{outage_minimization_Rui_Zhang_2014} and implement it for sufficiently many time slots (e.g. $10^7$ slots) to ensure a fair comparison with the average outage probability of our proposed policies. We note that the  optimal power allocation algorithm in \cite{outage_minimization_Rui_Zhang_2014} assumes zero constant circuit power consumption. Hence, we use $\Pc=0$ to compare with the optimal policy. The effect of the storage efficiency $\beta$ and the power amplifier inefficiency $\rho$ can be easily incorporated into \cite[Algorithm II]{outage_minimization_Rui_Zhang_2014} by using $\widetilde{\bar{X}}=\beta \bar{X}$ as the average harvested energy and scaling down the optimal power allocation by $\rho$.
\vspace{-0.3cm}
\subsection{Numerical Results}
For all Figs. \ref{fig:Publication_fig_pdf_energy_distribution}--\ref{fig:Publication_fig_Outage_diff_Pc}, the analytical results in Corollaries  \ref{theo:stationary_dist_infinite_Nakagami}--\ref{coro:limiting_dist_Finite_Nakagami_exact_OO} for the energy distribution and in Table \ref{tab:outage_probabilities} for the outage probability are in perfect agreement with the simulated results. In Figs. \ref{fig:Publication_fig_pdf_energy_distribution} and \ref{fig:Publication_fig_Outage_compare_with_optimal}, we use a simulation set up with  $m\!=\!\mul\!=\!2$ and $d\!=\!7\,$m, cf. Table \ref{tab:simulation_parameters}. 
In Fig. \ref{fig:Publication_fig_pdf_energy_distribution}, we show the limiting distribution $g(x)$ of the energy stored in a finite-size energy buffer of size $K=\unit[0.05]{mJ}$ and an infinite-size energy buffer for both the best-effort and the on-off transmission policies. We use $\tilde{\delta}=1.04$, i.e., $\tilde{\delta}>1$, as otherwise, a stationary energy distribution does not exist for the infinite-size buffer case, cf. Theorem \ref{theo:no_stationary_dist}. For the finite-size buffer, we additionally plot the probability of a full buffer $\pi(K)$. For the on-off policy, on average, there is more energy in the buffer compared to the best-effort policy, since in the low-energy mode of operation, the on-off policy accumulates energy whereas the best-effort policy consumes all the energy in its buffer in a best-effort manner. This also  explains why the probability of buffer overflow, $\pi(K)$, is higher for the on-off policy compared to the best-effort policy as shown in Fig. \ref{fig:Publication_fig_pdf_energy_distribution}.

\begin{figure}[!t] 
\centering
\begin{minipage}[b]{.47\textwidth}
\centering
\includegraphics[width=1\textwidth, trim= {1cm 0 0cm 0.7cm},clip]{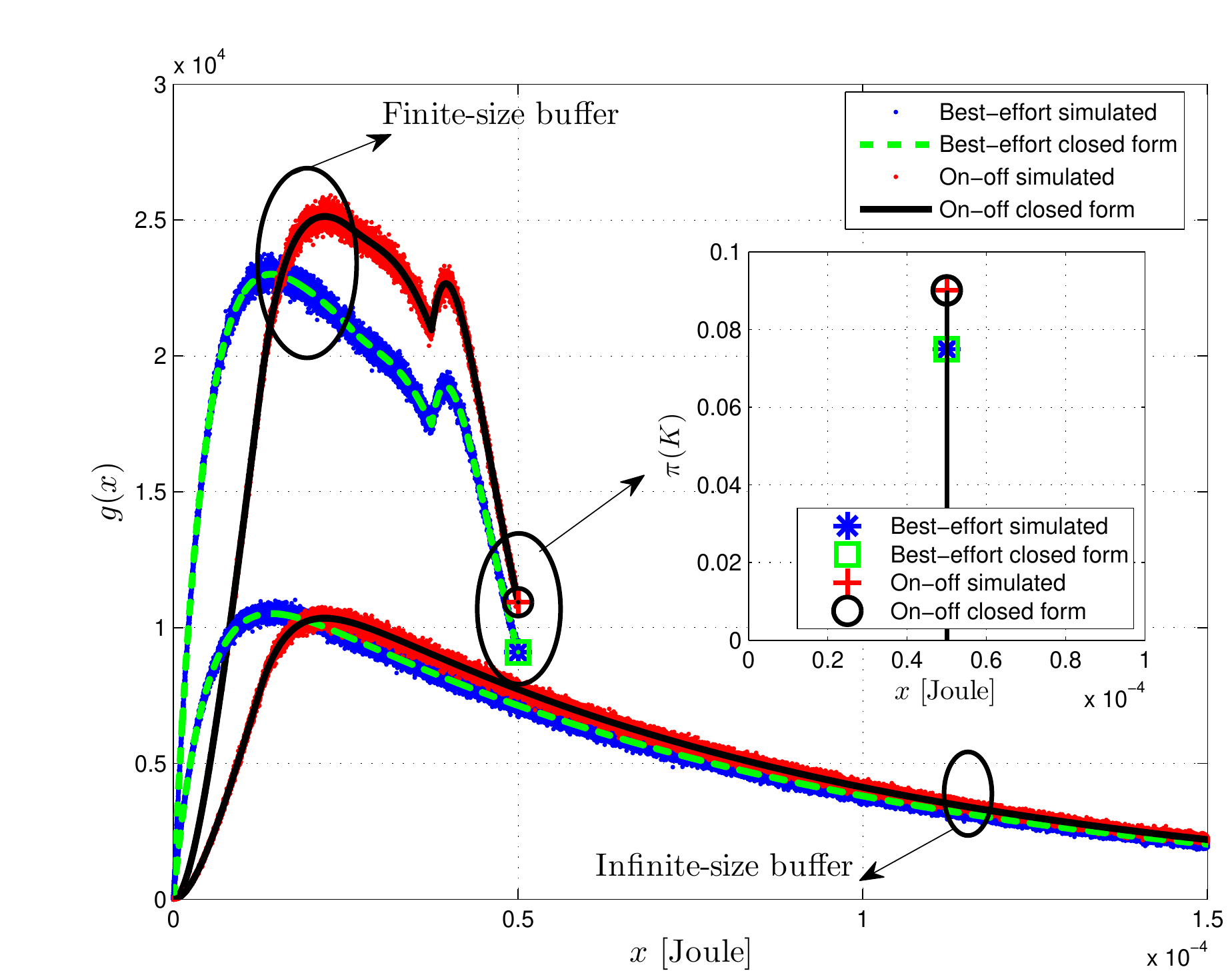}
\caption{The energy buffer distribution for $m\!=\!\mul\!=\!2$, $d\!=\!7\,$m, $\tilde{\delta}\!=\!1.04$, and $K\!=\!\unit[0.05]{mJ}$ .}
\label{fig:Publication_fig_pdf_energy_distribution}
\end{minipage}
\hspace{0.5cm}
\begin{minipage}[b]{0.47\textwidth}
\centering
\includegraphics[width=1\textwidth]{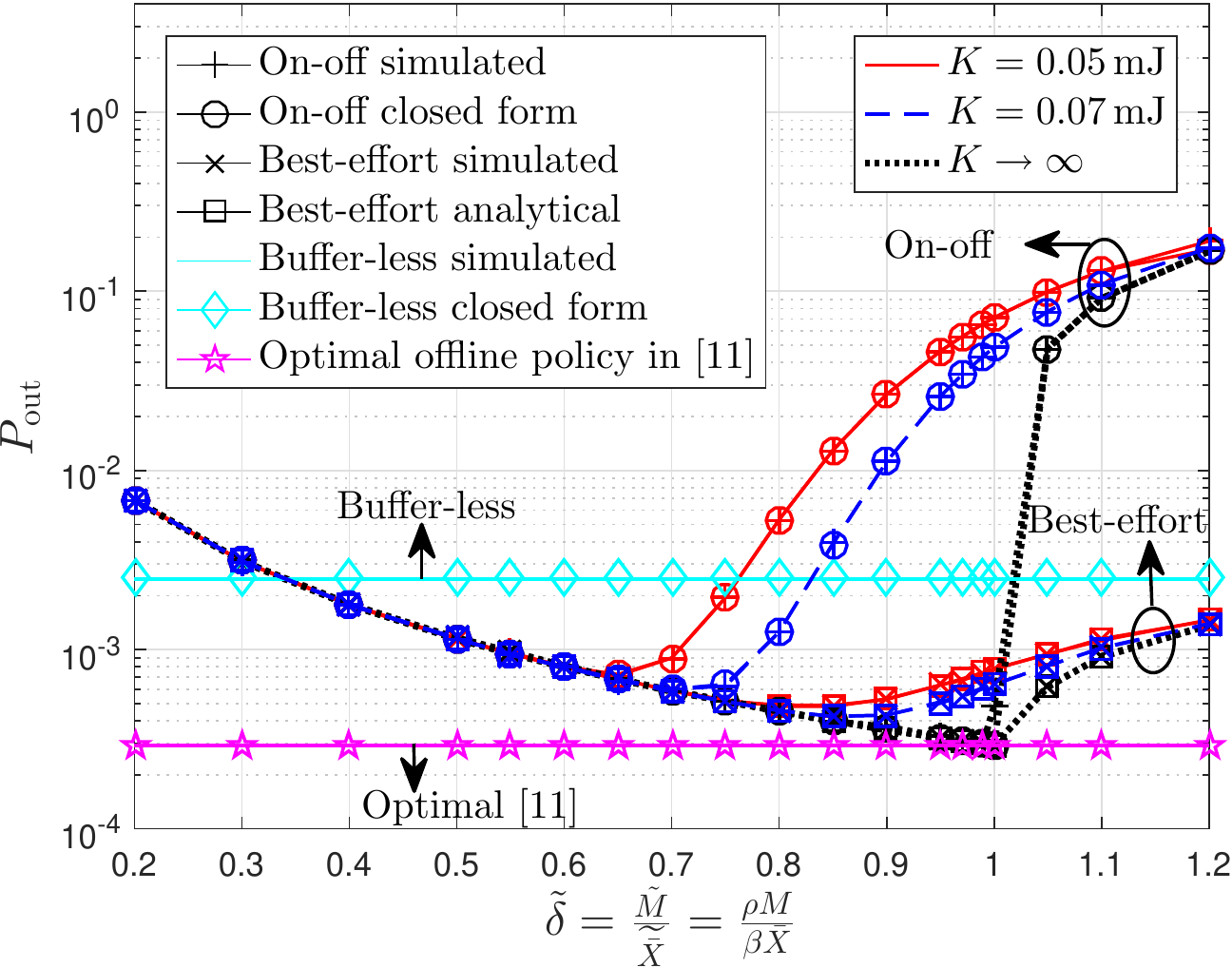}
\caption{Outage probability versus $\tilde{\delta}\!=\!\frac{\tilde{M}}{\widetilde{\bar{X}}}$ for  $m\!=\!\mul\!=\!2$, $d\!=\!7\,$m, and $R\!=\!2$ bits/(channel use).} 
\label{fig:Publication_fig_Outage_compare_with_optimal}
\end{minipage}\vspace{-0.7cm}
\end{figure}

In Fig. \ref{fig:Publication_fig_Outage_compare_with_optimal}, we plot the outage probability for a transmission rate of $R=\unit[2]{bits/(channel\,\, use)}$ and buffer sizes $K=\unit[0.05]{mJ}$, $K=\unit[0.07]{mJ}$, and $K\to\infty$. We change the maximum UL transmit power $M$, which results in different $\tilde{\delta}=\frac{\tilde{M}}{\widetilde{\bar{X}}}=\frac{\rho M+\Pc}{\beta \bar{X}}$ for the given  $\widetilde{\bar{X}}$. For both the best-effort and the on-off transmission policies, there exists an optimal value for the maximum UL transmit power $M=(\tilde{M}-\Pc)/\rho$ that minimizes the outage probability. The existence of such an optimal value can be explained based on (\ref{eq:Puwith_imperfections}) as follows. If $\tilde{M}$ is very small, then in most cases the desired amount of energy $\tilde{M}$ is available in the energy buffer and the UL transmit power is $M$ for both policies. In this case, increasing $\tilde{M}$, increases the UL transmit power and reduces the probability of outages. On the other hand, very large values of $\tilde{M}$ imply that for the best-effort policy, most of the time the harvested energy will be directly used in the subsequent time slot without buffering which results in a loss in the outage performance. For the on-off policy, a large value of $\tilde{M}$ means that in many time slots, the EH node remains silent although outage-free transmission is possible, i.e., if  $\tilde{M}\to\infty$, the outage probability approaches $1$. This explains why for the on-off policy, the outage probability increases significantly when $\tilde{M}$ is increased beyond its optimal value. At the considered transmission rate, it is observed that the optimal choice of $\tilde{M}$ is always less than or equal to the effective average harvested energy $\widetilde{\bar{X}}$ and increases with the capacity of the energy buffer,  i.e., optimally $\tilde{\delta}\leq 1$ and as $K\to\infty$, $\tilde{\delta}_{\rm opt}\to1$. For the considered setup, the optimal outage performance of the best-effort policy, for $K\to\infty$ and $\tilde{\delta}_{\rm opt}$, is superior to that of the on-off policy and it closely approaches the outage probability of the optimal offline power allocation policy in \cite{outage_minimization_Rui_Zhang_2014}.
\begin{figure}[!t] 
\begin{minipage}[b]{0.48\textwidth}
\centering
\includegraphics[width=1\textwidth]{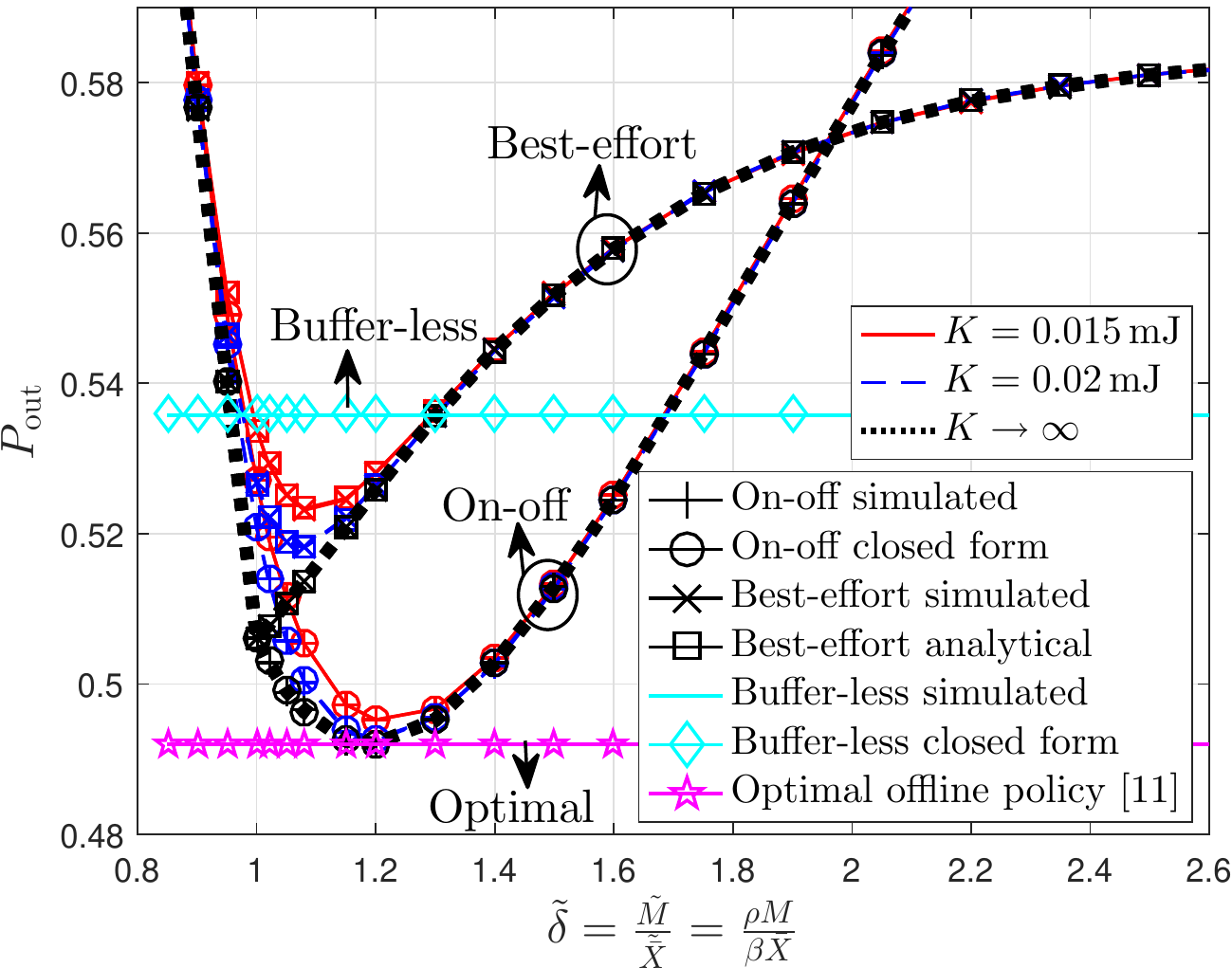}
\vspace{-0.2cm}
\caption{Outage Probability versus $\tilde{\delta}\!=\!\frac{\tilde{M}}{\widetilde{\bar{X}}}$ for  $m\!=\!\mul\!=\!3$, $d\!=\!12\,$m, and $R\!=\!4$ bits/(channel use).} 
\label{fig:Publication_fig_Outage_compare_with_optimal_on_off_wins}
\end{minipage}
\hspace{0.5cm}
\begin{minipage}[b]{0.46\textwidth}
\centering
\includegraphics[width=1\textwidth]{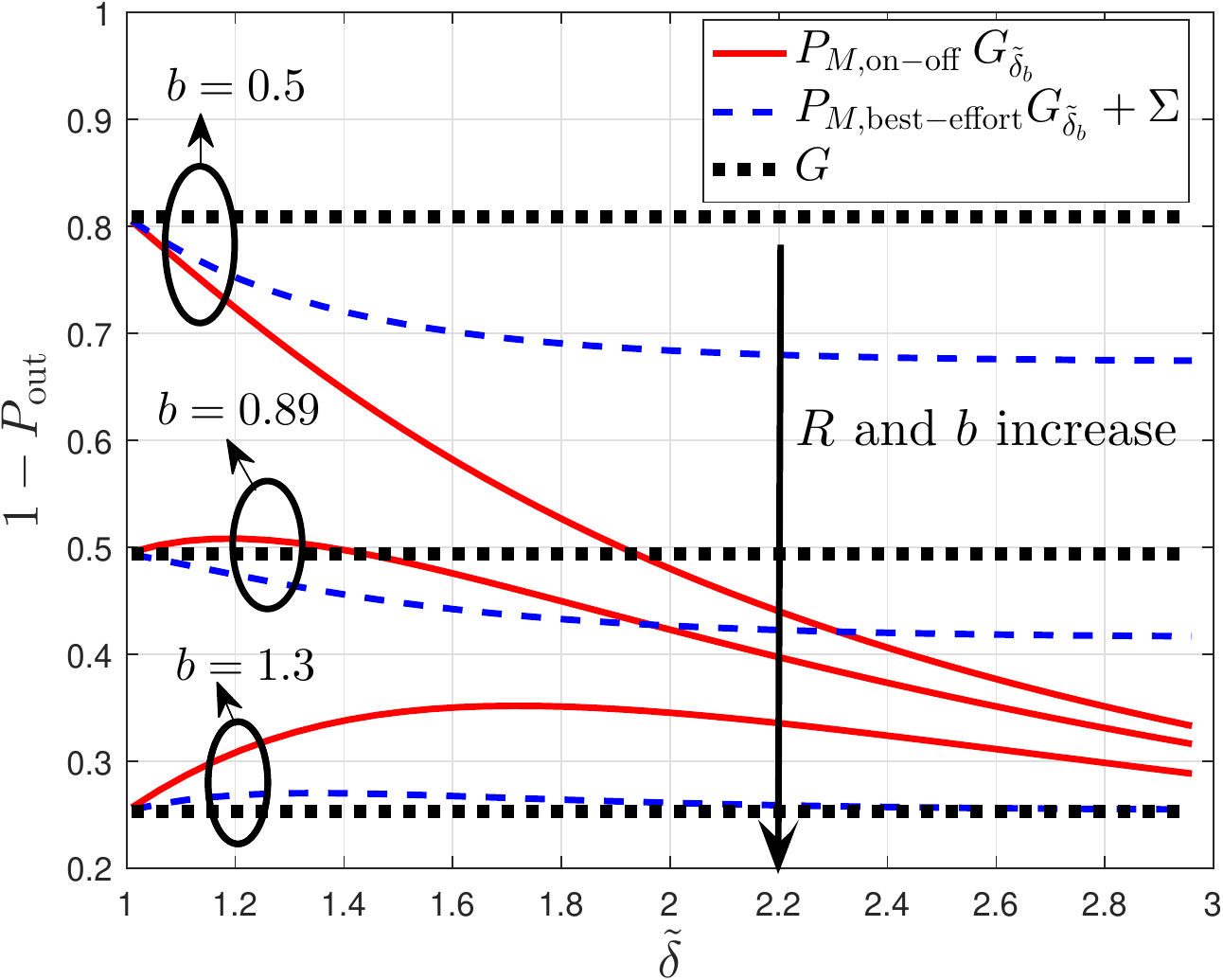}\vspace{-0.2cm}
\caption{$1\!-\!P_{\rm out}$ versus $\tilde{\delta}$, cf. Proposition \ref{prop:On_off_wins},  for  $m\!=\!\mul\!=\!3$, $d\!=\!12\,$m, and  different $b\!=\!\frac{\gamma_{\rm thr}\sigma^2\rho}{\Omegaul(\widetilde{\bar{X}}-\Pc)}$.}
\label{fig:When_on_off_wins}   
\end{minipage}
\vspace{-0.7cm}
\end{figure}

In Figs. \ref{fig:Publication_fig_Outage_compare_with_optimal_on_off_wins}--\ref{fig:Opt_Throughput_vs_R}, we use a simulation set up with  $m\!=\!\mul\!=\!3$ and $d\!=\!12\,$m, cf. Table \ref{tab:simulation_parameters}. In Fig. \ref{fig:Publication_fig_Outage_compare_with_optimal_on_off_wins}, we show the outage probability for a transmission rate of $R=\unit[4]{bits/(channel\,\, use)}$ and buffer sizes $K=\unit[0.015]{mJ}$, $K=\unit[0.02]{mJ}$, and $K\to\infty$. Unlike in Fig. \ref{fig:Publication_fig_Outage_compare_with_optimal}, in Fig. \ref{fig:Publication_fig_Outage_compare_with_optimal_on_off_wins}, the optimal choice of $\tilde{M}$ is always larger than or equal to the effective average harvested energy $\widetilde{\bar{X}}$ but decreases with the capacity of the energy buffer,  i.e., optimally $\tilde{\delta}\geq 1$ and as $K\to\infty$, $\tilde{\delta}_{\rm opt}$ decreases. Moreover, in the considered case, the optimal outage performance of the on-off policy, for $K\to\infty$ and $\tilde{\delta}_{\rm opt}$, is superior to that of the best-effort policy and it closely approaches the performance of the optimal power allocation policy in \cite{outage_minimization_Rui_Zhang_2014}. 
\begin{figure}[!thp]
\begin{minipage}[b]{0.46\textwidth}
\centering
\includegraphics[width=1\textwidth,height=0.27\textheight]{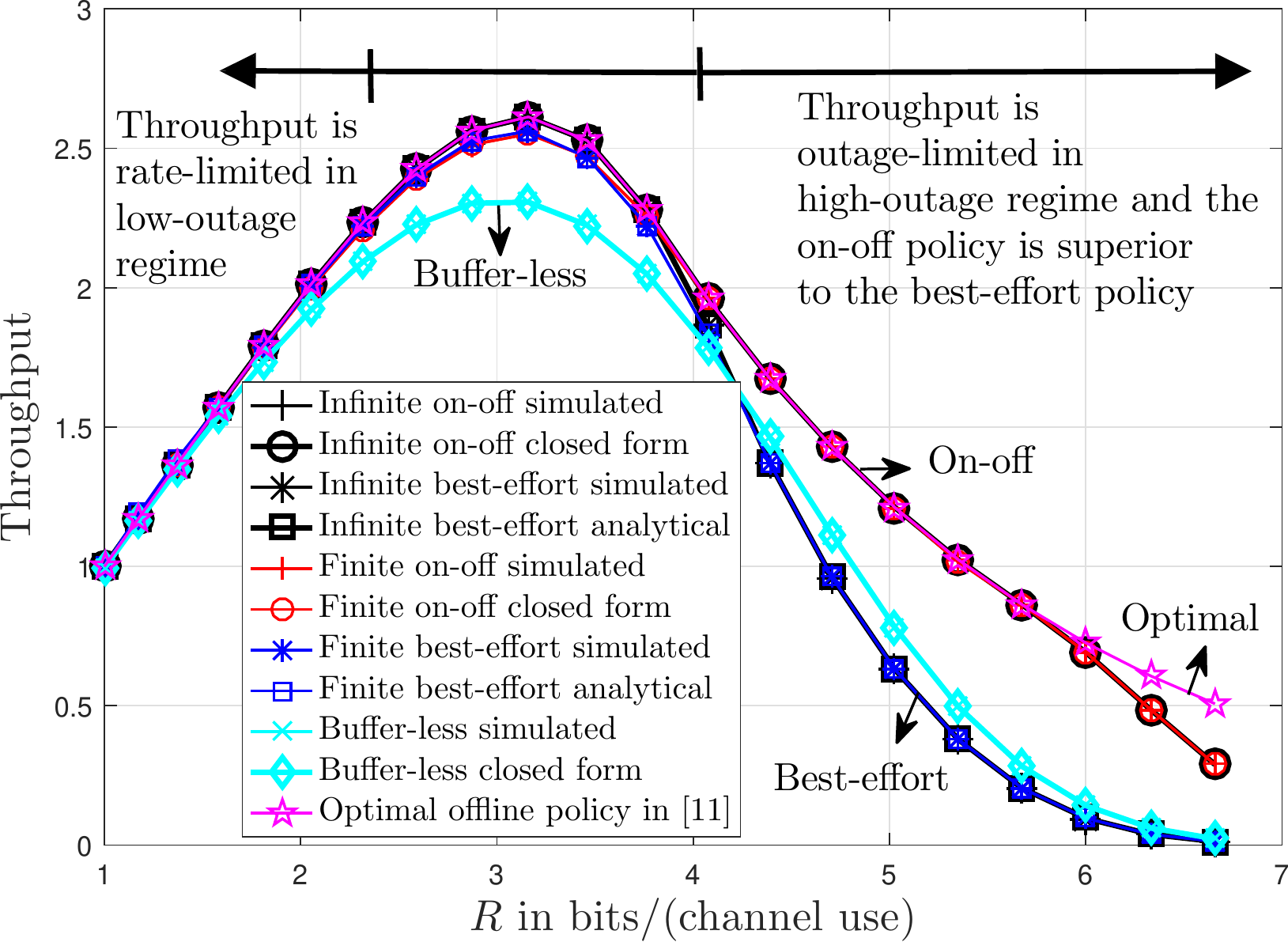}\vspace{-0.2cm}
\caption{Average throughput at optimal $\tilde{\delta}$ versus $R$ for $m\!=\!\mul\!=\!3$, $d\!=\!12\,$m, and $K\!=\!\unit[0.025]{mJ}$.}
\label{fig:Opt_Throughput_vs_R}  
\end{minipage}
\hspace{0.5cm}
\begin{minipage}[b]{0.46\textwidth}
\centering
\includegraphics[width=1\textwidth]{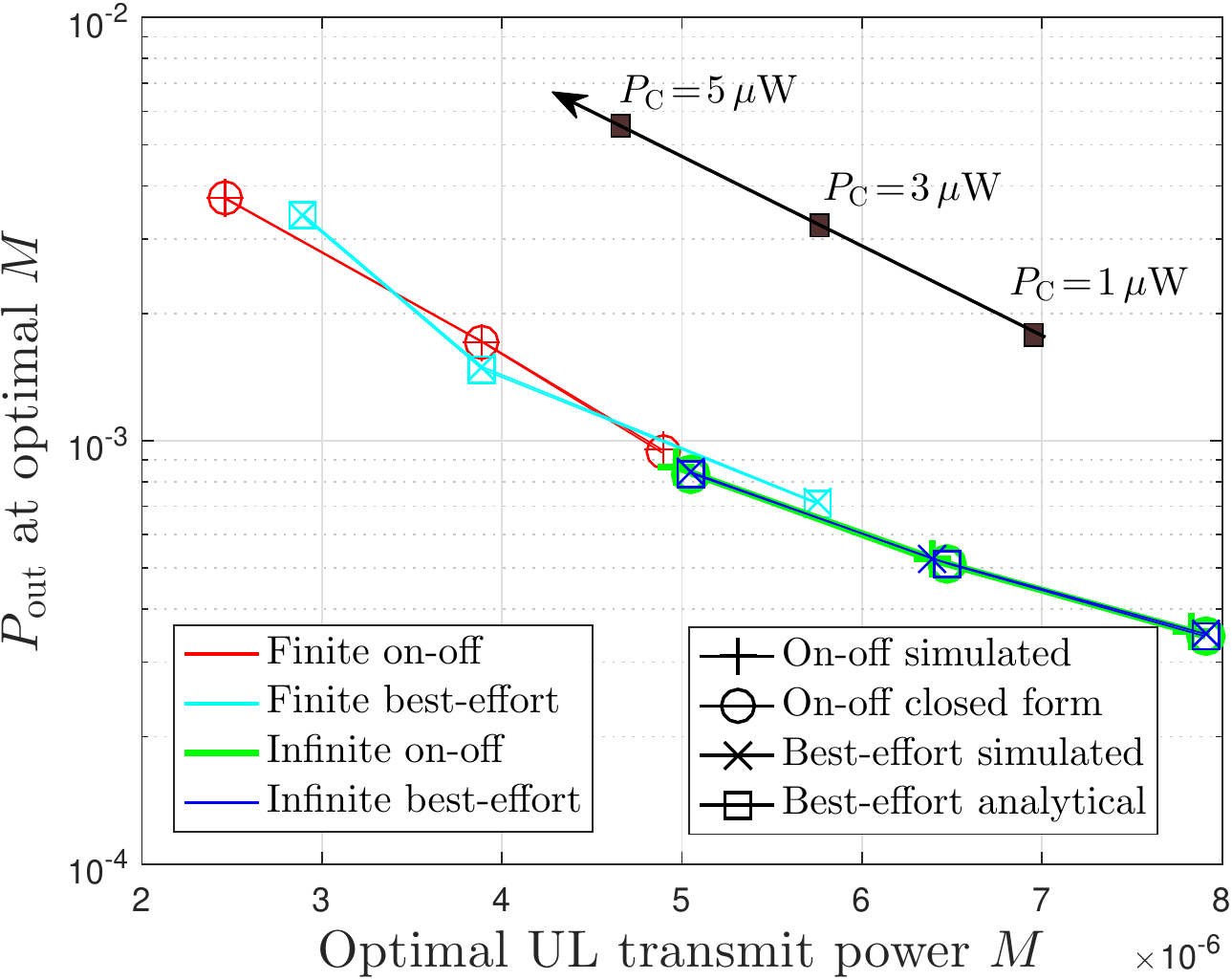}\vspace{-0.2cm}
\caption{Outage Probability at optimal $M$ for different $\Pc$ for $m\!=\!\mul\!=\!2$, $d\!=\!7\,$m, $R\!=\!2$ bits/(channel use), and $K\!=\!\unit[0.05]{mJ}$.}
\label{fig:Publication_fig_Outage_diff_Pc}  
\end{minipage}
\vspace{-0.7cm}
\end{figure}
From Fig. \ref{fig:Publication_fig_Outage_compare_with_optimal} and Fig. \ref{fig:Publication_fig_Outage_compare_with_optimal_on_off_wins}, we conclude that, for low outage probabilities, the best-effort policy is superior to the on-off policy and vice versa for high outage probabilities. This result is in agreement with \cite{outage_minimization_Rui_Zhang_2014}, where it was shown that on-off power allocation is superior to best-effort power allocation in the high-outage regime.
In Fig. \ref{fig:When_on_off_wins}, this behaviour is further explained by illustrating the result of Proposition \ref{prop:On_off_wins} for infinite-size buffers. In particular, we plot $1-P_{\rm out}$ given by  $P_{M,\rm on-off}\,G_{\tilde{\delta}_b}$ and $P_{M,\rm best-effort}G_{\tilde{\delta}_b}+\Sigma$ for the on-off and the best-effort policies, respectively, vs. $\tilde{\delta}\geq 1$. We also plot $G=1-P_{\rm out}\big|_{\tilde{\delta}=1}$. 
As can be observed, the larger the value of $b\!=\!\frac{\gamma_{\rm thr}\sigma^2\rho}{\Omegaul(\widetilde{\bar{X}}-\Pc)}$, the more likely it is that the on-off policy provides a better outage performance compared to the best-effort policy. However, a larger $b$ also implies a higher outage probability. Therefore, the on-off policy is more likely to outperform the best-effort policy for high outage probabilities, e.g. $P_{\rm out}>0.4$. We note that a large $b\!=\!\frac{\gamma_{\rm thr}\sigma^2\rho}{\Omegaul(\widetilde{\bar{X}}-\Pc)}$ may result from poor UL and/or DL channel conditions, low DL transmit power, large noise power, and/or high transmission rates. For example, in Fig. \ref{fig:Publication_fig_Outage_compare_with_optimal_on_off_wins}, the on-off policy is superior to the best-effort policy due to the larger transmission distance and the higher transmission rate compared to the case considered in Fig. \ref{fig:Publication_fig_Outage_compare_with_optimal}. In particular, we have $b=0.89$ in Fig. \ref{fig:Publication_fig_Outage_compare_with_optimal_on_off_wins}, in which case, the on-off policy has a better outage performance compared to the best-effort policy as shown in Fig. \ref{fig:When_on_off_wins}. 

In order to analyze the system performance at different transmission rates $R$, we plot in Fig. \ref{fig:Opt_Throughput_vs_R} the average throughput given by $T=R(1-P_{\rm out})$ obtained for the optimal choice of $\tilde{\delta}$ for each rate. First, we observe that there exists an optimal $R$ that maximizes the average throughput. This is because at low rates, the outage probability is low and the throughput is limited by $R$, i.e. $T\approx R$, for all policies. In contrast, at high rates, the throughput is limited by the high outage probability. Therefore, an optimal rate $R$ must exist for which the average throughput is maximized. In the medium-rate regime, the  the best-effort and the on-off policies result in a comparable throughput performance which   is superior to that of the buffer-less policy. Moreover, for the considered system parameters, there is only a slight loss in throughput performance if a finite-size buffer of size $\unit[0.025]{mJ}$ is used compared to an infinite-size buffer. In the high-rate regime, the on-off policy outperforms the best-effort policy and, except for very high rates, for $K\to\infty$, it closely approaches the performance of the optimal offline policy in \cite{outage_minimization_Rui_Zhang_2014}.

Finally, in Fig. \ref{fig:Publication_fig_Outage_diff_Pc}, we study the effect of a constant power consumption $\Pc$ on the optimal UL transmit power $M$ for the proposed transmission policies. As can be observed, the larger the power consumption $\Pc$, the lower the optimal  transmit power $M$ and the worse the outage performance.
\begin{remark}
From Figs. \ref{fig:Publication_fig_Outage_compare_with_optimal}, \ref{fig:Publication_fig_Outage_compare_with_optimal_on_off_wins}, and \ref{fig:Opt_Throughput_vs_R}, we conclude that the optimal performance of the simple online transmission policies proposed in (\ref{eq:Puwith_imperfections}) approaches the performance of the optimal offline power allocation policy in \cite{outage_minimization_Rui_Zhang_2014}. 
We note that in \cite{outage_minimization_Rui_Zhang_2014}, the optimal offline power allocation requires non-causal knowledge of the EH profile, whereas the optimal online algorithm in \cite{outage_minimization_Rui_Zhang_2014} requires causal EH profile knowledge and is based on dynamic programming which entails a high computational complexity for a continuous energy state space and a long transmission horizon. On the other hand, in each time-slot, our simple online transmission policies in (\ref{eq:Puwith_imperfections}) only require knowledge of whether the desired constant transmit power is available in the energy buffer. Furthermore, for given statistical properties of the channel, only  a \emph{single} one-dimensional search is required for the proposed policies to obtain the optimal constant transmit power $M$ using the analytical results in Table \ref{tab:outage_probabilities}.
\end{remark}
\vspace{-0.2cm}
\section{Conclusion}
\label{s:conclusion}
In this paper, we studied two simple online transmission policies for EH nodes with finite/infinite-size energy buffers that do not require instantaneous CSI nor EH profile knowledge. Using the theory of discrete-time continuous-state Markov chains, we analyzed the limiting distribution of the  energy stored in the buffer for a general i.i.d. EH process and obtained it in closed form for a Gamma distributed i.i.d. EH process. We have shown that buffering energy for future use may significantly improve the outage performance compared to directly consuming all the harvested energy without buffering. Furthermore, our results reveal that, for low outage probabilities, the best-effort policy has a superior outage performance compared to the on-off policy and vice versa for high outage probabilities. Moreover, using the derived analytical results for the outage probability, the UL transmit power of the EH node can be optimized and the minimum outage probability of the two proposed online policies is near-optimal and closely approaches the outage probability of the optimal offline power allocation policy. 
\appendices  
\ifARXIV
\section{Proof of Theorem \ref{theo:no_stationary_dist} (Non-Existence of a Stationary Energy Distribution for Infinite-Size Buffers with Best-Effort and On-Off Policies)}
\label{app:no_stationary_dist}
Setting $K\to\infty$ and taking the expectation of both sides of (\ref{eq:general_storage_equation}), we obtain \vspace{-0.2cm}
\begin{equation} 
\E[B(i+1)]-\E[B(i)]=\bar{X}-\E[\Pul(i)],\vspace{-0.2cm}
\label{eq:expection_storage_equation}
\end{equation}
where $\{B(i)\}$ represents the storage process of the best-effort policy in (\ref{eq:general_storage_equation_BE}) or that of the on-off policy in (\ref{eq:general_storage_equation_OO}). From (\ref{eq:Pul_policy_BE}) and (\ref{eq:Pul_policy_OO}), $\Pul(i) \leq M, \, \forall \, i \Rightarrow \E[\Pul(i)] \leq M$, hence from (\ref{eq:expection_storage_equation}), $\E[B(i+1)]-\E[B(i)]\geq \bar{X}\!-\!M$ follows.
If $M\!<\!\bar{X}$, then \vspace{-0.2cm}
\begin{equation}
\E[B(i+1)]>\E[B(i)]\vspace{-0.2cm}
\label{eq:accumulating_energy}
\end{equation}
must hold. That is, the mean of the process $\{B(i)\}$ changes (increases) with time, and therefore a stationary distribution for $\{B(i)\}$ does not exist. Furthermore, from (\ref{eq:accumulating_energy}), $\lim_{i\to \infty} \E[B(i)]=\infty$, i.e., the energy accumulates in the buffer. Hence, there must be some time slot $j$, after which for $i>j$, $B(i)> M$ a.s. Next, we prove by contradiction that $j$ must be finite. If $\Pul(j)=B(j)<M$ and $j\to \infty$, then $\lim\limits_{j\to\infty} \E[B(j)]<M$ which violates $\lim\limits_{i\to \infty} \E[B(i)]=\infty$. Hence, $j$ must be finite. This completes the proof.
\section{Proof of Theorem \ref{theo:stationary_dist_infinite} (Uniqueness of a Limiting Energy  Distribution for an Infinite-Size Buffer with Best-Effort and on-off policies)}
\label{app:stationary_dist_infinite_BE}
Consider first the best-effort policy. From Remark \ref{remark:equivalence_to_Morans_Model}, it can be observed that Moran's process $\{Z(i)\}$ is equivalent to the waiting time of a customer in a GI/G/1 queue \cite{asmussen2003applied}, where $X(i)$ is equivalent to the customer service time and $M$ is equivalent to the customers' inter-arrival time. Now, our storage process $\{B(i)\}$ in (\ref{eq:general_storage_equation_BE}) with $K\!\to\!\infty$ is equivalent to the process $U(i)=Z(i)+X(i)$. That is, $\{B(i)\}$ is equivalent to the sojourn time (waiting time plus service time) of a customer in a GI/G/1 queue. Since $\{Z(i)\}$ and $\{X(i)\}$ are independent and $\{X(i)\}$ is stationary, then the steady state behavior of $\{B(i)\}$ is solely governed by that of $\{Z(i)\}$. Hence, from \cite[Corollary 6.5 and Corollary 6.6]{asmussen2003applied}, $M>\bar{X}$ is a sufficient condition for the process $\{B(i)\}$ to possess a unique stationary distribution to which it converges in total variation from any initial distribution.\\
From the similarity between the energy storage process with the on-off policy in (\ref{eq:general_storage_equation_OO}) and the double service rate model by Gaver and Miller in  \cite[Section 3]{Gaver_Miller_1962}, cf. Remark \ref{remark:equivalence_onoff_to_Millers_model}, $M > \bar{X}$ guarantees the existence and uniqueness of the limiting distribution for the storage process $\{B(i)\}$ in (\ref{eq:general_storage_equation_OO}), see \cite[p.112]{Gaver_Miller_1962}.

Next, we prove that $\E[\Pul(i)]=\bar{X}$ holds when $M>\bar{X}$. From the law of conservation of energy flow in the buffer, $\E[\Pul(i)]\leq \bar{X}$ must always hold. Hence, from (\ref{eq:expection_storage_equation}), we obtain 
\begin{equation}
\E[B(i+1)]-\E[B(i)]\geq 0
\label{eq:law_conservation_flow}
\end{equation}
As mentioned in Appendix \ref{app:no_stationary_dist}, a stationary distribution may only exist if (\ref{eq:law_conservation_flow}) holds with equality. In this case $\E[\Pul(i)]=\bar{X}$ holds from (\ref{eq:expection_storage_equation}). This completes the proof.
\section{Proof of Theorem \ref{theo:integral_eq_infinite_BE} (Integral Equation of the Energy Distribution for an Infinite-Size Buffer with the Best-Effort Policy)}
\label{app:proof_BE_integral_eqn_infinite}
To understand the integral equation in (\ref{eq:Integral_eqn_infinite_BE}), we rewrite (\ref{eq:general_storage_equation_BE}) with $K\to\infty$ as 
\begin{equation}
B(i+1)=\begin{cases} X(i) & B(i)\leq M \\ B(i)-M+X(i) & B(i)>M \end{cases}.\vspace{-0.1cm}
\label{eq:Storage_eq_Infinite_u_x_BE}
\end{equation}
By setting $B(i)=u$ and $B(i+1)=x$, the relations $g(x|u\leq M)=f(x)$ and $g(x|u> M)=f(x-u+M)$ hold for the conditional limiting pdfs.  The upper limit ($M+x$) in the second integral in (\ref{eq:Integral_eqn_infinite_BE}) follows from the fact that $f(x-u+M)$ is non-zero only for a non-negative amount of harvested energy, i.e., $x-u+\!M\!\geq\!0$. These considerations lead to (\ref{eq:Integral_eqn_infinite_BE}). 

The integral equation for the limiting cdf in (\ref{eq:CDF_Integral_eqn_infinite_BE}) can be derived from (\ref{eq:Storage_eq_Infinite_u_x_BE}) as follows. Let $G_i(x)$ be the cdf of $B(i)$, then $G_{i+1}(x)$ is given by
$G_{i+1}(x)=\pr(B(i+1)\leq x)\!=\!\pr\left(X(i)\!\leq\! x \middle| B(i)\!\leq \!M\right)+\pr\left(B(i)-M+X(i)\leq x \middle| B(i)> M\right)=\int_{\nu=0}^M F(x) \dd G_i(\nu)+\int_{\nu=M}^{M+x}F(x+M-\nu)\dd G_i(\nu).$
In the steady state of the storage process, i.e., as $i\to \infty$, $G_{i}(x)=G_{i+1}(x)=G(x)$. Furthermore, $G(x)$ can be simplified by performing integration by parts for its second term to get
$G(x)=F(x)G(M)+F(x+M-\nu)G(\nu)\Big|_{\nu=M}^{M+x}-\int_{\nu=M}^{M+x}G(\nu)\dd F(x+M-\nu)=-\int_{\nu=M}^{M+x}G(\nu)\dd F(x+M-\nu)$
where we used $F(0)=0$. By substituting $u=x+M-\nu$, $G(x)$ reduces to (\ref{eq:CDF_Integral_eqn_infinite_BE}).
\fi
\ifARXIV
\section{Proof of Corollary \ref{theo:stationary_dist_infinite_Nakagami} (Energy Distribution for an Infinite-Size Buffer with the Best-Effort Policy and Gamma-Distributed EH Process)}
\else
\section{Proof of Corollary \ref{theo:stationary_dist_infinite_Nakagami} (Best-Effort Policy with Infinite-Size Buffer)}
\fi

\label{app:stationary_dist_infinite_Nakagami_BE}
When $M>\bar{X}$, i.e., $\delta>1$, we know from Theorem \ref{theo:stationary_dist_infinite} that the integral equation in (\ref{eq:CDF_Integral_eqn_infinite_BE}) has a unique solution for $G(x)$. Similar to  \cite[eq. (11)]{Infinite_dam_Gani_Prabhu_1957}, we postulate a solution of the type $G(x)=1-\sum\limits_{n=0}^{m-1}c_n\e^{-\lambda_n x}$,
where the unit term in $G(x)$ ensures that $G(\infty)=1$. To obtain $\lambda_n$ and $c_n$, we substitute with the postulated $G(x)$ in (\ref{eq:CDF_Integral_eqn_infinite_BE}) and use $\dd F(u)=\frac{\lambda^m}{\Gamma(m)}u^{m-1}\e^{-\lambda u}\dd u$, then (\ref{eq:CDF_Integral_eqn_infinite_BE})  reduces to \small
\begin{equation}
1-\sum\limits_{n=0}^{m-1}c_n\e^{-\lambda_n x}=\int\limits_0^x \left(1-\sum\limits_{n=0}^{m-1}c_n\e^{-\lambda_n (x+M-u)}\right)\frac{\lambda^m}{\Gamma(m)}u^{m-1}\e^{-\lambda u}\dd u.
\label{eq:CDF_as_sum_exps_proof}
\end{equation} \normalsize
Using $\int\limits_0^x u^{m-1}\e^{(\lambda_n-\lambda)u}\dd u=(\lambda-\lambda_n)^{-m}\gamma\left(m,x(\lambda-\lambda_n)\right)=\frac{(m-1)!}{(\lambda-\lambda_n)^{m}}\left(1-\e^{(\lambda_n-\lambda)x}\sum\limits_{s=0}^{m-1}\frac{(\lambda-\lambda_n)^s}{s!} x^s\right)$, see \cite[Eq. 3.381.1]{table_of_integrals_Ryzhik}, we get
\small
 \begin{equation}
\sum\limits_{n=0}^{m-1}\!c_n\e^{-\lambda_n x}=\!\sum\limits_{n=0}^{m-1}\!c_n\e^{-\lambda_n x} \left(\frac{\lambda}{\lambda-\lambda_n}\right)^m \e^{-\lambda_n M}
+ \e^{-\lambda x}\sum\limits_{s=0}^{m-1}\!\! x^s \left(\frac{\lambda^s}{s!}\!-\!\sum\limits_{n=0}^{m-1}c_n \frac{(\lambda-\lambda_n)^s}{s!}  \left(\!\frac{\lambda}{\lambda-\lambda_n}\!\right)^m \!\e^{-\lambda_n M}\right),
\label{eq:CDF_as_sum_exps_proof2}
\end{equation}
\normalsize
which holds if the coefficient of $\e^{-\lambda_n x}$ is $c_n$, i.e., if $\left(\frac{\lambda}{\lambda-\lambda_n}\right)^m \e^{-\lambda_n M}=1$ and if the coefficient of $x^s \e^{-\lambda x}$ is zero $\forall \,s$, i.e., $\lambda^s-\sum_{n=0}^{m-1}c_n (\lambda-\lambda_n)^s=0, \,\forall s=0,\ldots,m-1$. The first condition $\left(\frac{\lambda}{\lambda-\lambda_n}\right)^m \e^{-\lambda_n M}=1$ has $m$ roots for $\lambda_n$. These can be obtained by writing the condition as $\left(\frac{\lambda}{\lambda-\lambda_n}\right)^m \e^{-\lambda_n M}=\e^{\J 2\pi n}$. After taking the $m^{\text{th}}$ root of both sides and multiplying by $-\frac{M}{m} \e^{-\frac{\lambda  M}{m}}=-\frac{\delta}{\lambda}\e^{-\delta}$, the condition reduces to $-\delta \e^{-\delta}\e^{-\J\frac{2\pi n}{m}}=(\lambda_n-\lambda)\frac{M}{m}\e^{(\lambda_n-\lambda)\frac{M}{m}}$ whose solution is given by the Lambert W function, namely $(\lambda_n-\lambda)\frac{M}{m}=W_0\left(-\delta \e^{-\delta}\e^{-\J\frac{2\pi n}{m}}\right)$, $\forall n=0,\ldots,m-1$. Having obtained $\lambda_n$, we can obtain $c_n,\,n=0,\ldots,m-1$ by solving the second condition, namely $\sum_{n=0}^{m-1}c_n \left(\frac{\lambda-\lambda_n}{\lambda}\right)^s=1$, $\forall s=0,\ldots,m-1$, which corresponds to a non-homogeneous system of linear equations in $c_n$. Define $\V{c}=[c_0,\ldots,c_{m-1}]^T$ and matrix $\V{A}$ with $A_{sn}=\left(\frac{\lambda-\lambda_n}{\lambda}\right)^s$ as its entry in the $s^{\text{th}}$ row and the $n^{\text{th}}$ column, then $\V{A}\V{c}=\V{1}$, i.e., $\V{c}=\V{A}^{-1}\V{1}$. Finally, the limiting pdf $g(x)$ is obtained by differentiating the cdf $G(x)$. This completes the proof. 
\ifARXIV
\section{Proof of Theorem \ref{theo:OO_integral_eqn_infinite} (Integral Equation of the Energy Distribution for an Infinite-Size Buffer with the On-Off Policy)}
\label{app:proof_OO_integral_eqn_infinite}
To understand the integral equation in (\ref{eq:g_integral_eqn_infinite_OO}), one may set $B(i)=u$ and $B(i+1)=x$, then (\ref{eq:general_storage_equation_OO}) reads\vspace{-0.1cm}
\begin{equation}
x=\begin{cases} u+X(i), & u\leq M \\ u-M+X(i), & u>M \end{cases}.\vspace{-0.1cm}
\label{eq:Storage_eq_Infinite_u_x}
\end{equation}
Thus, for $u\leq M$, $g(x|u\leq M)=f(x-u)$ which is non-zero only for a non-negative amount of harvested energy, i.e., for $u\leq x$. Hence, in the range $u\leq M$, the upper limit on $u$ in the first integral of (\ref{eq:Integral_eqn_infinite_parta_OO}) and (\ref{eq:Integral_eqn_infinite_partb_OO}) is given by $u=\min(x,M)$. The second integral in (\ref{eq:Integral_eqn_infinite_parta_OO}) and (\ref{eq:Integral_eqn_infinite_partb_OO}) corresponds to the range $u>M$, where from (\ref{eq:Storage_eq_Infinite_u_x}), $g(x|u> M)=f(x-u+M)$ which is non-zero only for a non-negative amount of harvested energy, i.e., for $u\leq M+x$. This completes the proof. 
\fi
\ifARXIV
\section{Proof of Corollary \ref{theo:stationary_dist_infinite_Nakagami_OO} (Energy Distribution for an Infinite-Size Buffer with the On-Off Policy and Gamma-Distributed EH Process)}
\label{app:stationary_dist_infinite_Nakagami_OO}
Define $g_1(x)\definedas g(x), \, 0\!\leq \!x<\!M$, and $g_2(x)\definedas g(x), \, x\!>\!M$. Similar to the approach in Appendix \ref{app:stationary_dist_infinite_Nakagami_BE}, we postulate a solution of the form $g_2(x)=\sum\limits_{n=0}^{m-1} \lambda_n c_n\e^{-\lambda_n x}$, see Corollary \ref{theo:stationary_dist_infinite_Nakagami}.  Substituting $g_2(x)$ in (\ref{eq:Integral_eqn_infinite_partb_OO}) and using the pdf of the Gamma distribution $f(x)=\frac{\lambda^m}{(m-1)!}\e^{-\lambda x}x^{m-1}$, we get
\begin{equation}
\begin{aligned}
\sum\limits_{n=0}^{m-1} \lambda_n c_n\e^{-\lambda_n x}=&\int\limits_{u=0}^M\frac{\lambda^m}{(m-1)!}\e^{-\lambda(x-u)}(x-u)^{m-1}g_1(u)\dd u\\
&+\sum\limits_{n=0}^{m-1} \lambda_n c_n \Big(\frac{\lambda}{\lambda-\lambda_n}\Big)^m\e^{-\lambda_n M}\Bigg[\e^{-\lambda_n x}-\e^{-\lambda x}\sum\limits_{s=0}^{m-1}\frac{(\lambda-\lambda_n)^s}{s!}x^s\Bigg],
\end{aligned}
\label{eq:integral_eqn_g2_postulation_OO}
\end{equation}
where we used $\int_{u=M}^{M+x}\e^{(\lambda-\lambda_n)u}(x-u+M)^{m-1}\dd u=\e^{(\lambda-\lambda_n)(x+M)}(\lambda-\lambda_n)^{-m}\gamma\Big(m,(\lambda-\lambda_n)x\Big)$, see \cite[Eq. 3.382.1]{table_of_integrals_Ryzhik},   and $\gamma\Big(m,(\lambda-\lambda_n)x\Big)=(m-1)!\Big[1-\e^{-(\lambda-\lambda_n)x}\sum\limits_{s=0}^{m-1}\frac{(\lambda-\lambda_n)^sx^s}{s!}\Big]$.
Eq. (\ref{eq:integral_eqn_g2_postulation_OO}) holds if the coefficient of $\e^{-\lambda_n x}$ is $\lambda
_nc_n$, i.e., if $\left(\frac{\lambda}{\lambda-\lambda_n}\right)^m \e^{-\lambda_n M}=1$, which leads to $(\lambda_n-\lambda)\frac{M}{m}=W_0\left(-\delta \e^{-\delta}\e^{-\J\frac{2\pi n}{m}}\right)$, $\forall n=0,\ldots,m-1$, see Appendix \ref{app:stationary_dist_infinite_Nakagami_BE}, and if the coefficient of $x^s \e^{-\lambda x}$ is zero $\forall \,s$, i.e.,
\begin{equation}
\frac{\lambda^m (-1)^{m-1-s}}{(m-1-s)!}\underbrace{\int\limits_{u=0}^M\e^{\lambda u}u^{m-1-s}g_1(u)\dd u}_{I_s}-\sum\limits_{n=0}^{m-1}\lambda_nc_n(\lambda-\lambda_n)^s=0, \quad \quad \forall s=0,\ldots,m-1,
\label{eq:cn_homogeneous_eqns_OO}
\end{equation}
where we used the binomial expansion $(x-u)^{m-1}=\sum\limits_{s=0}^{m-1}\binom{m-1}{s}(-u)^{m-1-s}x^s$ in the first term in (\ref{eq:integral_eqn_g2_postulation_OO}). In order to obtain a unique solution for $c_n$, we need to use the unit area condition $\int_0^M g_1(u)\dd u+\int_M^{\infty} g_2(u)\dd u=1$ and add it to (\ref{eq:cn_homogeneous_eqns_OO}) to get a non-homogeneous system of linear equations in $c_n$.  Hence, in the following, we first obtain $g_1(x)$, and use it to calculate the integral $I_s$ in (\ref{eq:cn_homogeneous_eqns_OO}) as well as the unit area condition in terms of $c_n$ and finally we solve for $c_n$. 
\subsection{Expressing $g_1(x)$ in Terms of $c_n$}
Substituting the pdf of the Gamma distribution and $g_2(x)$ in (\ref{eq:Integral_eqn_infinite_parta_OO}), we get the following integral equation for $g_1(x)$
\begin{equation}
g_1(x)=\int\limits_{u=0}^x\frac{\lambda^m}{(m-1)!}\e^{-\lambda(x-u)}(x-u)^{m-1}g_1(u)\dd u\\
+\underbrace{\sum\limits_{n=0}^{m-1} \lambda_n c_n \Bigg[\e^{-\lambda_n x}-\e^{-\lambda x}\sum\limits_{t=0}^{m-1}\frac{(\lambda-\lambda_n)^t}{t!}x^t\Bigg]}_{T(x)},
\label{eq:volterra_integral_eqn_g1_OO}
\end{equation}
where $T(x)$ is identical to the second term in (\ref{eq:integral_eqn_g2_postulation_OO}) since the second terms in (\ref{eq:Integral_eqn_infinite_parta_OO}) and (\ref{eq:Integral_eqn_infinite_partb_OO}) are identical. Eq. (\ref{eq:volterra_integral_eqn_g1_OO}) is a Volterra integral equation of the second kind, whose solution is given by \cite[eq. 2.2-2.31]{polyanin2008handbook}
\begin{equation}
g_{1}(x)=T(x)+\underbrace{\int\limits_{u=0}^{x}R(x-u)T(u)\dd u}_{I_R(x)},
\label{eq:g_l_1_OO_INF}
\end{equation}
where 
\begin{equation}
R(x)=\frac{\e^{-\lambda x}}{m}\sum\limits_{k=0}^{m-1}\lambda\e^{\lambda x\cos(\eta_k)}\cos\Big(\eta_k+\lambda x\sin(\eta_k)\Big)
\label{eq:R_x_OO_INF}
\end{equation}
and $\eta_k\definedas\frac{2\pi k}{m}$. Hence, $I_R(x)$ can be written as
\begin{equation}
I_R(x)=\sum\limits_{n=0}^{m-1} \lambda_n c_n \Bigg[\underbrace{\int\limits_{u=0}^x R(x-u)\e^{-\lambda_n u}\dd u}_{I_n(x)}-\sum\limits_{t=0}^{m-1}\frac{(\lambda-\lambda_n)^t}{t!}\underbrace{\int\limits_{u=0}^x R(x-u)\e^{-\lambda u} u^t\dd u}_{I_t(x)} \Bigg],
\label{eq:IRx_INF_OO}
\end{equation}
and $g_1(x)$ in (\ref{eq:g_l_1_OO_INF}) reduces to
\begin{equation}
g_1(x)=\sum\limits_{n=0}^{m-1} \lambda_n c_n \Bigg[\e^{-\lambda_n x}+I_n(x)-\sum\limits_{t=0}^{m-1}\frac{(\lambda-\lambda_n)^t}{t!}\Bigg(\e^{-\lambda x}x^t+I_t(x)\Bigg)\Bigg].
\label{eq:g_l_1_OO_INF2}
\end{equation}
Using $R(x)$ in (\ref{eq:R_x_OO_INF}), Both $I_t(x)$ can be written as
\begin{equation}
I_t(x)=\sum\limits_{k=0}^{m-1}\frac{\lambda\e^{-\lambda x(1-\cos(\eta_k))}}{m}\underbrace{\Big[\cos(\mu_{k,x})I_{\cos}(x)+\sin(\mu_{k,x})I_{\sin}(x)\Big]}_{I_{\rm total}},
\label{eq:I_g}
\end{equation}
where $\mu_{k,x}=\eta_k+\lambda x\sin(\eta_k)$, $I_{\cos}(x)=\int_{u=0}^{x}\e^{(-\lambda\cos(\eta_k))u}\cos\Big(\lambda\sin(\eta_k)u\Big)u^{t}\dd u=\Re\{I_{\text{C}}\}$,
and 
$I_{\sin}(x)=\int_{u=0}^{x}\e^{(-\lambda\cos(\eta_k))u}\sin\Big(\lambda\sin(\eta_k)u\Big)u^{t}\dd u=-\Im\{I_{\text{C}}\}$,
where we used \cite[eq. 3.944.3]{table_of_integrals_Ryzhik} and \cite[eq. 3.944.1]{table_of_integrals_Ryzhik} with $I_{\text{C}}=(\lambda \e^{\J\eta_k})^{-(t+1)}\gamma\Big(t+1,(\lambda \e^{\J\eta_k})x\Big)$. Therefore, $I_{\rm total}$ in (\ref{eq:I_g})  can be written as $I_{\rm total}=\Re\{\e^{\J\mu_{k,x}}I_{\text{C}}\}$, and $I_t$ reduces to
$I_t=\frac{1}{m}\sum_{k=0}^{m-1}\Re\Big\{\e^{-\theta_k x}(\lambda\e^{\J\eta_k})^{-t}\gamma(t+1,\lambda x\e^{\J\eta_k})\Big\}$, where $\theta_k\definedas \lambda(1-\e^{\J\eta_k})$.

Moreover, $I_n(x)$ has the same form as (\ref{eq:I_g}) but with $I_{\cos}(x)\!=\!\!\int\limits_{0}^{x}\e^{-(\lambda_n\!-\!\lambda\!+\!\lambda\!\cos\!\!(\!\eta_k))u}\cos\big(\lambda\sin(\eta_k)u\big)\dd u$ $=I_A\!+I_B$, and $I_{\sin}(x)=\int_{u=0}^{x}\e^{-(\lambda_n-\lambda+\lambda\cos(\eta_k))u}\sin\big(\lambda\sin(\eta_k)u\big)\dd u=\J(I_A- I_B)$, with $I_A=[1-\e^{-(\lambda_n-\theta_k)x}]/[2(\lambda_n-\theta_k)]$ and $I_B=[1-\e^{-(\lambda_n-\theta_k^*)x}]/[2(\lambda_n-\theta_k^*)]$. Hence, $I_{\rm total}$ in (\ref{eq:I_g}) reduces to $I_{\rm total}=I_A\e^{\J\mu_{k,x}}\!+\!I_B\e^{-\J\mu_{k,x}}$ and $I_n(x)$ reduces to $I_n(x)\!=\!\frac{\lambda}{m}\sum_{k=0}^{m-1}\big(a_{nk}\e^{-\theta_k x}\!+\!b_{nk}\e^{-\theta_k^* x}\!-\!(a_{nk}+b_{nk})\e^{-\lambda_n x}\big)$ with $a_{nk}\!=\![2\big(\lambda_n\e^{-\J\eta_k}\!+\!\theta_k^*\big)]^{-1}$ and  $b_{nk}\!=\![2\big(\lambda_n\e^{\J\eta_k}\!+\!\theta_k\big)]^{-1}$. We note that $I_A\!\neq\! I_B^*$ and $a_{nk}\!\neq\! b_{nk}^*$  since $\lambda_n$ is complex-valued. Substituting with $I_n(x)$ and $I_t(x)$ in (\ref{eq:g_l_1_OO_INF2}), $g_1(x)$ reduces to (\ref{eq:g1_INF_OO}).

\vspace{-0.3cm}\subsection{Defining a Non-Homogeneous System of Linear Equations in $c_n$}
Using $g_1(x)$ in (\ref{eq:g1_INF_OO}), $I_s$ in (\ref{eq:cn_homogeneous_eqns_OO})  reduces to 
\begin{equation}
\begin{aligned}
I_s&=\int\limits_{u=0}^M\e^{\lambda u}u^{m-1-s}g_1(u)\dd u=\sum\limits_{n=0}^{m-1}\!\!\lambda_n c_n \Biggg[\!I_1\!+\frac{\lambda}{m}\sum\limits_{k=0}^{m-1}\Big(a_{nk}I_2+b_{nk}I_3-(a_{nk}+b_{nk})I_4\Big)\\[-1ex]
&-\!\!\sum\limits_{t=0}^{m-1}\!\frac{(\lambda-\lambda_n)^t}{t!}\Bigg\{\!I_5+\sum\limits_{k=0}^{m-1}\Re\Big\{\frac{(\lambda \e^{\J\eta_k})^{-t}t!}{m}\Big(\!I_6-\sum\limits_{q=0}^{t}\!\!\frac{(\lambda\e^{\J\eta_k})^q}{q!}I_7\Big)\Big\}\Bigg\}\Biggg].
\end{aligned}
\label{eq:Integral_in_non_homegenous_eqns}
\end{equation}
Using \cite[Eq. 3.381.1]{table_of_integrals_Ryzhik} and defining $I(\beta)=\int_{u=0}^M\e^{-\beta u}u^{m-1-s}\dd u=\beta^{-m+s}\gamma(m-s,\beta M)$, then $I_1=I_4=I(\lambda_n-\lambda)$, $I_2=I_3^*=I(\theta_k-\lambda)$,  $I_5=\int_{u=0}^{M}u^{m+t-s-1}\dd u=\frac{M^{m+t-s}}{(m+t-s)}$, $I_6=I(-\lambda\e^{\J\eta_k})$, and $I_7=I_5$ for $t=q$. Furthermore, we scale  (\ref{eq:cn_homogeneous_eqns_OO}) by $1/\lambda^{s+1}$  in order to improve the accuracy of the inversion of matrix $\V{A}$ in Corollary \ref{theo:stationary_dist_infinite_Nakagami_OO}. If we define (\ref{eq:cn_homogeneous_eqns_OO}) after scaling as $\sum_{n=0}^{m-1}B_{sn}c_n=0$ then $B_{sn}$ reduces to (\ref{eq:B_sn_OO}).

Finally, to solve for $c_n$, we need to add the unit area condition to (\ref{eq:cn_homogeneous_eqns_OO}). The unit area condition is given by $\int_0^M g_1(x)\dd x+\int_M^{\infty} g_2(x)\dd x=1$. Using $\Delta g_{12}(x)\definedas g_1(x)-g_2(x)$, we have $\int_0^\infty g(x) \dd x=\int_0^\infty g_2(x)\dd x+\int_0^M \Delta g_{12}(x)\dd x=1$ with $\int_0^\infty g_2(x)\dd x=\sum_{n=0}^{m-1}c_n$ and $\int_0^M \Delta g_{12}(x)\dd x$ is identical to (\ref{eq:Integral_in_non_homegenous_eqns}) but with $I_1=0$,  $I_2=I_6=I_3^*\definedas \zeta_k=\int_{u=0}^{M}\e^{-\theta_k u}\dd u=(1-\e^{-\theta_k M})/\theta_k$ for $k\neq 0$ and $\zeta_k=M$ for $k=0$, $I_4=\int_{u=0}^M \e^{-\lambda_n u}\dd u=(1-\e^{-\lambda_n M})/\lambda_n$, $I_5=\int_{u=0}^{M}\e^{-\lambda u}u^{t}\dd u=\lambda^{-t-1}\gamma(t+1,\lambda M)$, and $I_7=I_5$ for $t=q$. If we define the unit area condition as $\sum_{n=0}^{m-1}D_{n}c_n=1$ then $D_{n}$ reduces to (\ref{eq:D_n_OO}). Adding equations  $\sum_{n=0}^{m-1}B_{sn}c_n=0$ and $\sum_{n=0}^{m-1}D_{n}c_n=1$, we get a non-homogeneous system of linear equations which has a unique solution for $c_n$. This completes the proof.
%
\else
\ifARXIV
\section{Proof of Corollary \ref{theo:stationary_dist_infinite_Nakagami_OO} (Energy Distribution for an Infinite-Size Buffer with the On-Off Policy and Gamma-Distributed EH Process)}
\else
\section{Proof of Corollary \ref{theo:stationary_dist_infinite_Nakagami_OO} (On-Off Policy with Infinite-Size Buffer)}
\fi
\label{app:stationary_dist_infinite_Nakagami_OO_shortened}
Define $g_1(x)\definedas g(x), \, 0\!\leq \!x<\!M$, and $g_2(x)\definedas g(x), \, x\!>\!M$. Similar to the approach in Appendix \ref{app:stationary_dist_infinite_Nakagami_BE}, we postulate a solution of the form $g_2(x)=\sum\limits_{n=0}^{m-1} \lambda_n c_n\e^{-\lambda_n x}$, see Corollary \ref{theo:stationary_dist_infinite_Nakagami}.  Substituting $g_2(x)$ in (\ref{eq:Integral_eqn_infinite_partb_OO}) and using the pdf of the Gamma distribution $f(x)=\frac{\lambda^m}{(m-1)!}\e^{-\lambda x}x^{m-1}$, we get $\sum\limits_{n=0}^{m-1} \lambda_n c_n\e^{-\lambda_n x}=$\small
\begin{equation}
\int\limits_{u=0}^M\frac{\lambda^m}{(m-1)!}\e^{-\lambda(x-u)}(x-u)^{m-1}g_1(u)\dd u\\[-1ex]
+\sum\limits_{n=0}^{m-1} \lambda_n c_n \Big(\frac{\lambda}{\lambda-\lambda_n}\Big)^m\e^{-\lambda_n M}\Bigg[\e^{-\lambda_n x}-\e^{-\lambda x}\sum\limits_{s=0}^{m-1}\frac{(\lambda-\lambda_n)^s}{s!}x^s\Bigg],
\label{eq:integral_eqn_g2_postulation_OO}
\end{equation}\normalsize
where we used $\int_{u=M}^{M+x}\e^{(\lambda-\lambda_n)u}(x-u+M)^{m-1}\dd u\!=\!\e^{(\lambda-\lambda_n)(x+M)}(\lambda-\lambda_n)^{-m}\gamma\left(m,(\lambda-\lambda_n)x\right)$, see \cite[Eq. 3.382.1]{table_of_integrals_Ryzhik},   and $\gamma\left(m,(\lambda-\lambda_n)x\right)=(m-1)!\Big[1-\e^{-(\lambda-\lambda_n)x}\sum\limits_{s=0}^{m-1}\frac{(\lambda-\lambda_n)^sx^s}{s!}\Big]$.
Eq. (\ref{eq:integral_eqn_g2_postulation_OO}) holds if the coefficient of $\e^{-\lambda_n x}$ is $\lambda
_nc_n$, i.e., if $\left(\frac{\lambda}{\lambda-\lambda_n}\right)^m \!\e^{-\lambda_n M}\!=\!1$, which leads to $(\lambda_n-\lambda)\frac{M}{m}=W_0\left(-\delta \e^{-\delta}\e^{-\J\frac{2\pi n}{m}}\right)$, $\forall\, n\!=\!0,\ldots,m-1$, see Appendix \ref{app:stationary_dist_infinite_Nakagami_BE}, and if the coefficient of $x^s \e^{-\lambda x}$ is zero $\forall \,s$, i.e.,
\begin{equation}
\frac{\lambda^m (-1)^{m-1-s}}{(m-1-s)!}\int\limits_{u=0}^M\e^{\lambda u}u^{m-1-s}g_1(u)\dd u-\sum\limits_{n=0}^{m-1}\lambda_nc_n(\lambda-\lambda_n)^s=0, \quad \quad \forall s=0,\ldots,m-1,
\label{eq:cn_homogeneous_eqns_OO}
\end{equation}
where we used the binomial expansion $(x-u)^{m-1}=\sum\limits_{s=0}^{m-1}\binom{m-1}{s}(-u)^{m-1-s}x^s$ in the first term in (\ref{eq:integral_eqn_g2_postulation_OO}).   Due to space limitations, we provide in the following only the main steps needed to finalize the proof. First, $g_1(x)$ is obtained by substituting $f(x)$ and $g_2(x)$ in (\ref{eq:Integral_eqn_infinite_parta_OO}), i.e., 
\begin{equation}
g_1(x)\!=\!\!\int\limits_{u=0}^x\!\!\frac{\lambda^m}{(m-1)!}\e^{-\lambda(x-u)}(x-u)^{m-1}g_1(u)\dd u
+\underbrace{\sum\limits_{n=0}^{m-1}\!\!\lambda_n c_n \Bigg[\e^{-\lambda_n x}-\e^{-\lambda x}\sum\limits_{t=0}^{m-1}\!\frac{(\lambda-\lambda_n)^t}{t!}x^t\Bigg]}_{T(x)},
\label{eq:volterra_integral_eqn_g1_OO}
\end{equation}
where $T(x)$ is identical to the second term in (\ref{eq:integral_eqn_g2_postulation_OO}) since the second terms in (\ref{eq:Integral_eqn_infinite_parta_OO}) and (\ref{eq:Integral_eqn_infinite_partb_OO}) are identical. Eq. (\ref{eq:volterra_integral_eqn_g1_OO}) is a Volterra integral equation of the second kind, whose solution is given by $g_{1}(x)=T(x)+I_R(x)$, where $I_R(x)=\int_{u=0}^{x}R(x-u)T(u)\dd u$ and $R(x)=\frac{\e^{-\lambda x}}{m}\sum_{k=0}^{m-1}\lambda\e^{\lambda x\cos(\eta_k)}\cos\left(\mu_{k,x}\right)$
with $\mu_{k,x}\definedas\eta_k+\lambda x\sin(\eta_k)$ and $\eta_k\definedas\frac{2\pi k}{m}$, see \cite[eqs. 2.2-2.31]{polyanin2008handbook}. Solving $I_R(x)$ and using $T(x)$ in (\ref{eq:volterra_integral_eqn_g1_OO}), $g_1(x)$ reduces to (\ref{eq:g1_INF_OO}). 
In order to improve the accuracy of the inversion of matrix $\V{A}$ in Corollary \ref{theo:stationary_dist_infinite_Nakagami_OO}, we scale (\ref{eq:cn_homogeneous_eqns_OO}) by $1/\lambda^{s+1}$ and define it after scaling as  $\sum_{n=0}^{m-1}B_{sn}c_n=0$, then $B_{sn}$ reduces to (\ref{eq:B_sn_OO}).
Finally, to solve for $c_n$, we need to add the unit area condition  $\int_0^\infty g(x)\dd x\!=\!1$ in (\ref{eq:cn_homogeneous_eqns_OO}). If we define the unit area condition $\int_0^M g_1(x)\dd x+\int_M^{\infty} g_2(x)\dd x\!=\!1$ as $\sum_{n=0}^{m-1}D_{n}c_n\!=\!1$, then $D_{n}$ reduces to (\ref{eq:D_n_OO}). Adding equations  $\sum_{n=0}^{m-1}B_{sn}c_n\!=\!0$ and $\sum_{n=0}^{m-1}D_{n}c_n\!=\!1$, we get a non-homogeneous system of linear equations which has a unique solution in $c_n$. This completes the proof.
\fi
\ifARXIV
\section{Proof of Remark \ref{coro:real_valued_pdfs} (Real-Valued  Energy Distribution for an Infinite-Size Buffer with the Best-Effort and the On-Off Policies)}
\label{app:real_valued_pdfs}
The limiting pdf $g(x)$ of the energy buffer content for the best-effort and the on-off policies, obtained in Corollaries \ref{theo:stationary_dist_infinite_Nakagami} and \ref{theo:stationary_dist_infinite_Nakagami_OO} are expressed in terms of the complex-valued coefficients $\lambda_n$ and $c_n$. Coefficient $\lambda_n$ is complex-valued given by $\lambda_n\!=\lambda+\frac{m}{M} W_0\left(-\delta \e^{-\delta}\e^{-\J\frac{2 \pi n}{m}}\right)$. Furthermore, coefficients $c_n$, $n=0,\ldots,m-1$ are also complex-valued since they are obtained by solving the equation $\V{A} \V{c}=\V{1}$, where the coefficients of matrix $\V{A}$ are based on $\lambda_n$ and therefore are complex-valued. Next, we show that although $\lambda_n$ and $c_n$ are complex-valued, $g(x)$ is real-valued. 

The Lambert W function is the inverse function of $z=w\e^{w}$, i.e., $W_0(z)=w$. If $z$ is complex-valued, it means that $w=W_0(z)$ is also complex-valued. Furthermore, since $w^*\e^{w^*}=z^*$, it follows that $W_0(z^*)=w^*=[W_0(z)]^*$. From  $\lambda_n\!=\lambda+\frac{m}{M} W_0\left(-\delta \e^{-\delta}\e^{-\J\frac{2 \pi n}{m}}\right)$, $n=0,\ldots,m-1$, $\lambda_0$ is real and  using $\e^{\frac{-\J 2\pi n}{m}}=\left(\e^{\frac{-\J 2\pi (m-n)}{m}}\right)^{*}$ for $n\neq 0$, it follows that $W_0\left(\e^{\frac{-\J 2\pi n}{m}}\right)=W_0\left(\left(\e^{\frac{-\J 2\pi (m-n)}{m}}\right)^{*}\right)=\left(W_0\left(\e^{\frac{-\J 2\pi (m-n)}{m}}\right)\right)^{*}$. Since $\lambda$, $m/M$, and $-\delta\e^{-\delta}$ are all real-valued, it follows that $\lambda_{n}=\lambda_{m-n}^*$. In particular, if $m$ is odd, then $\lambda_0$ is real-valued and $\lambda_{n}=\lambda_{m-n}^*$ $\forall$ $n=1,\ldots,\frac{m-1}{2}$, and if $m$ is even, then $\lambda_0$ and $\lambda_{m/2}$ are real-valued and $\lambda_{n}=\lambda_{m-n}^*$ $\forall$ $n=1,\ldots,\frac{m}{2}-1$.

Next, we use $\sum_{n=0}^{m-1} c_n A_{s,n}= 1$, $\forall\, s=0,\ldots,m-1$. For both the best-effort and the on-off transmission policies, $A_{s,n}$ is complex-valued only due to $\lambda_n$. Thus, $A_{s,n}=A_{s,m-n}^*$. Hence, for a given $s$, in order for the summation $\sum_{n=0}^{m-1} c_n A_{s,n}$ to be real-valued, $\left( c_n A_{s,n}\right)=\left(c_{m-n}A_{s,m-n}\right)^*$ must hold, i.e., $c_n=c_{m-n}^*$. In fact, the complex conjugate properties for $\lambda_n$ and $c_n$ can be used to solve for only half of the unknown coefficients $\lambda_n$ and $c_n$. Finally, using the identities $a^*b^*=(ab)^*$ and $\Im\{a+a*\}=0$, the pdfs $g(x)$ in Corollaries \ref{theo:stationary_dist_infinite_Nakagami} and \ref{theo:stationary_dist_infinite_Nakagami_OO} are real-valued. This completes the proof. \vspace{-0.5cm}

\section{Proof of Theorem \ref{theo:limiting_dist_Finite} (Uniqueness of a Limiting Energy Distribution for a Finite-Size Buffer with the Best-Effort and On-Off Policies)}
\label{app:limiting_dist_Finite_existence_uniqueness_BE_OO} 
Similar to the random walk process on a half line in \cite[Section 4.3.1]{Meyn_Tweedie}, if the distribution of the EH process $\{X(i)\}$ has an infinite positive tail, then the state space $S$ of the storage processes $\{B(i)\}$ of the best-effort policy in (\ref{eq:general_storage_equation_BE}) and that of the on-off policy in  (\ref{eq:general_storage_equation_OO}) contain an atom at $K$. That is, the energy level $B(i)\!=\!K$ is reachable with non-zero probability. Define the measure $\phi$ as $\phi(0,K)\!=\!0$ and $\phi(\{K\})\!=\!1$, then the process $\{B(i)\}$ is $\phi$-irreducible, see \cite[Section 4.2]{Meyn_Tweedie}. Furthermore, $\{B(i)\}$ is also $\psi$-irreducible with $\psi(A)=\sum_n \mathbb{P}^n(K,A)2^{-n}$, where $\mathbb{P}^n(x,A)$ is the probability that the Markov chain moves from energy state $x$ to energy set $A$ in $n$ time steps. The dynamics of $\{B(i)\}$ in (\ref{eq:general_storage_equation_BE}) and (\ref{eq:general_storage_equation_OO}) ensures that all energy sets are reachable a.s. from any initial state of the buffer in a finite mean time. Hence, the chain is positive Harris recurrent \cite[Proposition 9.1.1]{Meyn_Tweedie}, where \emph{positive} recurrence follows from \cite[Theorem 10.2.2]{Meyn_Tweedie}. Thus, $\{B(i)\}$ possesses a unique stationary distribution $\pi$. Finally, with the additional property of $\{B(i)\}$ being aperiodic (i.e., no energy level sets are only revisited after a fixed number of time slots $>1$ (period $>1$)), it follows from \cite[Theorem 13.3.3]{Meyn_Tweedie} that $\{B(i)\}$ converges to the distribution $\pi$ in total variation from any initial distribution $\chi$, i.e., $\lim\limits_{n\to\infty}\sup\limits_{A}|\int \chi(\dd x)\mathbb{P}^n(x,A)-\pi(A)|\to0$. This completes the proof.

\section{Proof of Theorem \ref{theo:BE_Finite_intgeral_eqn} (Integral Equation of the Energy Distribution for a Finite-Size Buffer with the Best-Effort Policy)}
\label{app:BE_Finite_intgeral_eqn}
The integral equations in (\ref{eq:g_integral_eqn_finite_BE}) and (\ref{eq:partc_BE}) can be understood by adopting the same approach used to prove (\ref{eq:Integral_eqn_infinite_BE}). In particular, if we set $B(i)=u$ and $B(i+1)=x$, then (\ref{eq:general_storage_equation_BE}) reads
\begin{equation}
x=
\begin{cases}
X(i) & u\leq M \quad\&\quad X(i)<K\\
u-M+X(i) & u>M \quad\&\quad u-M+X(i)<K\\
K & {\rm otherwise.}
\end{cases}\vspace{-0.1cm}
\label{eq:Our_storage_equation_cases}
\end{equation}
Consider first the continuous part of the distribution, i.e., $g(x)$ defined on $0\leq x <K$ given in (\ref{eq:g_integral_eqn_finite_BE}). Eq. (\ref{eq:parta_BE}) is identical to (\ref{eq:Integral_eqn_infinite_BE}), however, we need to further ensure that the upper limit on $u$ given by $M+x$ (for a non-negative harvested energy) is in the domain of $g(u)$, i.e., $M+x<K$ must hold. Hence, (\ref{eq:parta_BE}) is valid only for $x<K-M$ (with strict inequality). For the rest of the range of $x$ in (\ref{eq:partb_BE}), i.e., $K-M\leq x<K$, the upper limit  $M+x$ on $u$ is larger than or equal to $K$. Thus, the whole range of $0< u\leq K$ contributes to $g(x)$. The range $0< u<K$ is covered by the first two integrals in (\ref{eq:partb_BE}), and $u=K$ is considered in the last term. Finally, at $x=K$, the probability of a full buffer $\pi(K)$ in (\ref{eq:partc_BE}) is obtained similar to (\ref{eq:partb_BE}). However, rather than considering the pdf at the amount of harvested energy $x-[u-M]^+$ as in (\ref{eq:partb_BE}), we consider the ccdf $\bar{F}(x-[u-M]^+)$ instead (at $x\!=\!K$). This is because the full buffer level $K$ is attained when the amount of harvested energy is larger than or equal to $K\!-\![u\!-\!M]^+$, $\forall \, u$ in $0\!<\!u\!\leq\!K$. This completes the proof.\vspace{-0.3cm}

\fi
\ifARXIV
\section{Proof of Corollary \ref{coro:limiting_dist_Finite_Nakagami_exact_BE} (Energy Distribution for a Finite-Size Buffer with the Best-Effort Policy and Gamma-Distributed EH Process)}
\label{app:limiting_dist_Finite_Nakagami_exact_BE}
We obtain the limiting pdf of the storage process in (\ref{eq:general_storage_equation_BE}) using an approach similar to that used by Prabhu in \cite{Prabhu_1985} to obtain the corresponding limiting cdf of process $\{U(i)\}$ in Moran's model with finite dam size and Gamma distributed inputs, cf. Remark \ref{remark:equivalence_to_Morans_Model}. Substituting the pdf of the Gamma-distributed EH process $f(x)\!=\!\frac{\lambda^m}{\Gamma(m)}x^{m-1}\e^{-\lambda x}$ and the corresponding ccdf $\bar{F}(x)=\e^{-\lambda x}\sum_{r=0}^{m-1}\frac{(\lambda x)^r}{r!}$ in (\ref{eq:g_integral_eqn_finite_BE}) and (\ref{eq:partc_BE}), respectively, we obtain
\begin{numcases}{g(x)\!=\!\! \label{eq:g_integral_eqn_finite_Nakagami}}
\hspace{-0.1cm}\frac{\lambda^m\e^{-\lambda x}}{(m\!-\!1)!}\left[x^{m-1}\!\!\! \int\limits_{u=0}^{M}\!\!\! g(u) \dd u +\!\!\!\!\!\int\limits_{u=M}^{M+x}\!\!\! (x\!-\!u\!+\! M)^{m-1} \e^{-\lambda(M-u)}g(u)\dd u \right]\!,  &  \hspace{-0.5cm} $0\!\leq\!x\!<\!\! K\!\!-\!\!M$  \label{eq:parta_Nakagami} \\ 
\hspace{-0.1cm}\begin{aligned}\frac{\lambda^m\e^{-\lambda x}}{(m\!-\!1)!}\Biggg[x^{m-1}\!\!\! \int\limits_{u=0}^{M}\!\!\! g(u) \dd u &+ \!\!\!\!\int\limits_{u=M}^{K}\!\!\!(x\!-\!u\!+\! M)^{m-1} \e^{-\lambda(M-u)}g(u)\dd u  \\[-2ex]
& +\pi(K)(x\!-\!K\!+\!M)^{m-1}\e^{-\lambda(M-K)}\Biggg], \end{aligned}&  \hspace{-0.6cm} $K\!\!-\!\!M \!\leq \!x\!<\!\!K$ \label{eq:partb_Nakagami}
\end{numcases}
\begin{equation}\vspace{-0.2cm}
\pi(K)\!=\!\frac{\e^{-\lambda K}}{1\!-\!\e^{-\lambda M}\!\!\sum\limits_{t=0}^{m-1}\!\frac{(\lambda M)^t}{t!}}\!\left[\!\sum\limits_{r=0}^{m-1}\!\frac{(\lambda K)^r}{r!}\!\!\int\limits_{u=0}^{M}\!\! g(u) \dd u +  \!\!\!\int\limits_{u=M}^{K}\!\!\!\!\e^{-\lambda(M-u)}\sum\limits_{r=0}^{m-1}\!\!\frac{(\lambda(K\!-\!u\!+\!M))^r}{r!} g(u)\dd u \right].
\label{eq:partc_Nakagami}
\end{equation}
Define $I_1(r)=K^r I_1$, $I_1=\int_0^M g(u) \dd u$, $I_2(r)=\int_{u=M}^{K}(K-u+M)^{r} \e^{-\lambda(M-u)}g(u)\dd u$, and 
$\alpha_r=\lambda^r[I_1(r)+I_2(r)+M^r\pi(K)\e^{-\lambda(M-K)}]$.
Hence, $g(x)$ and $\pi(K)$ can be written as

\begin{numcases}{g(x)\!=\label{eq:g_integral_eqn_finite_Nakagami2}}
\begin{aligned}\hspace{-0.1cm}g_0(x)-\frac{\lambda^m\e^{-\lambda x}}{(m\!-\!1)!}\Biggg[&\underbrace{\int\limits_{u=M+x}^{K}\!\!\! (x\!-\!u\!+\! M)^{m-1} \e^{-\lambda(M-u)}g(u)\dd u}_{I_3}\\[-2ex]
&+\pi(K)(x\!-\!K\!+\!M)^{m-1}\e^{-\lambda(M-K)}\Biggg]\!, \hspace{-0.4cm}\end{aligned} &  $0\!\leq\!x\!<\!\! K\!\!-\!\!M$  \label{eq:parta_Nakagami2} \\ 
\hspace{-0.1cm}\definedas g_0(x)=\frac{\lambda^{m-r}\e^{-\lambda x}}{(m\!-\!1)!}\sum\limits_{r=0}^{m-1}\binom{m-1}{r}(x-K)^{m-1-r}\alpha_r,&   $K\!\!-\!\!M \!\leq \!x\!<\!\!K$ \label{eq:partb_Nakagami2}
\end{numcases}
\begin{equation}
\pi(K)\!=\!\frac{\e^{-\lambda K}}{1-\e^{-\lambda M}\sum\limits_{t=0}^{m-1}\frac{(\lambda M)^t}{t!}}\sum\limits_{r=0}^{m-1}\frac{\lambda ^r}{r!}\left(I_1(r)+I_2(r)\right).
\label{eq:partc_Nakagami2}
\end{equation}
First, we start by writing $\pi(K)$ in (\ref{eq:partc_Nakagami2}) in terms of the unknown coefficients $\alpha_r,\, r=0,\ldots, m-1$. In (\ref{eq:partc_Nakagami2}), $\lambda^r[I_1(r)+I_2(r)]$ can be replaced by $\alpha_r-(\lambda M)^r\pi(K)\e^{-\lambda(M-K)}$ and $\pi(K)$ reduces to (\ref{eq:Pi_K_Nakagami}).
\subsection{Expressing $g(x)$ in Terms of $\alpha_r$}
Next, we derive the limiting pdf $g(x)$ in stripes of width $M$. In particular, we derive $g_n(x)\definedas g(x),\quad [K-(n+1)M]^+\leq x<K-nM$, $\forall n=0,\ldots,l'$, where $l'=l-1$ if $K$ is an integer multiple of $M$ (i.e., $K=lM$, $l\in\mathbb{Z}^+$), and $l'=l$, otherwise (i.e., $K=lM+\Delta$, with $\Delta\neq 0$). The $M$-width pdf section $g_n(x)$ will be derived recursively by induction from $n=0$ backwards till $n=l'$  in terms of the unknown coefficients $\alpha_r$, $r=0,\ldots,m-1$. With $g_0(x)$ given in (\ref{eq:partb_Nakagami2}), we start from $n=1$ and use 
(\ref{eq:parta_Nakagami2}) to obtain $g_1(x)$, $K-2M\leq x<K-M$. In this case, the lower integral limit $M+x$ in $I_3$ satisfies $K-M\leq M+x<K$, hence, $g(u)$ in $I_3$ is $g_0(u)$. Substituting $g_0(u)$ in $I_3$ and using 
\begin{equation}
\int\limits_{u=b}^a (b-u)^{c}(u-a)^{d}\dd u=-(b-a)^{c+d+1}\frac{c!d!}{(c+d+1)!}, \quad \text{if } b<a \text{ and } c,d\in\mathbb{N},
\label{eq:general_integral1}
\end{equation}
 with $b=M+x$, $a=K$, $c=m-1$, and $d=m-1-r$, $g_1(x)$ reduces to
\begin{align}
g_1(x)=\frac{\lambda^m \e^{-\lambda x}}{(m\!-\!1)!}\Biggg[&\sum\limits_{r=0}^{m-1}\alpha_r\lambda^{-r}\binom{m\!-\!1}{r}\Bigg(\!(x\!-\!K)^{m-1-r}\!+\!\lambda^m\e^{-\lambda M}\frac{(m\!-\!1\!-\!r)!}{(2m\!-\!1\!-\!r)!}(M\!+\!x\!-\!K)^{2m-1-r}\!\Bigg)\notag\\[-2ex]
&-\e^{-\lambda(M-K)}\pi(K)(M+x-K)^{m-1}\Biggg].
\label{eq:g1_Nakagami}
\end{align}\vspace{-0.2cm}
To obtain $g_2(x)$, $K-3M\leq x<K-2M$, we use (\ref{eq:parta_Nakagami2}) and define the integrand of $I_3$ as $A(x,u) g(u)$, then $I_3$ can be written as
\begin{equation*}
I_3=\!\!\int\limits_{M+x}^{K-M}\!\!\!\!A(x,u)g_1(u)\dd u+\int\limits_{K-M}^{K}\!\!\!\!A(x,u)g_0(u)\dd u = \int\limits_{M+x}^{K}\!\!\!\!A(x,u)g_0(u)\dd u+\int\limits_{M+x}^{K-M}\!\!\!\!A(x,u)(g_1(u)-g_0(u))\dd u,
\label{eq:I_3_redefine_for_g2}
\end{equation*}
which can be obtained by solving integrals of the form (\ref{eq:general_integral1}). Substituting $I_3$ in (\ref{eq:parta_Nakagami2}), we get
\begin{align}
g_2(x)=&\frac{\lambda^m \e^{-\lambda x}}{(m-1)!}\Biggg[\sum\limits_{r=0}^{m-1}\!\alpha_r\lambda^{-r}\binom{m\!-\!1}{r}\Bigg(\!(x\!-\!K)^{m-1-r}+\lambda^m\e^{-\lambda M}\frac{(m\!-\!1\!-\!r)!}{(2m\!-\!1\!-\!r)!}(M+x-K)^{2m\!-\!1\!-\!r}\notag\\[-2ex]
&+\lambda^{2m}\e^{-2\lambda M}\frac{(m-1-r)!}{(3m-1-r)!}(2M+x-K)^{3m-1-r}\Bigg)
-\e^{-\lambda M}\e^{\lambda K}\pi(K)(M+x-K)^{m-1}\notag\\[-2ex]
&-\lambda^m\e^{-2\lambda M}\e^{\lambda K}\pi(K)(2M+x-K)^{2m-1}\frac{(m-1)!}{(2m-1)!}\Biggg].
\label{eq:g2_Nakagami}
\end{align} 
Similarly, $g_n(x)$, $n\geq 3$, can be obtained by induction, and $g_n(x)$, $n=0,\ldots,l'$, can be written as
\begin{align}
g_n(x)=\frac{\lambda^m \e^{-\lambda x}}{(m-1)!}\Biggg[&\sum\limits_{q=0}^{n}\lambda^{qm}\e^{-q\lambda M}\sum\limits_{r=0}^{m-1}\!\alpha_r\lambda^{-r}\binom{m\!-\!1}{r}\frac{(m\!-\!1\!-\!r)!}{((q\!+\!1)m\!-\!1\!-\!r)!}(qM\!+\!x\!-\!K)^{(q+1)m-1-r}\notag\\[-2ex]
&-\sum\limits_{q=1}^{n}\lambda^{(q-1)m}\e^{-q\lambda M}\e^{\lambda K}\pi(K)(qM+x-K)^{qm-1}\frac{(m-1)!}{(qm-1)!}\Biggg].
\label{eq:gn_Nakagami_step}
\end{align}
Substituting $\pi(K)$ from (\ref{eq:Pi_K_Nakagami}) in (\ref{eq:gn_Nakagami_step}), $g_n(x)$ reduces to (\ref{eq:pdf_Nakagami_m}).

\subsection{Defining a Non-Homogeneous System of Linear Equations in $\alpha_s$}
Now, only the unknown coefficients $\alpha_s$, $s=0,\ldots,m-1$, remain to be determined. Using (\ref{eq:Pi_K_Nakagami}), $\alpha_s$ satisfies
\begin{equation}
\alpha_s\definedas\sum\limits_{r=0}^{m-1}\alpha_r d_{sr}=\lambda^s\left(I_1(s)+I_2(s)+M^s\e^{-\lambda M}\sum\limits_{r=0}^{m-1}\frac{\alpha_r}{r!}\right),\quad s=0,\ldots,m-1.
\label{eq:alpha_r_definition}
\end{equation}
Next, we obtain $I_1(s)=K^sI_1$ and $I_2(s)$ in terms of $\alpha_r$, $r=0,\ldots,m-1$, so that (\ref{eq:alpha_r_definition}) forms a system of linear equations in $\alpha_r$. Using $g(x)$ in (\ref{eq:pdf_Nakagami_m}) and assuming a general buffer size $K=lM+\Delta$ with $l\in\mathbb{Z}^+$, $\Delta<M$, $I_1$ can be written as
\begin{equation}
I_1=\int\limits_0^M g(u) \dd u=\!\!\!\!\int\limits_0^{K-lM}g_l(u)\dd u +\!\!\!\!\int\limits_{K-lM}^M g_{l-1}(u) \dd u=\underbrace{\int\limits_0^{M}g_{l-1}(u)\dd u}_{I_a} +\underbrace{\!\!\!\!\int\limits_0^{K-lM} \left(g_{l}(u)-g_{l-1}(u)\right) \dd u}_{I_b}.
\label{eq:I1_expansion}
\end{equation}
To get $I_a$, we solve integrals of the form
\begin{equation}
\int\limits_{0}^d\e^{-\lambda u}(u-a)^b=b!\lambda^{-(b+1)}\sum\limits_{t=0}^b\frac{\lambda^t}{t!}\left[(-a)^t-\e^{-\lambda d} (d-a)^t\right], \text{if } a\geq d \text{ and } b\in\mathbb{N},
\label{eq:Ia_expansion}
\end{equation}
with $a=K-qM$, $d=M$, and $b=(q+1)m-r-1$ in one integral and $b=qm-1$ in another integral. This reduces $I_a$ to
\begin{equation}
I_a=\sum\limits_{r=0}^{m-1}\frac{\alpha_r}{r!}\sum\limits_{q=0}^{l-1}\e^{-\lambda M q}\sum\limits_{t=qm}^{(q+1)m-r-1}\frac{\lambda^t}{t!}\Big\{(qM-K)^t-\e^{-\lambda M}((q+1)M-K)^t\Big\}.
\label{eq:Ia_expansion3}
\end{equation}
Similarly, $I_b=\sum_{r=0}^{m-1}\frac{\alpha_r}{r!}\e^{-\lambda M l}\sum_{t=lm}^{(l+1)m-r-1}\frac{\lambda^t}{t!}(lM-K)^t$. 

Define the integrand of $I_2(s)$ as $B(u,s)g(u)$, then $I_2(s)=\int_M^K B(u,s)g(u) \dd u$ can be expanded as
\begin{equation}
I_2(s)=\int\limits_{K-M}^K B(u,s) g_0(u)\dd u + \int\limits_{K-2M}^{K-M} B(u,s) g_1(u)\dd u + \ldots+ \int\limits_{M}^{K-(l-1)M} B(u,s) g_{l-1}(u)\dd u.
\label{eq:I2_expansion1}
\end{equation}
Since the sum in variable $q$ in (\ref{eq:pdf_Nakagami_m}) has $q=n$ as upper limit, if we combine the integrals of the terms with variable $q$ in  (\ref{eq:I2_expansion1}), then the $q^{\text{th}}$ term is integrated over the limits $u=M\to$ $u=K-qM$ and $I_2(s)$ reduces to
\begin{align}
I_2(s)=\sum\limits_{r=0}^{m-1}\frac{\alpha_r}{r!}\Biggg[&
\sum\limits_{q=0}^{l-1}\lambda^{(q+1)m-r}\e^{-\lambda M (q+1)} \int\limits_{M}^{K-qM}\frac{(u-(K-qM))^{(q+1)m-r-1}}{((q+1)m-r-1)!}(M+K-u)^s \dd u\notag\\[-1ex]
&- \sum\limits_{q=1}^{l-1}\lambda^{qm}\e^{-\lambda M (q+1)} \int\limits_{M}^{K-qM}\frac{ (u-(K-qM))^{qm-1}}{(qm-1)!}(M+K-u)^s \dd u\Biggg].
\label{eq:I2_expansion2}
\end{align}
Using $(M+K-u)^s=\sum_{t=0}^s\binom{s}{t}(M-u)^t K^{s-t}$, and solving integrals of the form (\ref{eq:general_integral1}) with $b=M$, $a=K-qM$, $c=t$, and $d=(q+1)m-r-1$ in one integral and $d=qm-1$ in another integral, then $I_2(s)$ reduces to 
\begin{equation}
\begin{aligned}
I_2(s)=-K^s\sum\limits_{r=0}^{m-1}\frac{\alpha_r}{r!}\Biggg[
&\sum\limits_{q=0}^{l-1}\lambda^{(q+1)m-r}\e^{-\lambda M (q+1)}\sum\limits_{t=0}^s\binom{s}{t}K^{-t}t!\frac{((q+1)M-K)^{(q+1)m-r+t}}{((q+1)m-r+t)!}\\[-1ex]
&-\sum\limits_{q=1}^{l-1}\lambda^{qm}\e^{-\lambda M (q+1)} \sum\limits_{t=0}^s\binom{s}{t}K^{-t}t!\frac{((q+1)M-K)^{qm+t}}{(qm+t)!}\Biggg].
\end{aligned}
\label{eq:I2_expansion3}
\end{equation}
Finally, the last term inside the brackets in (\ref{eq:alpha_r_definition}) is the same as the term in the second line in (\ref{eq:I2_expansion3}) for $q=0$, namely $\e^{-\lambda M}\sum_{r=0}^{m-1}\frac{\alpha_r}{r!}M^s=\sum_{r=0}^{m-1}\frac{\alpha_r}{r!}\e^{-\lambda M} \sum_{t=0}^s\binom{s}{t}K^{s-t}(M-K)^t$. 

Combining $I_a$ in (\ref{eq:Ia_expansion3}), $I_b$, and $I_2(s)$ in (\ref{eq:I2_expansion3}), then $d_{sr}$ in (\ref{eq:alpha_r_definition}) reduces to
\begin{equation}
\begin{aligned}
&d_{sr}=\frac{K^s\lambda^s}{r!}\sum\limits_{q=0}^{l}\e^{-\lambda M q}\sum\limits_{t=qm}^{(q+1)m-r-1}\frac{(-\lambda)^t}{t!}\Big\{(K-qM)^t-\e^{-\lambda M}\left([K-(q+1)M]^+\right)^t\Big\}\\[-1.2ex]
&-\!\frac{\lambda^sK^s}{r!}\!\sum\limits_{q=0}^{l-1}\!\lambda^{qm}\e^{\!\!-\lambda M (q+1)}\!\sum\limits_{t=0}^s\!\binom{s}{t}K^{-t}t!\Big\{\lambda^{m-r}\frac{((q\!+\!1)M\!-\!K)^{(q\!+\!1)m\!-\!r\!+\!t}}{((q\!+\!1)m\!-\!r\!+\!t)!}- \frac{((q\!+\!1)M\!-\!K)^{qm\!+\!t}}{(qm\!+\!t)!}\Big\}.
\label{eq:d_sr}
\end{aligned}
\end{equation}
In order to obtain a unique solution for $\alpha_r$, $r=0,\ldots,m-1$, we additionally use the unit area condition $\int_0^K g(u)\dd u+\pi(K)=1$, where similar to the analysis used to expand $I_2(s)$, the integral $\int_0^K g(u)\dd u$ can be expanded as follows
\begin{equation}
\int\limits_0^K g(u)\dd u=\int\limits_{K-M}^K g_0(u)\dd u + \int\limits_{K-2M}^{K-M}  g_1(u)\dd u + \ldots+ \int\limits_{0}^{K-lM} g_{l}(u)\dd u.
\label{eq:area_under_g}
\end{equation}
From (\ref{eq:pdf_Nakagami_m}), the term with variable $q$ in (\ref{eq:area_under_g}) is integrated over the limits $u=0\to$ $u=K-qM$. Hence, (\ref{eq:area_under_g}) reduces to
\begin{equation}
\begin{aligned}
\int\limits_0^K g(u)\dd u=\sum\limits_{r=0}^{m-1}\frac{\alpha_r}{r!}\Biggg[&
\sum\limits_{q=0}^{l}\lambda^{(q+1)m-r}\e^{-\lambda M q}\int\limits_{0}^{K-qM}\e^{-\lambda u}\frac{ (u-(K-qM))^{(q+1)m-r-1}}{((q+1)m-r-1)!}\dd u\\
 &- \sum\limits_{q=1}^{l}\lambda^{qm}\e^{-\lambda M q}\int\limits_{0}^{K-qM}\e^{-\lambda u}\frac{ (u-(K-qM))^{qm-1}}{(qm-1)!}\dd u \Biggg],
\end{aligned}
\label{eq:area_under_g_step2}
\end{equation}
where the integrals in (\ref{eq:area_under_g_step2}) have the form of (\ref{eq:Ia_expansion}) with $d=a$, hence $\sum\limits_{t=0}^b \frac{\lambda^t}{t!} (d-a)^t=1$ since at $t=0$, $0^0=1$ and (\ref{eq:area_under_g_step2}) reduces to
\begin{equation*}
\begin{aligned}
\int\limits_0^K g(u)\dd u=\sum\limits_{r=0}^{m-1}\frac{\alpha_r}{r!}\Biggg[
-\e^{-\lambda K}+\sum\limits_{t=0}^{m-r-1}\frac{(-\lambda K)^t}{t!}+\sum\limits_{q=1}^l\e^{-\lambda M q }\sum\limits_{t=qm}^{(q+1)m-r-1}\frac{(\lambda(qM-K))^t}{t!}\Biggg],
\end{aligned}
\label{eq:area_under_g_step3}
\end{equation*}
which when added to $\pi(K)$ in (\ref{eq:Pi_K_Nakagami}) reduces to 
\begin{equation}
\int\limits_0^K g(u)\dd u+\pi(K)=\sum\limits_{r=0}^{m-1}\alpha_r\underbrace{\frac{1}{r!}\sum\limits_{q=0}^l\e^{-\lambda M q }\sum\limits_{t=qm}^{(q+1)m-r-1}\frac{(\lambda(qM-K))^t}{t!}}_{a_r}=1.
\label{eq:area_under_g_step4}
\end{equation}
Adding (\ref{eq:area_under_g_step4}) to (\ref{eq:alpha_r_definition}), we get $\alpha_s+\sum_{r=0}^{m-1}\alpha_r (a_r-d_{sr})=1, \,\,s=0,\ldots,m-1,$ which forms a non-homogeneous system of linear equations that can be written in matrix-vector notation as $\V{\alpha}+\V{A}\V{\alpha}=\V{1}$, where $\V{\alpha}=[\alpha_0,\ldots,\alpha_{m-1}]^T$, and $\V{A}$ is an $m\times m$  matrix whose entry in the $s^{\text{th}}$  row and the $r^{\text{th}}$ column is $A_{sr}=a_r-d_{sr}$, where $a_r$ is given in (\ref{eq:area_under_g_step4}) and $d_{sr}$ is given in (\ref{eq:d_sr}). This completes the proof. 


\else
\section{Proof of Corollary \ref{coro:limiting_dist_Finite_Nakagami_exact_BE} (Best-Effort Policy with Finite-Size Buffer)}
\label{app:limiting_dist_Finite_Nakagami_exact_BE}
We obtain the limiting pdf of the storage process in (\ref{eq:general_storage_equation_BE}) using an approach similar to that used by Prabhu in \cite{Prabhu_1985} to obtain the corresponding limiting cdf of process $\{U(i)\}$ in Moran's model with finite dam size and Gamma distributed inputs, cf. Remark \ref{remark:equivalence_to_Morans_Model}. Substituting the pdf of the Gamma-distributed EH process $f(x)\!=\!\frac{\lambda^m}{\Gamma(m)}x^{m-1}\e^{-\lambda x}$ and the corresponding ccdf $\bar{F}(x)=\e^{-\lambda x}\sum_{r=0}^{m-1}\frac{(\lambda x)^r}{r!}$ in (\ref{eq:g_integral_eqn_finite_BE}) and (\ref{eq:partc_BE}), respectively, we obtain
\begin{numcases}{g(x)\!=\!\! \label{eq:g_integral_eqn_finite_Nakagami}}
\hspace{-0.1cm}\frac{\lambda^m\e^{-\lambda x}}{(m\!-\!1)!}\left[x^{m-1}\!\!\! \int\limits_{u=0}^{M}\!\!\! g(u) \dd u +\!\!\!\!\!\int\limits_{u=M}^{M+x}\!\!\! (x\!-\!u\!+\! M)^{m-1} \e^{-\lambda(M-u)}g(u)\dd u \right]\!, \hspace{-0.4cm} &  $0\!\leq\!x\!<\!\! K\!\!-\!\!M$  \label{eq:parta_Nakagami} \\ 
\hspace{-0.1cm}\begin{aligned}\frac{\lambda^m\e^{-\lambda x}}{(m\!-\!1)!}\Biggg[x^{m-1}\!\!\! \int\limits_{u=0}^{M}\!\!\! g(u) \dd u &+ \!\!\!\!\int\limits_{u=M}^{K}\!\!\!(x\!-\!u\!+\! M)^{m-1} \e^{-\lambda(M-u)}g(u)\dd u  \\[-2ex]
& +\pi(K)(x\!-\!K\!+\!M)^{m-1}\e^{-\lambda(M-K)}\Biggg], \hspace{-0.4cm}\end{aligned}&   $K\!\!-\!\!M \!\leq \!x\!<\!\!K$ \label{eq:partb_Nakagami}
\end{numcases}
\begin{equation}\vspace{-0.2cm}
\pi(K)\!=\!\frac{\e^{-\lambda K}}{1\!-\!\e^{-\lambda M}\!\!\sum\limits_{t=0}^{m-1}\!\frac{(\lambda M)^t}{t!}}\!\left[\!\sum\limits_{r=0}^{m-1}\!\frac{(\lambda K)^r}{r!}\!\!\int\limits_{u=0}^{M}\!\! g(u) \dd u +  \!\!\!\int\limits_{u=M}^{K}\!\!\!\!\e^{-\lambda(M-u)}\sum\limits_{r=0}^{m-1}\!\!\frac{(\lambda(K\!-\!u\!+\!M))^r}{r!} g(u)\dd u \right].
\label{eq:partc_Nakagami}
\end{equation}
Define $I_1(r)=K^r I_1$, $I_1=\int_0^M g(u) \dd u$, $I_2(r)=\int_{u=M}^{K}(K-u+M)^{r} \e^{-\lambda(M-u)}g(u)\dd u$, and 
$\alpha_r=\lambda^r[I_1(r)+I_2(r)+M^r\pi(K)\e^{-\lambda(M-K)}]$.
Hence, $g(x)$ and $\pi(K)$ can be written as

\begin{numcases}{g(x)\!=\label{eq:g_integral_eqn_finite_Nakagami2}}
\begin{aligned}\hspace{-0.1cm}g_0(x)-\frac{\lambda^m\e^{-\lambda x}}{(m\!-\!1)!}\Biggg[&\underbrace{\int\limits_{u=M+x}^{K}\!\!\! (x\!-\!u\!+\! M)^{m-1} \e^{-\lambda(M-u)}g(u)\dd u}_{I_3}\\[-2ex]
&+\pi(K)(x\!-\!K\!+\!M)^{m-1}\e^{-\lambda(M-K)}\Biggg]\!, \hspace{-0.4cm}\end{aligned} &  $0\!\leq\!x\!<\!\! K\!\!-\!\!M$  \label{eq:parta_Nakagami2} \\ 
\hspace{-0.1cm}\definedas g_0(x)=\frac{\lambda^{m-r}\e^{-\lambda x}}{(m\!-\!1)!}\sum\limits_{r=0}^{m-1}\binom{m-1}{r}(x-K)^{m-1-r}\alpha_r, &   $K\!\!-\!\!M \!\leq \!x\!<\!\!K$ \label{eq:partb_Nakagami2}
\end{numcases}
\begin{equation}
\pi(K)\!=\!\frac{\e^{-\lambda K}}{1-\e^{-\lambda M}\sum\limits_{t=0}^{m-1}\frac{(\lambda M)^t}{t!}}\sum\limits_{r=0}^{m-1}\frac{\lambda ^r}{r!}\left(I_1(r)+I_2(r)\right).
\label{eq:partc_Nakagami2}
\end{equation}
We start by writing $\pi(K)$ in (\ref{eq:partc_Nakagami2}) in terms of the unknown coefficients $\alpha_r,\, r=0,\ldots, m-1$. In (\ref{eq:partc_Nakagami2}), $\lambda^r[I_1(r)+I_2(r)]$ can be replaced by $\alpha_r\!-\!(\lambda M)^r\pi(K)\e^{-\lambda(M-K)}$ and $\pi(K)$ reduces to (\ref{eq:Pi_K_Nakagami}).

Next, we derive the limiting pdf $g(x)$ in stripes of width $M$. In particular, we  derive $g_n(x)\definedas g(x),\, [K-(n+1)M]^+\leq x<K-nM$, $\forall n=0,\ldots,l'$, where $l'=l-1$ if $K$ is an integer multiple of $M$ (i.e., $K=lM$, $l\in\mathbb{Z}^+$), and $l'=l$, otherwise (i.e., $K=lM+\Delta$, with $\Delta\neq 0$). The $M$-width pdf sections $g_n(x)$ can be derived recursively by induction from $n=0$ backwards till $n=l'$  in terms of the unknown coefficients $\alpha_r$, $r=0,\ldots,m-1$. With $g_0(x)$ given in (\ref{eq:partb_Nakagami2}), we start from $n=1$ and use 
(\ref{eq:parta_Nakagami2}) to obtain $g_1(x)$, $K-2M\leq x<K-M$. Then $g_1(x)$ is used in (\ref{eq:parta_Nakagami2}) to obtain $g_2(x)$, and so on. By induction, $g_n(x)$ can be obtained as
\begin{align}
g_n(x)=\frac{\lambda^m \e^{-\lambda x}}{(m-1)!}\Biggg[\!\!&\sum\limits_{q=0}^{n}\lambda^{qm}\e^{-q\lambda M}\sum\limits_{r=0}^{m-1}\alpha_r\lambda^{-r}\binom{m\!-\!1}{r}\frac{(m\!-\!1\!-\!r)!}{((q\!+\!1)m\!-\!1\!-\!r)!}(qM\!+\!x\!-\!K)^{(q+1)m-1-r}\notag\\[-2ex]
&-\sum\limits_{q=1}^{n}\lambda^{(q-1)m}\e^{-q\lambda M}\e^{\lambda K}\pi(K)(qM+x-K)^{qm-1}\frac{(m-1)!}{(qm-1)!}\Biggg].
\label{eq:gn_Nakagami_step}
\end{align}
Substituting $\pi(K)$ from (\ref{eq:Pi_K_Nakagami}) in (\ref{eq:gn_Nakagami_step}), $g_n(x)$ reduces to (\ref{eq:pdf_Nakagami_m}).

Now, only the unknown coefficients $\alpha_r$, $r=0,\ldots,m-1$, remain to be determined. Using (\ref{eq:Pi_K_Nakagami}), $\alpha_s$ satisfies
\begin{equation}
\alpha_s\definedas \sum\limits_{r=0}^{m-1}\alpha_r d_{sr}=\lambda^s\Big(I_1(s)+I_2(s)+M^s\e^{-\lambda M}\sum\limits_{r=0}^{m-1}\frac{\alpha_r}{r!}\Big),\quad s=0,\ldots,m-1.
\label{eq:alpha_r_definition}
\end{equation}
Next, we obtain $I_1(s)=K^sI_1$ and $I_2(s)$ in terms of $\alpha_r$, $r=0,\ldots,m-1$, so that (\ref{eq:alpha_r_definition}) forms a system of linear equations in $\alpha_r$. Using $g(x)$ in (\ref{eq:pdf_Nakagami_m}) and assuming a general buffer size $K=lM+\Delta$ with $l\in\mathbb{Z}^+$, $\Delta<M$, we get $I_1=\int_0^M g(u) \dd u\!=\!\int_0^{K-lM}g_l(u)\dd u\! +\!\int_{K-lM}^M g_{l-1}(u) \dd u=I_a+I_b$, where 
\begin{equation*}
I_a=\int\limits_0^{M}g_{l-1}(u)\dd u=\sum\limits_{r=0}^{m-1}\frac{\alpha_r}{r!}\sum\limits_{q=0}^{l-1}\e^{-\lambda M q}\sum\limits_{t=qm}^{(q+1)m-r-1}\frac{\lambda^t}{t!}\Big\{(qM-K)^t-\e^{-\lambda M}((q+1)M-K)^t\Big\}
\end{equation*}
and $I_b=\int_0^{K-lM} \left(g_{l}(u)-g_{l-1}(u)\right) \dd u=\sum_{r=0}^{m-1}\frac{\alpha_r}{r!}\e^{-\lambda M l}\sum_{t=lm}^{(l+1)m-r-1}\frac{\lambda^t}{t!}(lM-K)^t.$

Define the integrand of $I_2(s)$ as $B(u,s)g(u)$, then $I_2(s)$ can be expanded as 
\begin{equation}
I_2(s)=\int\limits_{K-M}^K B(u,s) g_0(u)\dd u + \int\limits_{K-2M}^{K-M} B(u,s) g_1(u)\dd u + \ldots+ \int\limits_{M}^{K-(l-1)M} B(u,s) g_{l-1}(u)\dd u.
\label{eq:I2_expansion1}
\end{equation}
Since the sum in variable $q$ in (\ref{eq:pdf_Nakagami_m}) has $q=n$ as upper limit, if we combine the integrals of the terms with variable $q$ in  (\ref{eq:I2_expansion1}), then the $q^{\text{th}}$ term is integrated over the limits $u=M\to$ $u=K-qM$ and $I_2(s)$ reduces to
\begin{equation}
\begin{aligned}
I_2(s)=-K^s\sum\limits_{r=0}^{m-1}\frac{\alpha_r}{r!}\Biggg[
&\sum\limits_{q=0}^{l-1}\lambda^{(q+1)m-r}\e^{-\lambda M (q+1)}\sum\limits_{t=0}^s\binom{s}{t}K^{-t}t!\frac{((q+1)M-K)^{(q+1)m-r+t}}{((q+1)m-r+t)!}\\[-1ex]
&- \sum\limits_{q=1}^{l-1}\lambda^{qm}\e^{-\lambda M (q+1)} \sum\limits_{t=0}^s\binom{s}{t}K^{-t}t!\frac{((q+1)M-K)^{qm+t}}{(qm+t)!}\Biggg].
\end{aligned}
\label{eq:I2_expansion3}
\end{equation}
The last term inside the brackets in (\ref{eq:alpha_r_definition}) is the same as the term in the second line in (\ref{eq:I2_expansion3}) for $q\!=\!0$, namely $\e^{-\lambda M}\sum_{r=0}^{m-1}\frac{\alpha_r}{r!}M^s\!=\!\sum_{r=0}^{m-1}\frac{\alpha_r}{r!}\e^{-\lambda M} \sum_{t=0}^s\binom{s}{t}K^{s-t}(M-K)^t$. Hence, $d_{sr}$ in (\ref{eq:alpha_r_definition}) is
\begin{equation}
\begin{aligned}
&d_{sr}=\frac{K^s\lambda^s}{r!}\sum\limits_{q=0}^{l}\e^{-\lambda M q}\sum\limits_{t=qm}^{(q+1)m-r-1}\frac{(-\lambda)^t}{t!}\Big\{(K-qM)^t-\e^{-\lambda M}\left([K-(q+1)M]^+\right)^t\Big\}\\[-1.2ex]
&-\!\frac{\lambda^sK^s}{r!}\!\sum\limits_{q=0}^{l-1}\!\lambda^{qm}\e^{\!\!-\lambda M (q+1)}\!\sum\limits_{t=0}^s\!\binom{s}{t}K^{-t}t!\Big\{\lambda^{m-r}\frac{((q\!+\!1)M\!-\!K)^{(q\!+\!1)m\!-\!r\!+\!t}}{((q\!+\!1)m\!-\!r\!+\!t)!}-\frac{((q\!+\!1)M\!-\!K)^{qm\!+\!t}}{(qm\!+\!t)!}\Big\}.
\label{eq:d_sr}
\end{aligned}\vspace{-0.2cm}
\end{equation}
In order to obtain a unique solution for $\alpha_r$, $r=0,\ldots,m-1$, we additionally use the unit area condition $\int_0^K g(u)\dd u+\pi(K)=1$, where similar to the analysis used to expand $I_2(s)$, the integral $\int_0^K g(u)\dd u$ can be expanded as follows\vspace{-0.2cm}
\begin{equation}
\int\limits_0^K g(u)\dd u=\int\limits_{K-M}^K g_0(u)\dd u + \int\limits_{K-2M}^{K-M}  g_1(u)\dd u + \ldots+ \int\limits_{0}^{K-lM} g_{l}(u)\dd u.
\label{eq:area_under_g}
\end{equation}
From (\ref{eq:pdf_Nakagami_m}), the term with variable $q$ in (\ref{eq:area_under_g}) is integrated over the limits $u=0\to$ $u=K-qM$. Hence, the unit area condition reduces to
\begin{equation}
\int\limits_0^K g(u)\dd u+\pi(K)=\sum\limits_{r=0}^{m-1}\alpha_r\underbrace{\frac{1}{r!}\sum\limits_{q=0}^l\e^{-\lambda M q }\sum\limits_{t=qm}^{(q+1)m-r-1}\frac{(\lambda(qM-K))^t}{t!}}_{a_r}=1.
\label{eq:area_under_g_step4}\vspace{-0.2cm}
\end{equation}
Adding (\ref{eq:area_under_g_step4}) to (\ref{eq:alpha_r_definition}), we get $\alpha_s+\sum_{r=0}^{m-1}\alpha_r (a_r-d_{sr})=1, \,\,\, s=0,\ldots,m-1,$ which forms a non-homogeneous system of linear equations that can be written in matrix-vector notation as $\V{\alpha}+\V{A}\V{\alpha}=\V{1}$, where $\V{\alpha}=[\alpha_0,\ldots,\alpha_{m-1}]^T$, and $\V{A}$ is a matrix whose entry in the $s^{\text{th}}$  row and the $r^{\text{th}}$ column is $A_{sr}=a_r-d_{sr}$, where $a_r$ is given in (\ref{eq:area_under_g_step4}) and $d_{sr}$ is given in (\ref{eq:d_sr}). This completes the proof. 
\fi
\ifARXIV
\section{Proof of Theorem \ref{theo:OO_Finite_intgeral_eqn} (Integral Equation of the Energy Distribution for a Finite-Size Buffer with the On-Off Policy)}
\label{app:OO_Finite_intgeral_eqn}
The integral equations in (\ref{eq:g_integral_eqn_finite}) and (\ref{eq:partd}) can be derived by adopting the same approach used for the proof of (\ref{eq:g_integral_eqn_infinite_OO}). In particular, if we set $B(i)=u$ and $B(i+1)=x$, then (\ref{eq:general_storage_equation_OO}) reads
\begin{equation}
x=
\begin{cases}
u+X(i) & u\leq M \quad\&\quad u+X(i)<K\\
u-M+X(i) & u>M \quad\&\quad u-M+X(i)<K\\
K & {\rm otherwise.}
\end{cases}
\label{eq:Our_storage_equation_cases}
\end{equation}
Consider first the continuous part of the distribution, i.e., $g(x)$ defined on $0\leq x <K$ given in (\ref{eq:g_integral_eqn_finite}). Eqs. (\ref{eq:parta}) and (\ref{eq:partb}) are identical to (\ref{eq:Integral_eqn_infinite_parta_OO}) and (\ref{eq:Integral_eqn_infinite_partb_OO}), respectively. However, we need to further ensure that the upper limit on $u$ given by $M+x$ (for a non-negative harvested energy) is in the domain of $g(u)$. That is, in (\ref{eq:parta}), $\max_x(M+x)<K$ must hold, i.e., $K>2M$ and in (\ref{eq:partb}), $M+x<K$ must hold. Hence, (\ref{eq:partb}) is valid only for $x<K-M$ (with strict inequality). For the rest of the range of $x$ in (\ref{eq:partc}), i.e., $K-M\leq x<K$, the upper limit  $M+x$ on $u$ is larger than or equal to $K$. Thus, the whole range of $0< u\leq K$ contributes to $g(x)$. The range $0< u<K$ is covered by the first two integrals in (\ref{eq:partc}), and $u=K$ is considered in the last term. Finally, at $x=K$, the probability that the buffer is full, $\pi(K)$, in (\ref{eq:partd}) is obtained similar to (\ref{eq:partc}). However, rather than considering the pdf at the amount of harvested energy $x-u+M\mathds{1}_{u>M}$ as in (\ref{eq:partc}), we consider the ccdf $\bar{F}(x-u+M\mathds{1}_{u>M})$ instead (at $x\!=\!K$). This is because the full buffer level $K$ is attained when the amount of harvested energy is larger than or equal to $K-u+M\mathds{1}_{u>M}$, $\forall\, u$ in $0\!<\!u\!\leq\!K$. This completes the proof.

\fi
\ifARXIV
\section{Proof of Corollary \ref{coro:limiting_dist_Finite_Nakagami_exact_OO}  (Energy Distribution for a Finite-Size Buffer with the On-Off Policy and Gamma-Distributed EH Process)}
\label{app:limiting_dist_Finite_Nakagami_exact_OO_arxiv}
We first derive $g(x)$ for $M<x<K$ in (\ref{eq:g2_OO}) and $\pi(K)$
in (\ref{eq:Pi_K_Nakagami_OO}). Notice that the integral equations of the on-off policy in (\ref{eq:partb}), (\ref{eq:partc}), and (\ref{eq:partd}) differ from those of the best-effort policy in (\ref{eq:parta_BE}), (\ref{eq:partb_BE}), and (\ref{eq:partc_BE}), respectively, only in the first term. Hence, when substituting the pdf and the ccdf of the Gamma distributed EH process in (\ref{eq:partb})-(\ref{eq:partd}), we get (\ref{eq:parta_Nakagami})-(\ref{eq:partc_Nakagami}) with the difference that the first term in the brackets in  (\ref{eq:parta_Nakagami}) and (\ref{eq:partb_Nakagami}) has to be replaced by $\int_{u=0}^M(x-u)^{m-1}\e^{\lambda u}g(u)\dd u$, and the first term in the brackets in (\ref{eq:partc_Nakagami}) has to be replaced by $\sum_{r=0}^{m-1}\frac{\lambda^r}{r!}\int_{u=0}^{M}\e^{\lambda u} (K-u)^r g(u) \dd u$. Due to this similarity, it follows that the pdf $g(x)$ for the on-off policy in the ranges $M<x<K-M$, $K-M<x<K$, and $\pi(K)$ are identical to (\ref{eq:parta_Nakagami2}), (\ref{eq:partb_Nakagami2}), and (\ref{eq:partc_Nakagami2}), respectively. The sole difference is in the definition of $I_1(r)$, namely $I_1(r)=\int_{u=0}^M(K-u)^r\e^{\lambda u}g(u)\dd u$, which would result in a different solution for $\alpha_r$, cf. (\ref{eq:alpha_r_definition}).
Now, since the derivations from (\ref{eq:g_integral_eqn_finite_Nakagami2}) to (\ref{eq:gn_Nakagami_step}) are independent of the value of $\alpha_r$, it follows that $g(x)$ for $M<x<K$ and the atom $\pi(K)$ are identical to those of the best-effort policy. Hence, (\ref{eq:g2_OO}) and (\ref{eq:Pi_K_Nakagami_OO}) are identical to (\ref{eq:pdf_Nakagami_m}) and (\ref{eq:Pi_K_Nakagami}), respectively.
Now, we only have to determine $g_{l-1}(x)\definedas g(x)$ in the range $0<x<M$ and a non-homogeneous system of linear equations for the calculation of $\alpha_r$, $r=0,\ldots,m-1$.  
\subsection{Expressing $g_{l-1}(x)$ in Terms of $\alpha_r$}
Substituting the pdf of the Gamma distributed EH process $f(x)\!=\!\frac{\lambda^m}{\Gamma(m)}x^{m-1}\e^{-\lambda x}$ in (\ref{eq:parta}), we get
\begin{equation}
g_{l-1}(x)=\frac{\lambda^m\e^{-\lambda x}}{(m-1)!}\int\limits_{u=0}^{x}(x-u)^{m-1}\e^{\lambda u}g_{l-1}(u)\dd u+\underbrace{\frac{\lambda^m\e^{-\lambda x}}{(m-1)!}\int\limits_{u=M}^{M+x}(x-u+M)^{m-1}\e^{-\lambda(M-u)}g_{l-2}(u)\dd u}_{T(x)},
\label{eq:g_l_1_OO_step1}
\end{equation}
where $g_{l-2}(x)$ is $g(x)$ in the range $M<x<2M$, which is given by (\ref{eq:g2_OO}) for $n=l-2$. Eq. (\ref{eq:g_l_1_OO_step1}) is a Volterra integral equation of the second kind, whose solution is given by \cite[eq. 2.2-2.31]{polyanin2008handbook}
\begin{equation}
g_{l-1}(x)=T(x)+\underbrace{\int\limits_{u=0}^{x}R(x-u)T(u)\dd u}_{I_R(x)},\vspace{-0.5cm}
\label{eq:g_l_1_OO_step2}
\end{equation}
where 
\begin{equation}
R(x)=\frac{\e^{-\lambda x}}{m}\sum\limits_{k=0}^{m-1}\lambda\e^{\lambda x\cos(\eta_k)}\cos\Big(\eta_k+\lambda x\sin(\eta_k)\Big),
\label{eq:R_x}
\end{equation}
with $\eta_k=\frac{2\pi k}{m}$. Next, we calculate $T(x)$ defined in (\ref{eq:g_l_1_OO_step1}) by substituting $g_{l-2}(u)$ from (\ref{eq:g2_OO}) for $n=l-2$ in $T(x)$ and solving integrals of the form $\int_{u=M}^{M+x}(x+M-u)^{m-1}(qM-K+u)^d \dd u$ once for $d=(q+1)m-r-1$ and once for $d=qm-1$.  This integral can be solved after replacing $(qM-K+u)^d$ by $\sum_{t=0}^d \binom{d}{t}(u-M)^t((q+1)M-K)^{d-t}$ and using $\int_a^b(b-u)^c(u-a)^t\dd u=\frac{c!t!(b-a)^{c+t+1}}{(c+t+1)!}$ with $a<b$ at $a=M$, $b=M+x$, and $c=m-1$. Then, $T(x)$ reduces to
\begin{equation}
\begin{aligned}
T(x)&=\e^{-\lambda x}\sum\limits_{r=0}^{m-1}\frac{\alpha_r}{r!}\Biggg[\sum\limits_{q=0}^{l-2}\lambda^{(q+2)m-r}\e^{-\lambda M(q+1)}\sum\limits_{t=0}^{(q+1)m-r-1}\frac{((q+1)M-K)^{(q+1)m-r-1-t}}{((q+1)m-r-1-t)!(m+t)!}x^{m+t}\\
&-\sum\limits_{q=1}^{l-2}\lambda^{(q+1)m}\e^{-\lambda M(q+1)}\sum\limits_{t=0}^{qm-1}\frac{((q+1)M-K)^{qm-1-t}}{(qm-1-t)!(m+t)!}x^{m+t}\Biggg].\\[-3ex]
\end{aligned}
\label{eq:T_x}
\end{equation}
Next, we determine $I_R(x)$ defined in (\ref{eq:g_l_1_OO_step2}). Using $R(x)$ in (\ref{eq:R_x}) and and $T(x)$ in (\ref{eq:T_x}), $I_R(x)$ is
\begin{equation}
\begin{aligned}
I_R(x)&=\e^{-\lambda x}\sum\limits_{r=0}^{m-1}\frac{\alpha_r}{r!}\Biggg[\sum\limits_{q=0}^{l-2}\lambda^{(q+2)m-r}\e^{-\lambda M(q+1)}\sum\limits_{t=0}^{(q+1)m-r-1}\frac{((q+1)M-K)^{(q+1)m-r-1-t}}{((q+1)m-r-1-t)!(m+t)!} I_t(x)\\
&-\sum\limits_{q=1}^{l-2}\lambda^{(q+1)m}\e^{-\lambda M(q+1)}\sum\limits_{t=0}^{qm-1}\frac{((q+1)M-K)^{qm-1-t}}{(qm-1-t)!(m+t)!}I_t(x)\Biggg],
\end{aligned}
\label{eq:IR_x}
\end{equation}
where $I_t(x)$ is given by
\begin{equation}
I_t(x)=\frac{1}{m}\sum\limits_{k=0}^{m-1}\lambda\e^{\lambda x\cos(\eta_k)}\Biggg[\cos(\mu_{k,x})I_{\cos}(x)
+\sin(\mu_{k,x})I_{\sin}(x)\Biggg],
\label{eq:It_x_step1}
\end{equation}
with $\mu_{k,x}=\eta_k+\lambda x\sin(\eta_k)$, $I_{\sin}(x)=\int_{u=0}^{x}u^{m+t}\e^{-\lambda\cos(\eta_k)u}\sin\Big(\lambda\sin(\eta_k)u\Big)\dd u$ and \\$I_{\cos}(x)=\int_{u=0}^{x}u^{m+t}\e^{-\lambda\cos(\eta_k)u}\cos\Big(\lambda\sin(\eta_k)u\Big)\dd u$. 
$I_{\cos}(x)$ and $I_{\sin}(x)$ are solved using \cite[eq. 3.944.3]{table_of_integrals_Ryzhik} and \cite[eq. 3.944.1]{table_of_integrals_Ryzhik}, respectiely as $I_{\cos}(x)=\Re\{I_{\text{complex}}\}$ and $I_{\sin}(x)=-\Im\{I_{\text{complex}}\}$, where $I_{\text{complex}}=\Big(\lambda \e^{\J\eta_k}\Big)^{-(m+t+1)}\gamma\Big(m+t+1,\lambda x\e^{\J\eta_k}\Big)$. Hence, $I_t(x)$ reduces to
\begin{equation}
I_t(x)=\frac{1}{m}\sum\limits_{k=0}^{m-1}\lambda\e^{\lambda x\cos(\eta_k)}\Re\{\e^{\J\mu_{k,x}}I_{\text{complex}}\}
=\frac{1}{m}\sum\limits_{k=0}^{m-1}\Re\Big\{(\lambda\e^{\J\eta_k})^{-(m+t)}\e^{\lambda x \e^{\J\eta_k}}\gamma\Big(m+t+1,\lambda x \e^{\J\eta_k}\Big)\Big\}.
\label{eq:It_x}
\end{equation}
Substituting $I_{t}(x)$ from (\ref{eq:It_x}) in (\ref{eq:IR_x}) and adding $T(x)$ in (\ref{eq:T_x}) to $I_R(x)$ in (\ref{eq:IR_x}), then $g_{l-1}(x)$ in (\ref{eq:g_l_1_OO_step2}) reduces to (\ref{eq:g1_OO}).\\

\subsection{Defining a Non-Homogeneous System of Linear Equations in $\alpha_s$}
Now, only the unknown coefficients $\alpha_s$, $s=0,\ldots,m-1$, remain to be determined. Similar to the best-effort policy, $\alpha_s$ satisfies (\ref{eq:alpha_r_definition}) where only $I_1(s)$ has a different definition. The remaining terms in $\alpha_s$ can be obtained using the analysis in (\ref{eq:I2_expansion1})-(\ref{eq:I2_expansion3}) with the only difference that in the on-off policy we assume $K=lM$ with integer $l$. Hence in (\ref{eq:I2_expansion1}), the last integral is on $g_{l-2}(x)$ over the limits $M$ to $K-(l-2)M$. Therefore, $I_2(s)$ satisfies (\ref{eq:I2_expansion3}) with the inner summation limits up to $q=l-2$ instead of $q=l-1$. With these considerations, the last two terms inside the brackets of (\ref{eq:alpha_r_definition}) are given by $I_2(s)+M^s\e^{-\lambda M}\sum\limits_{r=0}^{m-1}\frac{\alpha_r}{r!}=$
\begin{equation}
-\!\!\sum\limits_{r=0}^{m-1}\!\!\frac{K^s\alpha_r }{r!}\!\sum\limits_{q=0}^{l-2}\!\!\lambda^{qm}\e^{-\lambda M (q+1)}\sum\limits_{t=0}^s\!\!\binom{s}{t}\!K^{-t}t!\Big\{\lambda^{m-r}\frac{((q\!+\!1)M\!-\!K)^{(q+1)m-r+t}}{((q+1)m-r+t)!}- \frac{((q\!+\!1)M\!-\!K)^{qm+t}}{(qm+t)!}\Big\}.
\label{eq:alpha_last2terms}
\end{equation}
Next, we obtain $I_1(s)$ defined at the beginning of this appendix as $I_1(s)=\int_{u=0}^{M}(K-u)^s\e^{\lambda u}g(u)\dd u$, where in the range $0<u<M$, $g(u)=g_{l-1}(u)$ given in (\ref{eq:g1_OO}). Substituting $g_{l-1}(u)$ in $I_1(s)$, we get
\begin{equation}
\begin{aligned}
I_1(s)=&\sum\limits_{r=0}^{m-1}\frac{\alpha_r}{r!}\Biggg[\sum\limits_{q=0}^{l-2}\lambda^{(q+2)m-r}\e^{-\lambda M(q+1)}\sum\limits_{t=0}^{(q+1)m-r-1}\frac{((q+1)M-K)^{(q+1)m-r-1-t}}{((q+1)m-r-1-t)!(m+t)!}F'(t,s)\\
&-\sum\limits_{q=1}^{l-2}\lambda^{(q+1)m}\e^{-\lambda M(q+1)}\sum\limits_{t=0}^{qm-1}\frac{((q+1)M-K)^{qm-1-t}}{(qm-1-t)!(m+t)!}F'(t,s)\Biggg],\end{aligned}
\label{eq:I_1_s_step1}\vspace{-0.4cm}
\end{equation}
where 
\begin{equation}
F'(t,s)=\int\limits_{u=0}^{M}(K-u)^s C(u,t)\dd u=\sum\limits_{b=0}^{s}\binom{s}{b}(K-M)^{s-b}\underbrace{\int\limits_{u=0}^M(M-u)^b C(u,t)\dd u}_{I_C(t,b)}.
\label{eq:F_dash_t_s}
\end{equation}
To solve $I_C(t,b)$, we first expand the Gamma function in (\ref{eq:C_x_t}) and replace $\e^{\lambda\e^{\J\eta_k} u}\gamma(m+t+1,\lambda\e^{\J\eta_k}u)$ by $(m+t)!\Big[\e^{\lambda\e^{\J\eta_k}u}-\sum_{w=0}^{m+t}\frac{(\lambda\e^{\J\eta_k} u)^w}{w!}\Big]$. Then, we solve integrals of the form $\int_{u=0}^M(M-u)^b u^n \dd u=\frac{M^{n+b+1}b!n!}{(n+b+1)!}$ for $n=m+t$ and $n=w$, and use $\int_{u=0}^{M}(M-u)^b\e^{\lambda\e^{\J\eta_k} u}\dd u=\e^{\lambda M\e^{\J\eta_k}}(\lambda\e^{\J\eta_k})^{-b-1}\gamma(b+1,\lambda M\e^{\J\eta_k})$. Hence, $F(t,s)\definedas F'(t,s)/(m+t)!$ reduces to (\ref{eq:F_t_s}).

Using (\ref{eq:alpha_last2terms}) and $I_1(s)$ in (\ref{eq:I_1_s_step1}), then $d_{sr}$  in (\ref{eq:alpha_r_definition}) reduces to
\begin{equation}
\begin{aligned}
d_{sr}\!&=\!-\frac{\lambda^sK^s }{r!}\!\!\sum\limits_{q=0}^{l-2}\!\!\lambda^{qm}\e^{-\lambda M (q+1)}\!\!\sum\limits_{t=0}^s\!\!\binom{s}{t}\!K^{-t}t!\Big\{\!\lambda^{m-r}\frac{((q\!+\!1)M\!-\!K)^{(q+1)m-r+t}}{((q+1)m-r+t)!}\!-\! \frac{((q\!+\!1)M\!-\!K)^{qm+t}}{(qm+t)!}\!\Big\}\\
&+\frac{\lambda^s}{r!}\Biggg[\sum\limits_{q=0}^{l-2}\lambda^{(q+2)m-r}\e^{-\lambda M(q+1)}\sum\limits_{t=0}^{(q+1)m-r-1}\frac{((q+1)M-K)^{(q+1)m-r-1-t}}{((q+1)m-r-1-t)!}F(t,s)\\
&-\sum\limits_{q=1}^{l-2}\lambda^{(q+1)m}\e^{-\lambda M(q+1)}\sum\limits_{t=0}^{qm-1}\frac{((q+1)M-K)^{qm-1-t}}{(qm-1-t)!}F(t,s)\Biggg].
\end{aligned}
\label{eq:dsr_appendix}
\end{equation}
In order to obtain the unique solution for $\alpha_r$, $r=0,\ldots,m-1$, we additionally use the unit area condition $\int_{u=0}^K g(u)\dd u + \pi(K)=1$, which can be written as
\begin{equation}
\underbrace{\int\limits_{u=0}^M g_{l-1}(u)\dd u}_{I_{\text{U1}}} +\underbrace{\int\limits_{u=M}^K g(u)\dd u + \pi(K)}_{I_{\text{U2}}}\definedas \sum\limits_{r=0}^{m-1}\alpha_r a_{r}=1.
\label{eq:unit_area_OO}
\end{equation}
$I_{\text{U2}}$ can be obtained in a similar manner as  (\ref{eq:area_under_g_step4}). Specifically, since the pdf $g(x)$ for the on-off policy in the range $M<x<K$ in (\ref{eq:g2_OO}) is identical to the pdf of the best-effort policy in (\ref{eq:pdf_Nakagami_m}). The integral $\int_{u=M}^K g(u)\dd u$ is identical to that in (\ref{eq:area_under_g}) with the last term replaced by $\int_{u=M}^{K-(l-2)M}g_{l-2}(u)\dd u$. Consequently,  $\int_{u=M}^K g(u)\dd u$ is the  same as (\ref{eq:area_under_g_step2}) after replacing the upper limit of the inner summations by $q=l-2$ and the inner lower integral limit by $M$. Thus, the inner integrals have the form $\int_{u=M}^{b}\e^{-\lambda u}(u-b)^c/c!\dd u=-\lambda^{-c-1}\Big[\e^{-\lambda b}-\e^{-\lambda M}\sum_{t=0}^{c}(\lambda(M-b))^t/t!\Big]$, at $b=K-qM$ and at $c=(q+1)m-r-1$ in the first term and $c=qm-1$ in the second term. These considerations lead to 
\begin{equation}
\begin{aligned}
\int\limits_{u=M}^K g(u)\dd u&=\sum\limits_{r=0}^{m-1}\frac{\alpha_r}{r!}\Biggg[\sum\limits_{q=0}^{l-2}\Bigg( \e^{-\lambda M(q+1)}\sum\limits_{t=0}^{(q+1)m-r-1}\frac{\Big(\lambda((q+1)M-K)\Big)^t}{t!}-\e^{-\lambda K}\Bigg)\\
 &+\sum\limits_{q=1}^{l-2}\Bigg( \e^{-\lambda K}-\e^{-\lambda M(q+1)}\sum\limits_{t=0}^{qm-1}\frac{\Big(\lambda((q+1)M-K)\Big)^t}{t!}\Bigg)\Biggg].
\end{aligned}
\label{eq:area_g_K_M}
\end{equation}
Separating $q=0$ in the first term in (\ref{eq:area_g_K_M}), we get \small
\begin{equation}
\int\limits_{u=M}^K g(u)\dd u=\sum\limits_{r=0}^{m-1}\frac{\alpha_r}{r!}\Biggg[ \e^{-\lambda M}\sum\limits_{t=0}^{m-r-1}\frac{\Big(\lambda(M-K)\Big)^t}{t!}-\e^{-\lambda
 K}
 +\sum\limits_{q=1}^{l-2} \e^{-\lambda
 M(q+1)}\sum\limits_{t=qm}^{(q+1)m-r-1}\frac{\Big(\lambda((q+1)M-K)\Big)^t}{t!}\Biggg].
\label{eq:area_g_K_M2}
\end{equation}
\normalsize
From (\ref{eq:Pi_K_Nakagami_OO}), the first term in (\ref{eq:area_g_K_M2}) is $-\pi(K)$. The second term in (\ref{eq:area_g_K_M2}) is equivalent to the summand in the third term at $q=0$. Hence, we can write
\begin{equation}
I_{\text{U2}}=\int\limits_{u=M}^K g(u)\dd u + \pi(K)=\sum\limits_{r=0}^{m-1}\frac{\alpha_r}{r!}\sum\limits_{q=0}^{l-2}\e^{-\lambda M(q+1)}\sum\limits_{t=qm}^{(q+1)m-r-1}\frac{\lambda^t((q+1)M-K)^{t}}{t!}.
\label{eq:area_g_K_M3}   
\end{equation}
Next, we solve $I_{\text{U1}}=\int_{u=0}^M g_{l-1}(u)\dd u$. Using $g_{l-1}(u)$ in (\ref{eq:g1_OO}), $I_{\text{U1}}$ is identical to $I_1(s)$ in (\ref{eq:I_1_s_step1}) after replacing $F'(t,s)$ by $H'(t)\definedas\int_{u=0}^{M}\e^{-\lambda u} C(u,t)\dd u$. Similar to the analysis used to obtain $I_C(t,b)$ in (\ref{eq:F_dash_t_s}), we first expand the Gamma function in (\ref{eq:C_x_t}) and replace $\e^{\lambda\e^{\J\eta_k} u}\gamma(m+t+1,\lambda\e^{\J\eta_k}u)$ by $(m+t)!\Big[\e^{\lambda\e^{\J\eta_k}u}-\sum_{w=0}^{m+t}\frac{(\lambda\e^{\J\eta_k} u)^w}{w!}\Big]$. We then solve integrals of the form $\int_{u=0}^{M}\e^{-\lambda u} u^n=\lambda^{-n-1}\gamma(n+1,\lambda M)$ for $n=m+t$ and $n=w$. Moreover, we use $\int_{u=0}^M \e^{-\lambda u (1-\e^{\J\eta_k})}\dd u=[1-\e^{-\lambda M (1-\e^{\J\eta_k})}]/[\lambda(1-\e^{\J\eta_k})]$ for $k=0$ and $\int_{u=0}^M \e^{-\lambda u (1-\e^{\J\eta_k})}\dd u=M$ for $k\neq 0$. With these considerations, $H(t)\definedas H'(t)/(m+t)!$ reduces to (\ref{eq:H_T}). Having obtained $I_{\text{U1}}$ and $I_{\text{U2}}$, $a_r$ in (\ref{eq:unit_area_OO}) can be concluded. Adding (\ref{eq:unit_area_OO}) to (\ref{eq:alpha_r_definition}), we get $\alpha_s\!+\!\sum\limits_{r=0}^{m-1}\alpha_r(a_r\!-\!d_{sr})\!=\!1$, which forms a non-homogeneous system of linear equations that can be written in matrix-vector notation as $\V{\alpha}+\V{A}\V{\alpha}=\V{1}$, where $\V{\alpha}=[\alpha_0,\ldots,\alpha_{m-1}]^T$, and $\V{A}$ is an $m\times m$  matrix whose entry in the $s^{\text{th}}$  row and the $r^{\text{th}}$ column is $A_{sr}\!=\!a_r\!-\!d_{sr}$. This completes the proof. 

\else
\vspace{-0.4cm}\section{Proof of Corollary \ref{coro:limiting_dist_Finite_Nakagami_exact_OO} (On-Off Policy with Finite-Size Buffer)}
\label{app:limiting_dist_Finite_Nakagami_exact_OO_short}
We first derive $g(x)$ for $M<x<K$ in (\ref{eq:g2_OO}) and $\pi(K)$
in (\ref{eq:Pi_K_Nakagami_OO}). Notice that the integral equations of the on-off policy in (\ref{eq:partb}), (\ref{eq:partc}), and (\ref{eq:partd}) differ from those of the best-effort policy in (\ref{eq:parta_BE}), (\ref{eq:partb_BE}), and (\ref{eq:partc_BE}), respectively, only in the first term. Hence, when substituting the pdf and the ccdf of the Gamma distributed EH process in (\ref{eq:partb})-(\ref{eq:partd}), we get (\ref{eq:parta_Nakagami})-(\ref{eq:partc_Nakagami}) with the difference that the first term in the brackets in  (\ref{eq:parta_Nakagami}) and (\ref{eq:partb_Nakagami}) has to be replaced by $\int_{u=0}^M(x-u)^{m-1}\e^{\lambda u}g(u)\dd u$, and the first term in the brackets in (\ref{eq:partc_Nakagami}) has to be replaced by $\sum_{r=0}^{m-1}\frac{\lambda^r}{r!}\int_{u=0}^{M}\e^{\lambda u} (K-u)^r g(u) \dd u$. Due to this similarity, it follows that the pdf $g(x)$ for the on-off policy in the ranges $M<x<K-M$, $K-M<x<K$, and $\pi(K)$ are identical to (\ref{eq:parta_Nakagami2}), (\ref{eq:partb_Nakagami2}), and (\ref{eq:partc_Nakagami2}), respectively. The sole difference is in the definition of $I_1(r)$, namely $I_1(r)=\int_{u=0}^M(k-u)^r\e^{\lambda u}g(u)\dd u$, which results in a different solution for $\alpha_r$,  cf. (\ref{eq:alpha_r_definition}).
Now, since the derivations from (\ref{eq:g_integral_eqn_finite_Nakagami2}) to (\ref{eq:gn_Nakagami_step}) are independent of the value of $\alpha_r$, it follows that $g(x)$ for $M<x<K$ and the atom $\pi(K)$ are identical to those of the best-effort policy. Hence, (\ref{eq:g2_OO}) and (\ref{eq:Pi_K_Nakagami_OO}) are identical to (\ref{eq:pdf_Nakagami_m}) and (\ref{eq:Pi_K_Nakagami}), respectively.

Now, we only have to determine $g_{l-1}(x)\definedas g(x)$ in the range $0<x<M$ and $\alpha_r$, $r=0,\ldots,m-1$. Function $g_{l-1}(x)$ can be obtained from (\ref{eq:parta}) by solving a Volterra integral equation of the second kind, similar to the analysis used to obtain $g_1(x)$ for an infinite-size buffer with the on-off policy, cf. Appendix \ref{app:stationary_dist_infinite_Nakagami_OO_shortened}. The set of linear equations for $\alpha_r$ can be obtained in a similar manner as for the best-effort policy for a finite-size energy buffer, cf. Appendix \ref{app:limiting_dist_Finite_Nakagami_exact_BE}. In particular, the unit area condition is added to (\ref{eq:alpha_r_definition}) to form a non-homogeneous system of equations in $\alpha_r$. With the different definition of $I_1(s)$, a different  $g(x)$ for $0<x<M$, and the assumption $K=lM$ for integer $l$, the system of equations for $\alpha_r$ will differ accordingly. For a detailed proof, see \cite[Appendix L]{Morsi_storage_arxiv}.

\fi

\vspace{-0.4cm}\bibliography{abbreviations,references}
\bibliographystyle{IEEEtran}

\end{document}